\newtheorem{theorem}{Theorem}[section]
\newtheorem{lemma}{Lemma}[section]%
\newtheorem{assumption}{Assumption}[section]
\theoremstyle{definition}%
\newtheorem{example}{Example}[section]
\newtheorem{remark}{Remark}[section]
\DeclareMathOperator{\cv}{cv}  
\DeclareMathOperator{\maxbias}{\overline{bias}} 
\DeclareMathOperator{\minbias}{\underline{bias}} 
\DeclareMathOperator*{\argmin}{argmin} 
\DeclareMathOperator{\sign}{sign} 
\DeclareMathOperator{\se}{se}
\newcommand{\indist}{\stackrel{d}{\to}} 
\newcommand{\inprob}{\stackrel{p}{\to}} 
\newcommand{\Cglob}{{\widetilde{\mathcal{C}}}}
\newcommand{\cglob}{{\tilde{c}}}
\DeclarePairedDelimiter\abs{\lvert}{\rvert} 
\DeclarePairedDelimiter\norm{\lVert}{\rVert} 
\DeclarePairedDelimiter\hor{[}{)} 
\DeclarePairedDelimiter\hol{(}{]} 
\DeclarePairedDelimiter\indicatorfence{\{}{\}}
\newcommand\1{\operatorname{\mathbb{I}}\indicatorfence}
\newcommand{\tSigma}{S} %
\newcommand{\tlambda}{\lambda} %
\newcommand{\tGamma}{G} %
\newcommand{\tk}{\kappa} %
\newcommand{\Mbound}{M} %
\definecolor{webbrown}{rgb}{.6,0,0}%
\Crefname{equation}{Eq.}{Eqs.}
\crefname{assumption}{assumption}{assumptions}
\crefname{remark}{remark}{remarks}
\crefname{proposition}{proposition}{propositions}
\crefname{sappsec}{Supplemental Appendix}{Supplemental Appendices}
\crefname{sappsubsec}{Supplemental Appendix}{Supplemental Appendices}
\crefname{sappsubsubsec}{Supplemental Appendix}{Supplemental Appendices}
\crefname{appsec}{Appendix}{Appendices}
\title{Sensitivity Analysis using Approximate Moment Condition Models\thanks{We
    thank Isaiah Andrews, Gary Chamberlain, Tim Christensen, Mikkel
    Plagborg-Møller, Jesse Shapiro, Martin Weidner and participants at several
    conferences and seminars for helpful comments and suggestions, and Soonwoo
    Kwon for research assistance. All remaining errors are our own. Armstrong
    acknowledges support by National Science Foundation Grant SES-1628939.
    Kolesár acknowledges support by the Sloan Research Fellowship, and by the
    National Science Foundation Grant SES-1628878.}}
\author{Timothy B. Armstrong\thanks{email: \texttt{timothy.armstrong@yale.edu}}\\
  Yale University \and
  Michal Kolesár\thanks{email: \texttt{mkolesar@princeton.edu}}\\
  Princeton University}%
\date{\today}
\begin{document}

\maketitle

\begin{abstract}
  We consider inference in models defined by approximate moment conditions. We
  show that near-optimal confidence intervals (CIs) can be formed by taking a
  generalized method of moments (GMM) estimator, and adding and subtracting the
  standard error times a critical value that takes into account the potential
  bias from misspecification of the moment conditions. In order to optimize
  performance under potential misspecification, the weighting matrix for this
  GMM estimator takes into account this potential bias, and therefore differs
  from the one that is optimal under correct specification. To formally show the
  near-optimality of these CIs, we develop asymptotic efficiency bounds for
  inference in the locally misspecified GMM setting. These bounds may be of
  independent interest, due to their implications for the possibility of using
  moment selection procedures when conducting inference in moment condition
  models. We apply our methods in an empirical application to automobile demand,
  and show that adjusting the weighting matrix can shrink the CIs by a factor of
  3 or more.
\end{abstract}

\clearpage

\section{Introduction}\label{introduction_sec}

Economic models are typically viewed as approximations of reality. However,
conventional approaches to estimation and inference assume that a model holds
exactly. In this paper, we weaken this assumption, and consider inference in a
class of models characterized by moment conditions which are only required to
hold in an approximate sense. The failure of the moment conditions to hold
exactly may come from failure of exclusion restrictions (e.g.\ through omitted
variable bias or because instruments enter the structural equation directly in
an IV model), functional form misspecification, or other sources such as
measurement error, or data contamination.

We assume that we have a model characterized by a set of
population moment conditions $g(\theta)$. In the generalized method of moments
(GMM) framework, for instance, $g(\theta)=E[g(w_{i}, \theta)]$, which can be
estimated by the sample analog $\frac{1}{n}\sum_{i=1}^{n}g(w_{i}, \theta)$, based
on the sample $\{w_{i}\}_{i=1}^{n}$. When evaluated at the true parameter
value $\theta_{0}$, the population moment condition lies in a known set
specified by the researcher,
\begin{equation*}
  g(\theta_{0})=c/\sqrt{n}, \qquad c\in\mathcal{C}.
\end{equation*}
The set $\mathcal{C}$ formalizes the way in which the moment conditions may
fail, and it can then be varied as a form of sensitivity analysis, with
$\mathcal{C}=\{0\}$ reducing to the correctly specified case.

We focus on local misspecification: the scaling by $\sqrt{n}$ implies that the
specification error and the sampling error are of the same order of magnitude,
and it ensures that the asymptotic approximation captures the fact that it may
not be clear from the sample at hand whether the model is correctly specified.
It also leads to increased tractability, allowing us to deliver a simple method
for inference on a structural parameter of interest $h(\theta_{0})$, rather than
a pseudo-true parameter. This tractability has made local misspecification a
popular tool for sensitivity analysis in applied work, especially following the
recent influential paper by
\citet{andrews_measuring_2017}.\footnote{\label{fn:local-misspect-empirical}For
  recent empirical examples using local sensitivity analysis, see
  \citet{gayle2019}, or \citet{duflo2018}.} As with any asymptotic device, our
modeling of misspecification as local should not be taken to mean that we
literally believe that the model would be closer to correct if we had more data.
Rather, its usefulness should be judged by whether it yields accurate
approximations to the finite-sample behavior of estimators and confidence
intervals, which in our case requires that the set $\mathcal{C}/\sqrt{n}$ be
small relative to sampling uncertainty.

We propose a simple method for constructing asymptotically valid confidence
intervals (CIs) under this setup: one takes a standard estimator, such as the
GMM estimator, and adds and subtracts its standard error times a critical value
that takes into account the potential asymptotic bias of the estimator, in
addition to its variance. A key insight of this paper is that because the CIs
must be widened to take into account the potential bias, the optimal weighting
matrix for the correctly specified case (the inverse of the variance matrix of
the moments) is generally no longer optimal under local misspecification.
Rather, the optimal weighting matrix takes into account potential
misspecification in the moments in addition to the variance of their estimates:
it places less weight on moments that are allowed to be further from zero
according the researcher's specification of the set $\mathcal{C}$. We also show
that an analogous result holds for other performance criteria, such as
estimation under the mean-squared error: the optimal weighting matrix again
trades off the potential misspecification of the moments against their
precision, although the optimal tradeoff is different.

To illustrate the practical importance of this result, we apply our methods to
form misspecification-robust CIs in an empirical model of automobile demand
based on \citet{blp95}. We consider sets $\mathcal{C}$ motivated by the forms of
local misspecification considered in \citet{andrews_measuring_2017}, who
calculate the asymptotic bias of the usual GMM estimator in this model. We find
that adjusting the weighting matrix to account for potential misspecification
substantially reduces the potential bias of the estimator and, as a result,
leads to large efficiency improvements of the optimal CI relative to a CI based
on the GMM estimator that is optimal under correct specification: it shrinks the
CI by up to a factor of $3$ or more in our main specifications. As a result, we
obtain informative CIs in this model even under moderate amounts of
misspecification.

We show that the CIs we propose are near-optimal when the set $\mathcal{C}$ is
convex and centrosymmetric ($c\in\mathcal{C}$ implies $-c\in\mathcal{C}$). To
this end, we argue that the relevant ``limiting experiment'' for the locally
misspecified GMM model is isomorphic to an approximately linear model of
\citet{SaYl78}, which falls under a general framework studied by, among others,
\citet{donoho94}, \citet{CaLo04} and \citet{ArKo18optimal}. We derive asymptotic
efficiency bounds for CIs in the locally misspecified GMM model that formally
translate bounds from the approximately linear limiting experiment to the
locally misspecified GMM setting. In particular, these bounds imply that the
scope for improvement over our CIs by optimizing expected length at a particular
value of $\theta_{0}$ and $c=0$ (while still maintaining coverage over the whole
parameter space for $\theta$ and $\mathcal{C}$) is limited, even if one
optimizes expected length at the true values of $\theta_{0}$ and $c$.

These efficiency bounds address an important criticism of our CIs: they require
a priori specification of the set $\mathcal{C}$ that defines misspecification,
including both the \emph{magnitude} of misspecification and \emph{which} moments
are misspecified. In particular, one cannot substantively improve upon our CI
by, say, trying to use data-driven methods that gauge misspecification magnitude
or try to determine which moments are misspecified. These bounds have
implications for procedures proposed by \citet{andrews_hybrid_2009},
\citet{ditraglia_using_2016} and \citet{mccloskey_2017}, who consider the case
where some moments are known to be correctly specified and no a priori bound is
placed on the magnitude of misspecification of the remaining moments. As we
discuss in \Cref{sec:some-valid-some}, in this case our CI reduces to the usual
CI based on the $k_1$ correctly specified moments, and our efficiency bounds
show that CIs proposed in these papers cannot substantively improve upon it.

Because we cannot use the data to determine the magnitude $\Mbound$ of the set
$\mathcal{C}$, we recommend plotting our CIs as a function of the potential
misspecification magnitude $\Mbound$, or reporting the smallest value of
$\Mbound$ for which a particular finding breaks down. Such sensitivity analysis
is easy to conduct under out proposed implementation. In particular, we show
that, when the set $\mathcal{C}$ is characterized by $\ell_{p}$ constraints, the
class of weightings that trace out the optimal bias-variance tradeoff as a
function of how much relative weight we put on the bias can be easily computed
by recasting the problem as a penalized regression problem. By exploiting this
analogy, we develop a simple algorithm for computing this class under
$\ell_{\infty}$ constraints that is similar to the LASSO/LAR algorithm
\citep{ehjt04,rosset_piecewise_2007}; under $\ell_{2}$ constraints, the solution
admits a closed form.\footnote{An R package implementing our CIs under
  $\ell_{p}$ constraints is available at
  \url{https://github.com/kolesarm/GMMSensitivity}.} Furthermore, as we discuss
in \Cref{sec:implementation}, this class of weightings is entirely determined by
the shape of $\mathcal{C}$; its magnitude $M$ only determines the optimal
relative weight we should put on the bias. Thus, tracing out the optimal
weighting as a function of $\Mbound$ can be done at essentially no additional
computational cost. Furthermore, to avoid having to reoptimize the objective
function with respect to the new weighting matrix, one can also form the CIs by
adding and subtracting our critical value from a one-step estimator
\citep[see][Section 3.4]{newey_large_1994} based on any initial estimate that is
$\sqrt{n}$-consistent under correct specification. We illustrate this approach
in our empirical application in \Cref{empirical_sec}.

Our paper is related to several strands of literature. Our efficiency results
are related to those in \citet{chamberlain_asymptotic_1987} for point estimation
in the correctly specified setting (see also \citealp{hansen85}) and, more broadly,
semiparametric efficiency theory in correctly specified settings \citep[see,
e.g., Chapter 25 in][]{van_der_vaart_asymptotic_1998}. As we discuss in
\Cref{efficiency_examples_sec}, some of our efficiency results are novel
even in the correctly specified case, and may be of independent interest.
\citet{kitamura_robustness_2013} consider efficiency of point estimators
satisfying certain regularity conditions when the misspecification is bounded by
the Hellinger distance. As we discuss in more detail in
\Cref{sec:corr-spec-cress}, our results imply that under this form of
misspecification, the optimal weighting matrix remains the same as under correct
specification; both the usual GMM estimator and the estimator proposed by
\citet{kitamura_robustness_2013} can thus be used to form near-optimal CIs, and
both estimators have the same local asymptotic minimax properties.

Local misspecification has been used in a number of papers, which include, among
others, \citet{newey_generalized_1985}, \citet{berkowitz_validity_2012},
\citet{conley_plausibly_2012}, \citet{guggenberger_asymptotic_2012}, and
\citet{bugni_inference_2018}, and has antecedents in the literature on robust
statistics \citep[see][and references therein]{huber_robust_2011}.
\citet{andrews_measuring_2017} consider this setting and note that asymptotic
bias of a regular estimator can be calculated using influence function weights,
which they call the sensitivity, and show how such calculations can be used for
sensitivity analysis in applications (see also extensions of these ideas in
\citealt{andrews_informativeness_2018} and \citealt{mukhin_sensitivity_2018}).
Our results imply that, if one is interested in inference, conclusions of such
sensitivity analysis may be substantially sharpened by using the
misspecification-optimal weighting matrix, or, equivalently, the
misspecification-optimal sensitivity. In independent work,
\citet{bonhomme_minimizing_2018} provide a framework for estimation and
inference in misspecified likelihood models when the misspecification set
$\mathcal{C}$ is defined with respect to a larger class of models using
statistical notions of distance. Our focus is on overidentified moment condition
models, as in \citet{andrews_measuring_2017}, and we are agnostic about how
$\mathcal{C}$ is determined. The proposal to use estimators that optimize an
asymptotic bias-variance tradeoff using the influence function is common to both
papers. The efficiency bounds in \Cref{efficiency_sec} are unique to the present
paper.

The rest of this paper is organized as follows. \Cref{general_results_sec}
presents our misspecification robust CIs. \Cref{sec:implementation} gives
step-by-step instructions for computing our CIs, along with discussion of other
practical issues. \Cref{efficiency_sec} presents efficiency bounds for CIs in
locally misspecified models; it can be skipped by readers interested only in
implementing the methods. \Cref{applications_sec} discusses applications to
particular moment condition models. \Cref{empirical_sec} presents an empirical
application. Additional results and proofs are collected in appendices and an
online supplement.

\section{Misspecification-robust CIs}\label{general_results_sec}

We have a model that maps a vector of parameters
$\theta\in\Theta\subseteq \mathbb{R}^{d_{\theta}}$ to a $d_{g}$-dimensional
population moment condition $g(\theta)$ that restricts the distribution of the
observed data $\{w_{i}\}_{i=1}^{n}$. We allow the moment condition model to be
locally misspecified, so that at the true value $\theta_{0}$, the population
moment condition is not necessarily zero, but instead lies in a
$\sqrt{n}$-neighborhood of $0$:
\begin{equation}\label{eq:local-misspecification}
  g(\theta_{0})=c/\sqrt{n}, \qquad c\in\mathcal{C},
\end{equation}
where $\mathcal{C}\subseteq\mathbb{R}^{d_{g}}$ is a known set. The set
$\mathcal{C}$ may allow for misspecification in potentially all moment
conditions; we do not require that some elements of $c$ are zero. Our goal is to
construct a CI for a scalar $h(\theta_0)$, where
$h\colon \mathbb{R}^{d_\theta}\to\mathbb{R}$ is a known function. For example,
if we are interested in one of the elements $\theta_{j}$ of $\theta$, we would
take $h(\theta)=\theta_{j}$. More generally, the function $h$ will be nonlinear,
as is, for example, generally the case when $\theta$ is a vector of supply or
demand parameters, and $h(\theta)$ is an elasticity, or some counterfactual.

This setup allows (but does not require) both $\theta_{0}$ and $h(\theta_{0})$
to have the same interpretation as in the correctly specified case, so that our
CIs may still be interpreted as CIs for the structural parameter, elasticity, or
counterfactual of interest. For this interpretation, one typically needs to rule
out forms of misspecification that affect the mapping $\theta\mapsto h(\theta)$.
While we do not formally consider cases in which this mapping itself is
misspecified, such cases are covered under a mild generalization of our
framework, in which $h$ is a function of both $\theta$ and $c$.

Note that the interpretation of $h(\theta_{0})$, and the conceptual framework
defining $\theta_{0}$ is not affected by our modeling of the misspecification as
local: given a set $\Cglob_{n}=\mathcal{C}/\sqrt{n}$, the moment conditions
$g(\theta_{0})\in\Cglob_{n}$ describe the restrictions that the data generating
process and the researcher's modeling assumptions place on
$\theta_0$.\footnote{Formally, for a given sample size $n$, $\theta_{0}$ may be
  set identified, and the identified set under a distribution $P$ is defined as
  the set of parameters $\theta_0$ that satisfy the moment conditions
  $E_{P}g(w_i, \theta_0)\in \Cglob_{n}$ where $E_P$ denotes expectation under
  $P$. We construct CIs that cover $h(\theta_0)$ for points $\theta_0$ in the
  identified set \citep[see][for a discussion of this notion of coverage]{im04}.
  See \Cref{efficiency_sec} and \Cref{efficiency_sec_append} for formal
  definitions of coverage and optimality of our CIs.} The plausibility of these
restrictions is evaluated for a given sample size at hand; it doesn't depend on
assumptions about how $\Cglob_{n}$ changes with $n$. While we focus on sequences
$\Cglob_n=\mathcal{C}/\sqrt{n}$, we discuss in \Cref{remark:global
  misspecification} how our insights can be used to construct CIs that are valid
global misspecification, when $\Cglob_{n}$ is fixed with $n$.

To formalize the notion of asymptotic validity and efficiency of CIs, we will
need to allow the true parameter value $\theta_0$ as well as the vector $c$ and
the data generating process (and hence the map $\theta\mapsto g(\theta)$) to
vary with the sample size. For clarity of exposition, we focus here on the case
in which these parameters are fixed. See
\Cref{main_text_efficiency_bound_thm} and
\Cref{efficiency_sec_append} for the general case. Under some forms of
misspecification, such as functional form misspecification, there may be
additional higher-order terms on the right-hand side
of~\eqref{eq:local-misspecification}; our results remain unchanged if this is
the case. Again, for clarity of exposition, we focus on the case in
which~\eqref{eq:local-misspecification} holds exactly.

We assume that the sample moment condition $\hat{g}(\theta)$, constructed using
the data $\{w_{i}\}_{i=1}^{n}$, satisfies
\begin{equation}\label{g_clt_assump}
  \sqrt{n}(\hat g(\theta_{0})-g(\theta_{0}))\indist \mathcal{N}(0,\Sigma),
\end{equation}
where $\indist$ denotes convergence in distribution as $n\to\infty$. In the GMM
model, the population and sample moment conditions are given by
$g(\theta)=E[g(w_{i}, \theta)]$ and
$\hat{g}(\theta)=\frac{1}{n}\sum_{i=1}^{n}g(w_{i}, \theta)$, respectively, where
$g(\cdot, \cdot)$ is a known function. However, to cover other minimum distance
problems, we do not require that the moment conditions necessarily take this
form. We further assume that the moment condition is smooth enough so that
\begin{equation}\label{linear_approx_assump}
  \text{for any $\theta_n=\theta_0+\mathcal{O}_P(1/\sqrt{n})$,}\qquad
  \hat g(\theta_n)-\hat g(\theta_0)=\Gamma (\theta_n-\theta_0)+o_P(1/\sqrt{n}),
\end{equation}
where $\Gamma$ is the $d_{g}\times d_{\theta}$ derivative matrix of $g$ at
$\theta_{0}$. Conditions~\eqref{g_clt_assump} and~\eqref{linear_approx_assump}
are standard regularity conditions in the literature on linear and nonlinear
estimating equations; see \citet{newey_large_1994} for primitive conditions.
Finally, we also assume that $h$ is continuously differentiable with the
$1\times d_{\theta}$ derivative matrix at $\theta_{0}$ given by $H$.

\subsection{CIs based on asymptotically linear
  estimators}\label{sec:asympt-line-estim}

Under correct specification, when $\mathcal{C}=\{0\}$, standard estimators
$\hat{h}$ of $h(\theta)$ are asymptotically linear in $\hat{g}(\theta_{0})$.
This will typically extend to our locally misspecified case, so that for some
vector $k\in\mathbb{R}^{d_g}$,
\begin{equation}\label{h_asymptotically_linear_eq}
  \sqrt{n}(\hat h-h(\theta_0))
  =k'\sqrt{n}\hat g(\theta_0)+o_P(1)
  \indist \mathcal{N}(k' c, k'\Sigma k),
\end{equation}
where the convergence in distribution follows
by~\eqref{eq:local-misspecification} and~\eqref{g_clt_assump}. If in addition,
the estimator is regular (so that equality in \Cref{h_asymptotically_linear_eq}
holds uniformly for $\theta$ in a $\sqrt{n}$-neighborhood of $\theta_{0}$), then
$k$ will satisfy (see e.g. Section 2 in \citealp{newey_semiparametric_1990})
\begin{equation}\label{l_kGamma_eq}
  H=-k'\Gamma.
\end{equation}
For example, in a GMM model, if we take $\hat h=h(\hat{\theta}_{W})$ where
\begin{equation}\label{eq:gmm-w}
  \hat{\theta}_{W}=\argmin_{\theta} \hat g(\theta)'W\hat g(\theta),
\end{equation}
is the GMM estimator with weighting matrix $W$,
\Cref{h_asymptotically_linear_eq,l_kGamma_eq} will hold with
$k'=-H(\Gamma' W\Gamma)^{-1}\Gamma' W$ \citep[see][]{newey_generalized_1985}.
Because the vector $k$ determines the local asymptotic bias of the estimator, we
follow \citet{andrews_measuring_2017}, and refer to $k$ as the
\emph{sensitivity} of $\hat{h}$.

We now show how to construct misspecification-robust CIs based on an
asymptotically linear estimator $\hat{h}$ with a given sensitivity $k$. In
\Cref{sec:optimal-cis}, we show how to choose this sensitivity optimally, to
achieve the shortest CI among those based on regular asymptotically linear
estimators. In \Cref{efficiency_sec}, we will show that, under this choice of
$k$, the resulting CI is (near) optimal not only within the class of CIs based
on regular asymptotically linear estimators, but among \emph{all} CIs that
satisfy the asymptotic coverage requirement.

Let $\hat{k}$ and $\hat{\Sigma}$ be consistent estimates of $k$ and $\Sigma$.
Then by Slutsky's theorem,
\begin{equation*}
  \frac{\sqrt{n}(\hat h-h(\theta_0))}{\sqrt{\hat k'\hat \Sigma\hat k}}
  \indist \mathcal{N}\left(\frac{k'c}{\sqrt{k'\Sigma k}},1\right).
\end{equation*}
Under correct specification, the right-hand side corresponds to a standard
normal distribution, and we can form a CI with asymptotic coverage
$100\cdot (1-\alpha)\%$ as
$\hat{h}\pm z_{1-\alpha/2}\sqrt{\hat{k}'\hat{\Sigma}\hat{k}/n}$, where
$z_{1-\alpha/2}$ is the $1-\alpha/2$ quantile of a $\mathcal{N}(0,1)$
distribution; this is the usual Wald CI\@.

When we allow for misspecification, the Wald CI will no longer be valid.
However, note that the asymptotic bias ${k'c}/{\sqrt{k'\Sigma k}}$ is bounded in
absolute value by ${\maxbias_{\mathcal{C}}(k)}/{\sqrt{k'\Sigma k}}$ where
$\maxbias_{\mathcal{C}}(k)\equiv \sup_{c\in\mathcal{C}}\abs{k'c}$. Therefore,
given $c$, the $z$-statistic in the preceding display is asymptotically
$\mathcal{N}(t,1)$ where
$\abs{t}\le \maxbias_{\mathcal{C}}(k)/\sqrt{k'\Sigma k}$. This leads to the CI
\begin{equation}\label{h_ci_eq}
  \hat h \pm
  \cv_\alpha
  \left(\frac{\maxbias_{\mathcal{C}}(\hat k)}{\sqrt{\hat k'\hat \Sigma \hat k}}\right)\cdot
  \sqrt{\hat k'\hat \Sigma \hat k}/\sqrt{n},
\end{equation}
where $\cv_\alpha(\overline{t})$ is the $1-\alpha$ quantile of $\abs{Z}$, with
$Z\sim \mathcal{N}(\overline{t},1)$. In particular,
$\cv_{\alpha}(0)=z_{1-\alpha/2}$, so that in the correctly specified
case,~\eqref{h_ci_eq} reduces to the usual Wald CI\@. As we discuss in
\Cref{efficiency_sec}, in the limiting experiment, this CI becomes equivalent to
the fixed-length CI proposed by \citet{donoho94}.

To form a one-sided CI based on an estimator $\hat{h}$ with sensitivity $k$, one
can simply subtract its maximum bias, in addition to the standard error:
\begin{equation}\label{h_oci_eq}
  \hor{\hat{h}-\maxbias_{\mathcal{C}}(\hat{k})/\sqrt{n}
    -z_{1-\alpha}\sqrt{\hat{k}'\hat{\Sigma}\hat{k}/n}, \infty}.
\end{equation}
One could also form a valid two-sided CI by adding and subtracting the
worst-case bias $\maxbias_{\mathcal{C}}(\hat k)/\sqrt{n}$ from $\hat{h}$, in
addition to adding and subtracting
$z_{1-\alpha/2}\sqrt{\hat{k}'\hat\Sigma\hat k/n}$; however, since $\hat{h}$
cannot simultaneously have a large positive and a large negative bias, such CI
will be conservative, and longer than the CI in~\eqref{h_ci_eq}.

\subsection{Optimal CIs}\label{sec:optimal-cis}

The asymptotic length of the CI in \Cref{h_ci_eq} is given by
\begin{equation}\label{eq:ci-length}
  2\cdot \cv_\alpha\left(\maxbias_{\mathcal{C}}(k)/\sqrt{k'
      \Sigma k}\right)\cdot \sqrt{k' \Sigma k}/\sqrt{n}.
\end{equation}
To attain the shortest possible CI, we therefore need to use an estimator with
sensitivity that minimizes this expression. We restrict attention to
asymptotically linear estimators that are regular, so that we need to
minimize~\eqref{eq:ci-length} subject to~\eqref{l_kGamma_eq}. The CI length in
\Cref{eq:ci-length} depends on $\theta$ only through $\Sigma$. Furthermore, it
depends on the sensitivity only through the maximum bias,
$\maxbias_{\mathcal{C}}(k)$, and the variance $k'\Sigma k$. Therefore, rather
than minimizing~\eqref{eq:ci-length} directly over all sensitivities $k$, one
can first minimize the variance subject to a bound $\overline{B}$ on the
worst-case bias,
\begin{equation}\label{eq:bias-variance-optimization}
  \min_{k} k'\Sigma k\qquad \text{s.t.~\eqref{l_kGamma_eq} and}\quad
  \sup_{c\in\mathcal{C}}\abs{k'c}\leq \overline{B},
\end{equation}
and then vary the bound $\overline{B}$ to find the bias-variance trade-off that
leads to the shortest CI\@. In our implementation in \Cref{sec:implementation},
we focus on the case where $\mathcal{C}$ is characterized by $\ell_p$
constraints, in which case a closed-form expression for the worst-case bias
$\sup_{c\in\mathcal{C}}\abs{k'c}$ is available, and it is computationally
trivial to solve~\eqref{eq:bias-variance-optimization} directly or in Lagrange
multiplier form. In general, when the set $\mathcal{C}$ is convex, one can
reformulate~\eqref{eq:bias-variance-optimization} as a convex optimization
problem, leading to a computationally tractable solution (see
\Cref{efficiency_sec}). One can also use~\eqref{eq:bias-variance-optimization}
to determine the optimal sensitivity for constructing one-sided CIs, if we use
quantiles of excess length as the criterion for choosing a CI\@. We provide
details in \Cref{efficiency_sec_append}.

Once the optimal sensitivity has been determined, we can implement an estimator
with this sensitivity as a one-step estimator. In particular, let
$\hat\theta_{\text{initial}}$ be an initial $\sqrt{n}$-consistent estimator of
$\theta_0$, let $\hat{k}= k+o_{P}(1)$ be a consistent estimator of the desired
sensitivity $k$. Then the one-step estimator
\begin{equation*}
  \hat h=h(\hat\theta_{\text{initial}})+\hat k'\hat g(\hat\theta_{\text{initial}})
\end{equation*}
will have the desired sensitivity. This follows from the Taylor expansion
\begin{equation*}
  \begin{split}
    \sqrt{n}(\hat h-h(\theta_0))
    &=H\sqrt{n}(\hat\theta_{\text{initial}}-\theta_0)+\hat k'\sqrt{n}\hat g(\hat\theta_{\text{initial}})+o_P(1) \\
    &=(H+\hat{k}'\Gamma)\sqrt{n}(\hat\theta_{\text{initial}}-\theta_0) +
    \hat{k}'\sqrt{n}\hat{g}(\theta_0)+o_P(1),
  \end{split}
\end{equation*}
where the second line follows from~\eqref{linear_approx_assump}. It then follows
from~\eqref{l_kGamma_eq} that the first term converges in probability to zero,
and $\hat{h}$ satisfies~\eqref{h_asymptotically_linear_eq}.

\section{Practical implementation}\label{sec:implementation}

We now give step-by-step instructions for computing our CI\@. To make it easy to
determine the sensitivity of the CI to the magnitude of misspecification, we
consider sets of the form
$\mathcal{C}=\mathcal{C}(M)=\{Mc\colon c\in\mathcal{C}(1)\}$, where the scalar
$M$ measures the magnitude of misspecification. We discuss the exact
specification of the set $\mathcal{C}(\Mbound)$ in \Cref{remark:choice-of-C} below.

The fact that $M$ simply scales the potential magnitude of misspecification
leads to a simplification when tracing out the optimal CI as a function of
$\Mbound$. In particular, let $\{k_{\lambda}\}_{\lambda\geq 0}$ be the
bias-variance optimizing class of sensitivities that traces out the solutions to
\Cref{eq:bias-variance-optimization} as we vary the bound $\overline{B}$ when
$\mathcal{C}=\mathcal{C}(1)$. The index $\lambda$ determines the relative weight
on the bias; it can correspond to the Lagrange multiplier in a Lagrangian
formulation of~\eqref{eq:bias-variance-optimization}, or we can simply take
$\lambda=\overline B$ if we are solving~\eqref{eq:bias-variance-optimization}
directly. Let $\overline{B}_{\lambda}=\maxbias_{\mathcal{C}(1)}(k_\lambda)$. It
then follows by a change-of-variables argument that
$\maxbias_{\mathcal{C}(\Mbound)}(k_\lambda)= \Mbound\overline B_\lambda$, and
that $k_\lambda$ minimizes the asymptotic variance subject to this bound on
worst-case bias over $\mathcal{C}(\Mbound)$. Thus,
$\{k_\lambda\}_{\lambda\geq 0}$ is also a bias-variance optimizing class of
sensitivities for $\mathcal{C}(\Mbound)$. We therefore only need to compute the
class $\{k_{\lambda}\}_{\lambda\geq 0}$ only once, even when a range of values
$\Mbound$ is considered.

With this simplification, we can construct CIs for a range of values of $M$ as
follows:
\begin{enumerate}
\item\label{item:step1} Obtain an initial estimate $\hat\theta_{\text{initial}}$
  and estimates $\hat H$, $\hat \Gamma$ and $\hat \Sigma$ of $H$, $\Gamma$ and
  $\Sigma$.

  In particular, for the GMM model, when
  $\hat{g}(\theta) =\frac{1}{n}\sum_{i=1}^{n} g(w_i, \theta)$, we can take
  $\hat\theta_{\text{initial}}$ to be the GMM estimator
  $\hat{\theta}_{W}=\argmin_{\theta} \hat g(\theta)'W\hat g(\theta)$ for some
  weight matrix $W$. The remaining objects are the usual quantities used to
  estimate the asymptotic variance of this estimator:
  $\hat{\Sigma}= \frac{1}{n}\sum_{i=1}^{n} g(w_i, \hat\theta_{\text{initial}})
  g(w_i, \hat\theta_{\text{initial}})'$ (or, in the case of dependent
  observations, an autocorrelation robust version of this estimate),
  $\hat \Gamma=\frac{d}{d\theta'} \hat{g}(\theta)
  \big|_{\theta=\hat\theta_{\text{initial}}}$ (or, if $g(\theta, w)$ is
  nonsmooth, a numerical derivative as in \citealt{hong_extremum_2015}, or
  Section 7.3 of \citealt{newey_large_1994}) and
  $\hat{H}=\frac{d}{d\theta'}h(\theta)\big|_{\theta=\hat\theta_{\text{initial}}}$.
\item\label{item:step2} Compute the bias-variance optimizing class
  $\{\hat{k}_{\lambda}\}_{\lambda\geq 0}$ that
  solves~\eqref{eq:bias-variance-optimization} with $\mathcal{C}=\mathcal{C}(1)$
  and with $\hat\Sigma$ in place of $\Sigma$. Algorithms and closed-form
  solutions for computing $\{\hat{k}_{\lambda}\}_{\lambda \geq 0}$ for
  particular choices of $\mathcal{C}(M)$ are discussed in
  \Cref{remark:choice-of-C}. Let
  $\overline B_\lambda=\sup_{c\in\mathcal{C}(1)} |\hat k_\lambda'c|$. For each
  $\Mbound$, let $\lambda^*_{\Mbound}$ minimize the CI length\footnote{The
    critical value $\cv_{\alpha}(b)$ can easily be computed in statistical
    software as the square root of the $1-\alpha$ quantile of a non-central
    $\chi^{2}$ distribution with $1$ degree of freedom and non-centrality
    parameter $b^{2}$.}
  $2\cv_\alpha(\Mbound \overline B_\lambda/ \sqrt{\hat{k}_{\lambda}' \hat\Sigma
    \hat k_{\lambda}})\cdot \sqrt{\hat{k}_{\lambda}' \hat\Sigma
    \hat{k}_{\lambda}}$ over $\lambda$.
\item\label{item:step3} For each $M$, construct the one-step estimator
  $\hat{h}_{\lambda^{*}_{M}} =
  h(\hat\theta_{\text{initial}})+\hat{k}_{\lambda^{*}_{M}}'
  \hat{g}(\hat\theta_{\text{initial}})$, and report the misspecification-robust
  CI under $\mathcal{C}(\Mbound)$
  \begin{equation}\label{eq:optimal-CI-formula}
    \hat{h}_{\lambda_{M}^{*}}\pm
    \cv_{\alpha}
    \left(\Mbound \overline B_{\lambda^{*}_{M}}/
      \sqrt{\hat{k}_{\lambda_{M}^*}' \hat\Sigma \hat k_{\lambda_{M}^*}}\right)\cdot
    \sqrt{\hat{k}_{\lambda_{M}^*}' \hat\Sigma \hat k_{\lambda^*} / n}.
  \end{equation}
\end{enumerate}

\begin{remark}[Choice of $\mathcal{C}(\Mbound)$]\label{remark:choice-of-C}
  A simple and flexible way of forming the set $\mathcal{C}$ is to take
  \begin{equation}\label{eq:C_Bgamma}
    \mathcal{C}=\mathcal{C}(M)=\{B\gamma\colon \norm{\gamma}\le \Mbound\},
  \end{equation}
  where $B$ is a $d_g\times d_\gamma$ matrix and $\norm{\cdot}$ is some norm.
  The matrix $B$ can be used to standardize moments, account for their
  correlations, or to pick out which moments are believed to be misspecified.
  For instance, setting $B$ to the last $d_{\gamma}$ columns of the
  $d_{g}\times d_{g}$ identity matrix allows for misspecification in the last
  $d_{\gamma}$ moments, while maintaining that the first $d_{g}-d_{\gamma}$
  moments are valid.

  In light of our result in \Cref{efficiency_sec} that it is not possible to
  determine the set $\mathcal{C}$ in a data-driven way, the normalizing matrix
  $B$ and the baseline misspecification magnitude $\Mbound$ used should be
  chosen to reflect application-specific arguments about which forms of
  misspecification are plausible; we can then vary $\Mbound$ over other
  plausible choices as a form of sensitivity analysis. We illustrate this in the
  context of our application in \Cref{empirical_sec}, and we refer the reader to
  \citet{conley_plausibly_2012} and \citet{andrews_measuring_2017} for
  additional examples and discussion. Alternatively, one can also use measures
  of statistical distance such as the probability of detecting that the model is
  misspecified to aid with interpretation of $\Mbound$, as suggested in
  \citet{HaSa08robustness} or \citet{bonhomme_minimizing_2018}.

  While it is not possible to determine $\Mbound$ automatically, it is possible
  to obtain a lower CI $[\Mbound_{\min}, \infty]$ for $\Mbound$, which can be
  used as a diagnostic check verifying that the values of $M$ considered are not
  too small. We develop such tests by generalizing the $J$-test of
  overidentifying restrictions in \Cref{sec:specification-test}. We recommend
  reporting the lower bound $\Mbound_{\min}$ along with the plot of the optimal
  CI as a function of $\Mbound$.

  The norm $\norm{\cdot}$ determines how the researcher's bounds on each element
  of $\gamma$ interact. With the $\ell_{\infty}$ norm, one places separate
  bounds on each element of $\gamma$, which leads to a simple interpretation: no
  single element of $\gamma$ can be greater than $\Mbound$. Under an $\ell_{p}$
  norm with $1\leq p< \infty$, the bounds on each element of $\gamma$ interact
  with each other, so that larger amounts of misspecification in one element is
  allowed if other elements are correctly specified. Depending on whether such
  interactions are desirable, we recommend setting $p=2$, or $p=\infty$.

  For these choices of the norm, computing the class of optimal sensitivities
  $\{\hat{k}_{\lambda}\}_{\lambda\geq 0}$ is particularly simple. In particular,
  when $\norm{\cdot}$ corresponds to an $\ell_{p}$ norm, the worst-case bias has
  a closed form, since by Hölder's inequality,
  $\maxbias_{\mathcal{C}(M)}(k)=\sup_{\norm{\gamma}_{p}\leq
    1}M\abs{k'B\gamma}=\Mbound\norm{B'k}_{p'}$, where $p'$ is the Hölder
  complement of $p$ ($p'=1$ if $p=\infty$, while $p'=2$ if $p=2$), and the
  optimal sensitivities $\{\hat{k}_{\lambda}\}_{\lambda\geq 0}$ can be computed
  by casting the problem as a penalized regression problem. We explain the
  connection to penalized regression, and provide details in
  \Cref{sec:lp-bounds-appendix}.

  When $p=2$, so that $\norm{\cdot}$ corresponds to the Euclidean norm, the
  problem is analogous to ridge regression, and the optimal sensitivities in
  Step~\ref{item:step2} of the implementation take the form
  $\hat{k}_{\lambda}'=-H(\Gamma'W_{\lambda}\Gamma)^{-1}\Gamma'W_{\lambda}$,
  where $W_{\lambda}=(\lambda BB'+\hat{\Sigma})^{-1}$, with
  $\overline{B}_\lambda=\norm{B'\hat{k}_{\lambda}}_{2}$. As an alternative to
  using the one-step estimator in Step~\ref{item:step3} of the implementation,
  one can implement this sensitivity directly as a GMM estimator with weighting
  matrix $W_{\lambda}$ (see also \Cref{weighting_matrix_remark} below). Relative
  to the optimal weighting matrix $\Sigma^{-1}$ under correct specification, the
  matrix $W_{\lambda}$ trades off precision of the moments against their
  potential misspecification.

  When $p=\infty$, the penalized regression analogy leads a simple algorithm for
  computing the optimal sensitivities $\{\hat{k}_{\lambda}\}_{\lambda\geq 0}$
  that is similar to the LASSO/LAR algorithm \citep{ehjt04}. We give details on
  the algorithm in \Cref{sec:lp-bounds-appendix}. It follows from this algorithm
  if $B$ corresponds to columns of the identity matrix, as $\Mbound$ grows, the
  optimal sensitivity successively drops the ``least informative'' moments, so
  that in the limit, if $d_{g}\leq d_{\gamma}+d_{\theta}$, the optimal
  sensitivity corresponds to that of an exactly identified GMM estimator based
  on the $d_{\theta}$ ``most informative'' moments only, where
  ``informativeness'' is given by both the variability of a given moment, and
  its potential misspecification. If $d_{g}>d_{\gamma}+d_{\theta}$, one simply
  drops all invalid moments in the limit.
\end{remark}

\begin{remark}\label{weighting_matrix_remark}
  In Step~\ref{item:step3} of our implementation, we use a one-step estimator
  $\hat{h}_{\lambda}$ to compute a CI that is asymptotically valid and optimal.
  Due to concerns about finite-sample behavior (analogous to concerns about
  finite sample behavior of one-step estimators in the correctly specified
  case), one may prefer using a different estimator that is asymptotically
  equivalent to $\hat{h}_{\lambda}$. In general, one can implement an estimator
  with sensitivity $k$ as a GMM or minimum distance estimator by using an
  appropriate weighting matrix, so that one can in particular replace
  $\hat{h}_{\lambda}$ by $h(\hat{\theta}_{W})$, with the weighting matrix $W$
  appropriately chosen. To give the formula for the weighting matrix, let
  $\Gamma_{\perp}$ denote a $d_{g}\times (d_{g}-d_{\theta})$ matrix that's
  orthogonal to $\Gamma$, so that $\Gamma_{\perp}'\Gamma=0$, and let
  $\hat{\Gamma}_{\perp}$ denote a consistent estimate. Let $S$ denote a
  $d_{g}\times d_{\theta}$ matrix that satisfies $S'\hat{\Gamma}=-I$ and
  $\hat{k}_{\lambda}=S\hat{H}'$. Then we can set
  $W=S W_{1} S'+\hat{\Gamma}_{\perp}W_{2}\hat{\Gamma}_{\perp}'$ for some
  non-singular matrix $W_{1}$, and an arbitrary conformable matrix $W_{2}$. It
  can be verified by simple algebra that $\hat{\theta}_{W}$ will have
  sensitivity $k_{\lambda}$.
\end{remark}

\begin{remark}[Global misspecification]\label{remark:global misspecification}
  While we focus on the local misspecification setting, in which the set
  $\mathcal{C}/\sqrt{n}$ shrinks with $n$ at a $1/\sqrt{n}$, one can use our
  insights about optimal weighting to construct a CI that retains the
  near-optimality properties of the above CI under local misspecification, while
  having correct coverage under asymptotics in which this set shrinks more
  slowly or stays fixed with the sample size (the latter is termed ``global
  misspecification'' in the literature). Let $W$ be a weighting matrix that
  leads to the optimal sensitivity, as described in
  \Cref{weighting_matrix_remark} above, and let $\mathcal{I}_\cglob$ be a CI
  constructed from the GMM estimator with moment conditions
  $\theta\mapsto g(w_i, \theta)-\cglob$. Let
  $\mathcal{I}=\cup_{\cglob\in \mathcal{C}/\sqrt{n}}\mathcal{I}_\cglob$ be the
  union of these CIs over possible values of $\cglob$ in the set
  $\mathcal{C}/\sqrt{n}$. Such an approach was suggested in the context of
  misspecified linear IV by \citet{conley_plausibly_2012}, although they did not
  consider adjusting the weighting matrix. The resulting CI has correct
  asymptotic coverage under both local and global misspecification, and, for
  one-sided CI construction, is asymptotically equivalent under local
  misspecification to the CI discussed above. We provide further details in
  \Cref{global_misspec_sec_append}. In the \namecref{global_misspec_sec_append},
  we also discuss a second approach to constructing CIs valid under global
  misspecification based on misspecification-robust standard errors
  \citep{HaIn03}, which is applicable if the estimate of the worst-case bias
  under global misspecification is asymptotically normal. The resulting one- and
  two-sided CIs are asymptotically equivalent under local misspecification to
  the optimal CIs discussed above.
\end{remark}

\begin{remark}[Other performance criteria]\label{remark:other-performance}
  In addition to constructing a CI, one may be interested in a point estimate of
  $h(\theta_0)$, using mean squared error (MSE) as the criterion. The steps to
  forming the MSE optimal point estimate are exactly the same as above, except
  that, rather than minimizing CI length in Step~\ref{item:step2}, we choose
  $\lambda$ to minimize
  $\maxbias_{\mathcal{C}}(\hat{k}_\lambda)^2+ \hat{k}_\lambda'\hat{\Sigma}
  \hat{k}_\lambda=\Mbound \overline B_{\lambda} + \hat{k}_\lambda'\hat{\Sigma}
  \hat{k}_\lambda$. Similar ideas apply to other criteria, such as mean absolute
  deviation or quantiles of excess length of one-sided CIs (discussed in
  \Cref{efficiency_sec_append}). If $\lambda$ is chosen differently in
  Step~\ref{item:step2} the CI computed in Step~\ref{item:step3} will be longer
  than the one computed at $\lambda_{M}^*$, but it will still have correct
  coverage.
\end{remark}

\section{Efficiency bounds and near optimality}\label{efficiency_sec}

The CI given in \Cref{eq:optimal-CI-formula} has the apparent defect that the
local misspecification vector $c$ is reflected in the length of the CI only
through the a priori restriction $\mathcal{C}$ imposed by the researcher. Thus,
if the researcher is conservative about misspecification, the CI will be wide,
even if it turns out that $c$ is in fact much smaller than the a priori bounds
defined by $\mathcal{C}$. Moreover, this approach requires the researcher to
explicitly specify the set $\mathcal{C}$, including any tuning parameters such
as the parameter $\Mbound$ if the set takes the form
$\mathcal{C}=\mathcal{C}(M)$ that we considered in \Cref{sec:implementation}.
One may therefore seek to improve upon this CI by estimating the magnitude of
$c$, or by estimating the tuning parameters, and constructing a CI that is
shorter if these estimates indicate that misspecification is mild. Similarly, it
may be restrictive to require that the CI be centered at an asymptotically
linear estimator: this rules out, for example, using a $J$-test to decide which
moments to use.

The main result of this section shows that, when $\mathcal{C}$ is convex and
centrosymmetric ($c\in\mathcal{C}$ implies $-c\in\mathcal{C}$), the scope for
improving on the CI in~\eqref{eq:optimal-CI-formula} is nonetheless limited: no
sequence of CIs that maintain coverage under all local misspecification vectors
$c\in\mathcal{C}$ can be substantially tighter, even under correct
specification. This result can be interpreted as translating results from a
``limiting experiment'' that is an extension of the linear regression model. We
first give a heuristic derivation of this limiting experiment and explain our
result in the context of this limiting experiment. We then present the formal
asymptotic result, and discuss its implications in some familiar settings.
Readers who are interested only in implementing the methods, rather than
efficiency results, can skip this section.

We restrict attention in this section to the GMM model, in which
$\hat g(\theta)=\frac{1}{n}\sum_{i=1}^{n}g(w_i, \theta)$, and we further
restrict the data $\{w_{i}\}_{i=1}^{n}$ to be independent and identically
distributed (i.i.d.). Similar to semiparametric efficiency theory in the
standard, correctly specified case, this facilitates parts of the formal
statements and proofs, such as the definition of the set of distributions under
which coverage is required and the construction of least favorable submodels. We
expect that analogous results could be obtained in other settings.

\subsection{Limiting experiment}\label{limiting_experiment_sec}

As discussed in~\Cref{sec:asympt-line-estim}, we can form CIs based on linear
estimators with asymptotic distribution $\mathcal{N}(k'c, k'\Sigma k)$. This
suggests that the problem of constructing an asymptotically valid CI for
$h(\theta)$ in the model~\eqref{eq:local-misspecification} is asymptotically
equivalent to the problem of constructing a CI for the parameter $H\theta$ in
the approximately linear model
\begin{equation}\label{limit_experiment_eq}
  Y=-\Gamma \theta+c+\Sigma^{1/2}\varepsilon, \quad c\in\mathcal{C}, \qquad
  \varepsilon\sim \mathcal{N}(0,I),
\end{equation}
where $\Gamma$, $H$ and $\Sigma^{1/2}$ are known, and we observe $Y$. One can
think of this model as an ``approximately'' linear regression model, with
$-\Gamma$ playing the role of the design matrix of the (fixed) regressors, and
$c$ giving the approximation error. This model dates back at least to
\citet{SaYl78}, who considered estimation in this model when $\mathcal{C}$ is a
rectangular set and $\Sigma$ is diagonal. The analog of the asymptotically
linear estimator $\hat{h}$ in~\eqref{h_asymptotically_linear_eq} is the linear
estimator $k'Y$. To see the analogy, note that $k'Y-H\theta$ is distributed
$\mathcal{N}((-k'\Gamma-H)\theta+k'c, k'\Sigma k)$, and restricting ourselves to
estimators that do not have infinite worst-case bias when $\theta$ is
unrestricted gives the condition~\eqref{l_kGamma_eq}. In the limiting
experiment, the analog of the CI~\eqref{h_ci_eq} is given by the CI
$k'Y\pm \cv_\alpha(\maxbias_{\mathcal{C}}(k)/\sqrt{k'\Sigma k})\cdot
\sqrt{k'\Sigma k}$. Finding weights that minimize the length of this CI is
isomorphic to the problem of finding the sensitivity that minimizes the
asymptotic CI length in~\eqref{eq:ci-length}.

For a general convex set $\mathcal{C}$, the bias-variance optimization problem
in \Cref{eq:bias-variance-optimization} can be reformulated as a convex
programming problem, as shown in \citet{low95}. In particular, when the set
$\mathcal{C}$ is centrosymmetric (see \Cref{efficiency_sec_append} for the
general case), the bias-variance optimizing class of weights
$\{k_{\lambda}\}_{\lambda> 0}$ is given by the class
$\{k_{\delta}\}_{\delta> 0}$, where
\begin{equation}\label{optimal_weights_eq}
  k_\delta'
  =k_{\delta, \Sigma, \Gamma, H, \mathcal{C}}'
  =\frac{-(c_\delta-\Gamma \theta_\delta)'\Sigma^{-1}}{(c_\delta-\Gamma \theta_\delta)
    '\Sigma^{-1}\Gamma H'/HH'},
\end{equation}
and, for each $\delta$, $c_\delta, \theta_\delta$ are the solutions to the
convex program
\begin{equation}\label{half_modulus_eq}
  \sup_{\theta, c} H\theta \quad \text{s.t.}\quad c\in\mathcal{C},
  \quad (c-\Gamma \theta)'\Sigma^{-1}(c-\Gamma \theta)\le \delta^2/4.
\end{equation}

It then follows from \citet{donoho94} that among fixed-length CIs based on
linear estimators (CIs that take the form $k'Y \pm \chi$ for some constant
$\chi$), the shortest CI in the limiting experiment takes the form
\begin{equation}\label{eq:optimal-FLCI-limit_experiment}
  k_{\delta^*}'Y\pm \cv_\alpha(\maxbias_{\mathcal{C}}(k_{\delta^*})/\sqrt{k_{\delta^*}'\hat \Sigma k_{\delta^*}})\cdot
  \sqrt{k_{\delta^*}'\Sigma k_{\delta^*}},
\end{equation}
where $\maxbias_{\mathcal{C}}(k_\delta) =-k_\delta' c_\delta$, and
$\delta^{*}=\argmin_{\delta>0} 2\cv_\alpha(\maxbias_{\mathcal{C}}(k_{\delta}) /
\sqrt{k_{\delta}' \Sigma k_{\delta}})\cdot \sqrt{k_{\delta}'\Sigma k_{\delta}}$
is chosen to minimize the CI length. The CI in~\eqref{eq:optimal-CI-formula} is
an analog of this CI, with $\delta$ playing the role of the index $\lambda$.

The CI in~\eqref{eq:optimal-FLCI-limit_experiment} takes a familiar form in the
special case in which $\mathcal{C}$ is a linear subspace of
$\mathbb{R}^{d_{g}}$, so that for some $d_{g}\times d_{\gamma}$ full-rank matrix
$B$ with $d_{\gamma}\leq d_{g}-d_{\theta}$,
$\mathcal{C}=\{B\gamma\colon \gamma\in\mathbb{R}^{d_{\gamma}}\}$. Let
$B_{\perp}$ denote a $d_{g}\times (d_{g}-d_{\gamma})$ matrix that's orthogonal
to $B$. Then for any $\delta>0$, $k_{\delta}'={k}'_{LS, B}$, where
\begin{equation}\label{eq:linear-subspace-sensitivity}
  {k}_{LS, B}'=-H(\Gamma'B_{\perp}(B_{\perp}'\Sigma
  B_{\perp})^{-1}B_{\perp}'\Gamma)^{-1}
  \Gamma'B_{\perp}(B_{\perp}'\Sigma B_{\perp})^{-1}B_{\perp}'
\end{equation}
is the sensitivity of the GLS estimator after
pre-multiplying~\eqref{limit_experiment_eq} by $B_{\perp}'$, (which effectively
picks out the observations with zero misspecification). Since this estimator is
unbiased, the CI in~\eqref{eq:optimal-FLCI-limit_experiment} becomes
${k}_{LS, B}'Y\pm z_{1-\alpha/2}\sqrt{k_{LS, B}'\Sigma k_{LS, B}}$.

Like the asymptotic CI~\eqref{eq:optimal-CI-formula}, the CI
in~\eqref{eq:optimal-FLCI-limit_experiment} has the potential drawback that its
length is determined by the worst possible misspecification in $\mathcal{C}$,
leaving open the possibility of efficiency improvements when $c$ turns out to be
close to zero. As a best-case scenario for such improvements, consider the
problem: among confidence sets with coverage at least $1-\alpha$ for all
$\theta\in\mathbb{R}^{d_\theta}$ and $c\in\mathcal{C}$, minimize expected length
when $\theta=\theta^*$ and $c=0$. Note that this setup is even more favorable
for potential improvements on our CI, since it allows the researcher to guess
correctly that $\theta$ is equal to some $\theta^*$, and it allows for
confidence sets that are not intervals (in this case, length is defined as
Lebesgue measure). Let $\kappa_{*}(H, \Gamma, \Sigma, \mathcal{C})$ denote the
ratio of this optimized expected length relative to the length of the CI
in~\eqref{eq:optimal-FLCI-limit_experiment} (it can be shown that this ratio
does not depend on $\theta^*$).

If $\mathcal{C}$ is convex, a formula for
$\kappa_{*}(H, \Gamma, \Sigma, \mathcal{C})$ follows from applying the general
results in Corollary 3.3 in \citet{ArKo18optimal} to the limiting model. If
$\mathcal{C}$ is also centrosymmetric, then
\begin{equation}\label{kappa_conv_cs_eq}
  \kappa_{*}(H, \Gamma, \Sigma, \mathcal{C})
  =\frac{(1-\alpha)E\left[\omega(2(z_{1-\alpha}-Z))| Z\le z_{1-\alpha}\right]}{2\min_\delta
    \cv_\alpha\left(\frac{\omega(\delta)}{2\omega'(\delta)}-\frac{\delta}{2}\right)\omega'(\delta)},
\end{equation}
where $Z\sim \mathcal{N}(0,1)$ and $\omega(\delta)$ is two times the optimized
value of~\eqref{half_modulus_eq}. Furthermore, we show in
\Cref{theorem:sharp-unversal-bound} that the right-hand side is lower-bounded by
$(z_{1-\alpha}(1-\alpha)-\tilde{z}_{\alpha}\Phi(\tilde{z}_{\alpha})+
\phi(z_{1-\alpha})-\phi(\tilde{z}_{\alpha}))/ z_{1-\alpha/2}$, where
$\tilde{z}_{\alpha}=z_{1-\alpha}-z_{1-\alpha/2}$ for any $H$, $\Gamma$, $\Sigma$
and $\mathcal{C}$, where $\phi(\cdot)$ denotes the standard normal density. For
$\alpha=0.05$, this universal lower bound evaluates to 71.7\%. Evaluating
$\kappa_{*}$ for particular choices of $H$, $\Gamma$, $\Sigma$, and
$\mathcal{C}$ often yields even higher efficiency.

If $\mathcal{C}$ is a linear subspace, then $\omega(\delta)$ is linear, and
\begin{equation}\label{eq:kappa-linear-subspace}
  \kappa_{*}(H, \Gamma, \Sigma, \mathcal{C})=\frac{(1-\alpha)z_{1-\alpha}
    +\phi(z_{1-\alpha})}{z_{1-\alpha/2}}\geq \frac{z_{1-\alpha}}{z_{1-\alpha/2}},
\end{equation}
where the lower bound follows since $\phi(z_{1-\alpha})\geq \alpha z_{1-\alpha}$
by the Gaussian tail bound $1-\Phi(x)\leq \phi(x)/x$ for $x>0$. This bound
corresponds to that in \citet{pratt61} for the case of a univariate normal mean.
The potential efficiency improvement essentially comes from using prior
knowledge of $\theta^*$ to turn a two-sided critical value into a one-sided
critical value. Furthermore, it follows from \citet{joshi69} that the CI
${k}_{LS, B}'Y\pm z_{1-\alpha/2}\sqrt{k_{LS, B}'\Sigma k_{LS, B}}$ is the unique
CI that achieves minimax expected length. Thus, not only is the scope for
improvement at a particular $\theta^{*}$ bounded
by~\eqref{eq:kappa-linear-subspace}, any CI with shorter expected length at some
$\theta^{*}$ must necessarily perform worse elsewhere in the parameter space.

For the one-sided CI~\eqref{h_oci_eq}, the analogous CI in the limiting
experiment is
$\hor{k'Y-\maxbias_{\mathcal{C}}(k)-z_{1-\alpha}\sqrt{k'\Sigma k}, \infty}$,
and, as we discuss in \Cref{efficiency_sec_append}, to choose the
optimal sensitivity $k$, one can consider optimizing a given quantile of its
worst-case excess length.
The results in \citet{ArKo18optimal} again give an efficiency bound for improvement
at $c=0$ and a particular $\theta^*$, analogous to~\eqref{kappa_conv_cs_eq} for
the two-sided case.  See \Cref{efficiency_sec_append} for details.
If $\mathcal{C}$ is a linear subspace, then
optimizing quantiles of worst-case excess length yields the CI
$\hor{k_{LS, B}'Y-z_{1-\alpha}\sqrt{k_{LS, B}'\Sigma k_{LS, B}}, \infty}$,
independently of the quantile one is optimizing. Furthermore, the efficiency
bound implies that this one-sided CI is in fact fully optimal over all quantiles
of excess length and all values of $\theta, c$ in the local parameter space.

These efficiency results for the CI~\eqref{eq:optimal-FLCI-limit_experiment}
in the limiting experiment suggest that the scope for improvement over the CI
in~\eqref{eq:optimal-CI-formula} should be limited in large samples.
\Cref{main_text_efficiency_bound_thm}, stated in the next section, uses the
analogy with the approximately linear model~\eqref{limit_experiment_eq} along
with Le Cam-style arguments involving least favorable submodels to show that
this bound indeed translates to the locally misspecified GMM model. For
one-sided CIs, we state an analogous result in \Cref{efficiency_sec_append}. We
discuss the implications of these results in \Cref{efficiency_examples_sec}.

\subsection{Asymptotic efficiency bound}\label{efficiency_bound_sec_main}

To make precise our statements about coverage and efficiency, we need the notion
of uniform (in the underlying distribution) coverage of a confidence interval.
This requires additional notation, which we now introduce. Let $\mathcal{P}$
denote a set of distributions $P$ of the data $\{w_{i}\}_{i=1}^{n}$, and let
$\Theta_n\subseteq\mathbb{R}^{d_\theta}$ denote the parameter space for
$\theta$. We require coverage for all pairs
$(\theta, P)\in \Theta_n\times \mathcal{P}$ such that
$\sqrt{n}g_P(\theta)\in\mathcal{C}$, where the subscript $P$ on the population
moment condition makes it explicit that it depends on the distribution of the
data.\footnote{To be precise, we should also subscript all other quantities such
  as $\Gamma$ and $\Sigma$ by $P$. To prevent notational clutter, we drop this
  index in the main text unless it causes confusion.} Letting
$\mathcal{S}_n=\{(\theta, P)\in \Theta_n\times\mathcal{P}:
\sqrt{n}g_P(\theta)\in\mathcal{C}\}$ denote this set, the condition for coverage
at confidence level $1-\alpha$ can be written
\begin{equation}\label{coverage_eq}
  \liminf_{n\to\infty} \inf_{(\theta, P)\in\mathcal{S}_n} P(h(\theta)\in\mathcal{I}_{n})
  \ge 1-\alpha.
\end{equation}
We say that a confidence set $\mathcal{I}_{n}$ is asymptotically valid
(uniformly over $\mathcal{S}_{n}$) at confidence level $1-\alpha$ if this
condition holds.\footnote{In general, $\theta_0$ and $h(\theta_0)$ may be set
  identified for a given sample size $n$ (although our assumptions imply that
  the identified set will shrink at a root-$n$ rate). The coverage
  requirement~\eqref{coverage_eq} states that the CI must cover points in the
  identified set for $h(\theta)$, as in \citet{im04}; see
  \Cref{efficiency_sec_append}.}

Among two-sided CIs of the form $\hat{h}\pm \hat{\chi}$ that are asymptotically
valid, we prefer CIs with shorter expected length. To avoid issues with
convergence of moments, we use truncated expected length, and define the
asymptotic expected length of a two-sided CI at $P_{n}\in\mathcal{P}$ as
$\liminf_{T\to\infty}\liminf_{n\to\infty}E_{P_{n}}\min\{\sqrt{n}\cdot
2\hat{\chi}, T\}$, where $E_{P}$ denotes expectation under $P$.

We are now ready to state the main efficiency result.

\begin{theorem}\label{main_text_efficiency_bound_thm}
  Suppose that $\mathcal{C}$ is convex and centrosymmetric. Let
  $\hat h_{\lambda^*}$ and $\hat\chi^*_{\lambda^*}$ be formed as in
  \Cref{sec:implementation}. Suppose that
  \Cref{theta_init_assump,g_clt_assump_append,,GammaHSigma_assump,scriptB_assump,,deltastar_continuous_assump}
  in \Cref{efficiency_sec_append} hold. Suppose that the data
  $\{w_{i}\}_{i=1}^{n}$ are i.i.d.\ under all $P\in\mathcal{P}$. Let
  $(\theta^*, P_0)$ be correctly specified (i.e. $g_{P_0}(\theta^*)=0$) such that
  $\mathcal{P}$ contains a submodel through $P_0$ satisfying
  \Cref{submodel_assump}. Then:
  \begin{enumerate}[label=(\roman{enumi})]
  \item The CI $\hat{h}_{\lambda^{*}}\pm \hat \chi^*_{\lambda^*}$ is
    asymptotically valid, and its half-length $\hat\chi^*_{\lambda^*}$ satisfies
    $\sqrt{n}\hat \chi^*_{\lambda^*}=\chi(\theta, P)+o_P(1)$ uniformly over
    $(\theta, P)\in\mathcal{S}_n$ where
    \begin{equation*}
      \chi(\theta, P)=\min_k \cv_\alpha(\maxbias_{\mathcal{C}}(k)/
      \sqrt{k'\Sigma_{\theta, P}k})\sqrt{k'\Sigma_{\theta, P}k}
    \end{equation*}
    with $\maxbias_{\mathcal{C}}(k)$ calculated with $\Gamma=\Gamma_{\theta, P}$
    and $H=H_{\theta}$.
  \item For any other asymptotically valid CI $\hat{h}\pm \hat{\chi}$,
    \begin{equation*}
      \frac{\liminf_{T\to\infty}\liminf_{n\to\infty}E_{P_0}\min\{\sqrt{n}\cdot
        2\hat{\chi}, T\}} {2\chi(\theta^*, P_0)}
      \ge \kappa_{*}(H_{\theta^*}, \Gamma_{\theta^*, P_0}, \Sigma_{\theta^*, P_0}, \mathcal{C}),
    \end{equation*}
    where $\kappa_{*}(H, \Gamma, \Sigma, \mathcal{C})$ is defined
    in~\eqref{kappa_conv_cs_eq}. Furthermore, for any $H, \Sigma, \Gamma$, and
    $\mathcal{C}$, $\kappa_{*}$ admits the universal lower bound
    $(z_{1-\alpha}(1-\alpha)-\tilde{z}_{\alpha}\Phi(\tilde{z}_{\alpha})+
    \phi(z_{1-\alpha})-\phi(\tilde{z}_{\alpha}))/ z_{1-\alpha/2}$, where
    $\tilde{z}_{\alpha}=z_{1-\alpha}-z_{1-\alpha/2}$ and $\phi(\cdot)$ denotes
    the standard normal density.
  \end{enumerate}

\end{theorem}
The proof for this theorem is given in \Cref{efficiency_sec_append}, which also
gives an analogous result for one-sided confidence intervals.
\Cref{theta_init_assump,g_clt_assump_append,,GammaHSigma_assump,scriptB_assump,,deltastar_continuous_assump},
stated in \Cref{efficiency_sec_append}, require that the conditions in
\Cref{general_results_sec} hold in a uniform sense over the class $\mathcal{P}$.
In the supplemental materials, we give primitive conditions for these
assumptions in the misspecified linear IV model. \Cref{submodel_assump}, also
stated in the appendix, requires that the class $\mathcal{P}$ be rich enough to
contain a submodel that is least favorable for the GMM problem, so that the
class doesn't implicitly impose any other conditions that could be used to make
inference easier. In the supplemental materials, we provide a general way of
constructing a submodel satisfying these conditions.

The universal lower bound on $\kappa_{*}$ is new and may be of independent
interest. For $\alpha=0.05$, it evaluates to 71.7\%. The universal lower bound
is sharp in the sense that there exist $\Gamma, \Sigma, H$ and $\mathcal{C}$ for
which $\kappa_{*}$ equals this lower bound. In particular applications, the
efficiency bound $\kappa_{*}$ can be computed at estimates of $\Gamma$, $\Sigma$
and $H$, and often, this gives much higher efficiencies. We illustrate these
bounds in the empirical application in \Cref{empirical_sec}.

\subsection{Discussion}\label{efficiency_examples_sec}

To help build intuition for the efficiency bound in
\Cref{main_text_efficiency_bound_thm}, and to relate this result to the
literature, we now consider some special cases. We first discuss the (standard)
correctly specified case. Second, we consider the case in which some moments are
known to be valid, and the misspecification in the remaining moments is
unrestricted. This case may be of interest in its own right. We then discuss
the general case.  Finally, we discuss the connection to certain statistical
measures of distance considered in the literature.

\subsubsection{Correctly specified case}\label{sec:correctly-specified-case}
Suppose that $\mathcal{C}=\{0\}$. This is in particular a linear subspace of
$\mathbb{R}^{d_{g}}$, with $B=0$, and $B_{\perp}=I$, the $d_{g}\times d_{g}$
identity matrix. Thus, in the limiting experiment, the optimal CI uses the GLS
estimator $k_{LS,0}'Y$, with $k_{LS,0}$ given
in~\eqref{eq:linear-subspace-sensitivity} (with $B=0$). For testing the null
hypothesis $H\theta=h_{0}$ against the one-sided alternative
$H\theta\geq h_{0}$, the one-sided $z$-statistic based on ${k}_{LS,0}'Y$ is
uniformly most powerful \citep[Proposition 15.2]{van_der_vaart_asymptotic_1998}.
Inverting these tests yields the CI
$\hor{k_{LS,0}'Y-z_{1-\alpha}\sqrt{k_{LS,0}'\Sigma k_{LS,0}}, \infty}$. Since
the underlying tests are uniformly most powerful, this CI achieves the shortest
excess length, simultaneously for all quantiles and all possible values of the
parameter $\theta$. For two-sided CIs, the results described in
\Cref{limiting_experiment_sec} imply that the CI
$h_{LS,0}'Y\pm z_{1-\alpha/2}\sqrt{k_{LS,0}'\Sigma k_{LS,0}}$ is the unique CI
that achieves minimax expected length, and the efficiency of this CI relative to
a CI that optimizes its expected length at a single value $\theta^*$ of $\theta$
when indeed $\theta=\theta^*$ is given in \Cref{eq:kappa-linear-subspace}. It
evaluates to 84.99\% at $\alpha=0.05$.

Applying \Cref{main_text_efficiency_bound_thm} to the case
$\mathcal{C}=\{0\}$ gives an asymptotic version of the two-sided efficiency
bound. Furthermore, the CI in \Cref{main_text_efficiency_bound_thm}
reduces to the usual two-sided CI based on $\hat{\theta}_{\Sigma^{-1}}$. Thus,
in this case, \Cref{main_text_efficiency_bound_thm} shows that very
little can be gained over the usual two-sided CI by optimizing the CI relative
to a particular distribution $P_0$. Results in the appendix give an analogous
result for one-sided CIs. In the one-sided case, this asymptotic result is
essentially a version of a classic result from the semiparametric efficiency
literature for one-sided tests, applied to CIs (see Chapter 25.6 in
\citealp{van_der_vaart_asymptotic_1998}). In the two-sided case, the result is, to
our knowledge, new.

\subsubsection{Some valid and some invalid moments}\label{sec:some-valid-some}

Consider now the case in which the first $d_{g}-d_{\gamma}$ moments are known to
be valid, with the potential misspecification for the remaining $d_{\gamma}$
moments unrestricted. Then
$\mathcal{C}=\{(0', \gamma')'\colon \gamma\in\mathbb{R}^{d_{\gamma}}\}$
corresponds to a linear subspace with $B$ given by the last $d_{\gamma}$ columns
of the identity matrix, and $B_{\perp}$ given by the first $d_{g}-d_{\gamma}$
columns.
Optimal CIs in the limiting experiment therefore use the estimator $k_{LS,
  B}'Y$, which is the GLS estimator based
only on the observations with no misspecification.

The one-sided CI based on $k_{LS, B}'Y$ achieves the shortest excess length,
simultaneously for all quantiles and all possible values of the parameter
$\theta$. The two-sided CI
$k_{LS, B}'Y\pm z_{1-\alpha/2}\sqrt{k_{LS, B}'\Sigma k_{LS, B}}$ is optimal in
the same sense as the usual CI in \Cref{sec:correctly-specified-case}: it
achieves minimax expected length, and its efficiency, relative to a CI that
optimizes its length at a single $\theta^{*}$ and $\gamma=0$, is lower-bounded
by $z_{1-\alpha}/z_{1-\alpha/2}$. \Cref{main_text_efficiency_bound_thm} formally
translates the efficiency bound from the limiting model to the GMM model, so
that the usual two-sided CI based on $h(\hat{\theta}_{W(B)})$ is asymptotically
efficient in the same sense as the usual CI based on
$h(\hat{\theta}_{\Sigma^{-1}})$ discussed in \Cref{sec:correctly-specified-case}
under correct specification. Just as with the results in
\Cref{sec:correctly-specified-case}, this asymptotic result is, to our
knowledge, new. The one-sided analog follows from the results in
\Cref{efficiency_sec_append}. These results stand in sharp contrast to the
results for estimation, where the MSE improvement at small values of $\gamma$
may be substantial.

An important consequence of these results is that asymptotically valid one-sided
CIs based on shrinkage or model-selection procedures, such as one-sided versions
of the CIs proposed in \citet{andrews_hybrid_2009}, \citet{ditraglia_using_2016} or
\citet{mccloskey_2017} must have \emph{worse} excess length performance than the
usual one-sided CI based on the GMM estimator $h(\hat{\theta}_{W(B)})$ that uses
valid moments only. While it is possible to construct two-sided CIs that improve
upon the usual CI based on $h(\hat{\theta}_{W(B)})$ at particular values of
$\theta$ and $\gamma$, the scope for such improvement is smaller than the ratio
of one- to two-sided critical values. Furthermore, any such improvement must
come at the expense of worse performance at other points in the parameter
space.\footnote{Consistently with these results, in a simulation study
  considered in \citet{ditraglia_using_2016}, the post-model selection CI that
  he proposes is shown to be wider on average than the usual CI around a GMM
  estimator that uses valid moments only.} Therefore, in order to tighten CIs
based on valid moments only, it is \emph{necessary} to make a priori
restrictions on the potential misspecification of the remaining moments.

\subsubsection{General case}\label{sec:general-case}

According to the results in \Cref{sec:some-valid-some}, one must place a
priori bounds on the amount of misspecification in order to use misspecified
moments. This leads us to the general case, where we place the local
misspecification vector $c$ in some set $\mathcal{C}$ that is not necessarily a
linear subspace. One can then form a CI centered at an estimate formed from
these misspecified moments using the methods in
\Cref{sec:implementation}. In the case where $\mathcal{C}$ is convex and
centrosymmetric, \Cref{main_text_efficiency_bound_thm} shows that this CI
is near optimal, in the sense that no other CI can improve upon it by more than
a factor of $\kappa_{*}$, even in the favorable case of correct specification.
Since the width of the CI is asymptotically constant under local parameter
sequences $\theta_n\to\theta^*$ and sufficiently regular probability
distributions $P_n\to P_0$ (for example, $P_n\to P_0$ along submodels satisfying
\Cref{submodel_assump}), this also shows that the CI is near optimal
in a local minimax sense. In the general case,
\Cref{main_text_efficiency_bound_thm}, as well as the analogous results for one-sided CIs in \Cref{efficiency_sec_append} are,
to our knowledge, new.

As we discuss in \Cref{sec:implementation}, we recommend reporting results for a
range of sets $\mathcal{C}(\Mbound)$ indexed by a scalar $\Mbound$ that bounds
the magnitude of misspecification. One may instead wish to report a single CI
based on a data-driven estimate of $\Mbound$, for example, by using a
first-stage $J$ test to assess plausible magnitudes of misspecification.
Formally, one would seek a CI that is valid over
$\mathcal{C}(\overline{\Mbound})$ while improving length when in fact
$\norm{\gamma}\ll \overline{\Mbound}$, where $\overline{\Mbound}$ is some
initial conservative bound. When $\mathcal{C}$ is convex and centrosymmetric,
\Cref{main_text_efficiency_bound_thm} shows that the scope for such improvements
is limited: the average length of any such CI cannot be much smaller than the CI
that uses the most conservative choice $\overline{\Mbound}$, even when $c=0$.
The impossibility of choosing $\Mbound$ based on the data is related to the
impossibility of using specification tests to form an upper bound for $\Mbound$.
On the other hand, it is possible to obtain a lower bound for $\Mbound$ using
such tests. We develop lower CIs for $\Mbound$ in \Cref{sec:specification-test}.

\subsubsection{Cressie-Read divergences}\label{sec:corr-spec-cress}

\citet{andrews_informativeness_2018} have shown that defining misspecification
in terms of the magnitude of any divergence in the
\citet{cressie_multinomial_1984} family leads to a set $\mathcal{C}$ that
asymptotically takes the form
$\mathcal{C}=\{\Sigma^{1/2}\gamma:\|\gamma\|_2\le \Mbound\}=\{c\colon
c'\Sigma^{-1}c\le \Mbound^2\}$. The Cressie-Read family includes the Hellinger
distance used by \citet{kitamura_robustness_2013}, who consider minimax point
estimation among estimators satisfying certain regularity conditions. Since this
set $\mathcal{C}$ takes the form discussed in \Cref{remark:choice-of-C} with
$p=2$ and $B=\Sigma^{1/2}$, it follows from the discussion in
\Cref{remark:choice-of-C} that the optimal sensitivity corresponds to the GMM
estimator with weighting matrix
$(\lambda BB'+\Sigma)^{-1}=(\lambda+1)^{-1}\Sigma^{-1}$. Since this is
proportional to the weighting matrix $\Sigma^{-1}$ that is optimal under correct
specification, we obtain the same optimal sensitivity
$k_{LS,0}'=-H(\Gamma' \Sigma^{-1}\Gamma)^{-1}\Gamma'\Sigma^{-1}$ as in the
correctly specified case discussed in \Cref{sec:correctly-specified-case}. As we
show in \Cref{sec:cress-read-diverg}, this form of $\mathcal{C}$ leads to a
closed form solution for the efficiency bound $\kappa_*$.

The results above imply that
any estimator with sensitivity $k_{LS,0}$ is near-optimal for CI
construction. In line with these results, the estimator in
\citet{kitamura_robustness_2013} has sensitivity $k_{LS,0}$. Thus, the usual GMM
estimator $h(\hat{\theta}_{\Sigma^{-1}})$ and the estimator in
\citet{kitamura_robustness_2013} are both near-optimal for CI construction, even
if one allows for arbitrary CIs that are not necessarily centered at estimators
that satisfy the regularity conditions in \citet{kitamura_robustness_2013}.
Also, because they have the same sensitivity, under this form of
misspecification, the usual GMM estimator $h(\hat{\theta}_{\Sigma^{-1}})$ and
the estimator in \citet{kitamura_robustness_2013} have the same local
asymptotic minimax properties.

\subsection{Extensions: asymmetric constraints and constraints on
  \texorpdfstring{$\theta$}{theta}}

If the set $\mathcal{C}$ is convex but asymmetric (such as when $\mathcal{C}$
includes bounds on a norm as well as sign restrictions, or when $\mathcal{C}$
includes equality and sign restrictions, as in \citet{moon_estimation_2009}),
one can still apply bounds from \citet{ArKo18optimal} to the limiting model
described in \Cref{limiting_experiment_sec}. Our general asymptotic efficiency
bounds in \Cref{efficiency_sec_append} translate these results to the locally
misspecified GMM model so long as $\mathcal{C}$ is convex. Since the negative
implications for efficiency improvements under correct specification use
centrosymmetry of $\mathcal{C}$, introducing asymmetric restrictions, such as
sign restrictions, is one possible way of getting efficiency improvements at
some smaller set $\mathcal{D}\subseteq\mathcal{C}$ while maintaining coverage
over $\mathcal{C}$. We derive efficiency bounds and optimal CIs for this problem
in \Cref{efficiency_sec_append}. Interestingly, the scope for efficiency
improvements can be different for one- and two-sided CIs, and can depend on the
direction of the CI in this case. To get some intuition for this, note that, in
the instrumental variables model with a single instrument and single endogenous
regressor, sign restrictions on the covariance of an instrument with the error
term can be used to sign the direction of the bias of the instrumental variables
estimator, which is useful for forming a one-sided CI only in one direction.

Finally, while we focus on restrictions on $c$, one can also incorporate local
restrictions on $\theta$. Our general results in
\Cref{efficiency_sec_append} give efficiency bounds that cover this
case. Similar to the discussion above, these results have implications for using
prior information about $\theta$ to determine the amount of misspecification, or
to shrink the width of a CI directly. In particular, while it is possible to use
prior information on $\theta$ (say, an upper bound on $\|\theta\|$ for some norm
$\|\cdot\|$) to shrink the width of the CI, the width of the CI and the
estimator around which it is centered must depend on the a priori upper bounds
on the magnitude of $\theta$ and $c$ when this prior information takes the form
of a convex, centrosymmetric set for $(\theta', c')'$. This rules out, for
example, choosing the moments based on whether the resulting estimate for
$\theta$ is in a plausible range.

\section{Applications}\label{applications_sec}

This section describes particular applications of our approach, along with a
discussion of implementation details appropriate to each application.

\subsection{Instrumental variables}\label{iv_sec}

The single equation linear instrumental variables (IV) model is given by
\begin{equation}\label{eq:iv-structural}
  y_{i}=x_i'\theta_0+\varepsilon_i
\end{equation}
where, in the correctly specified case,
$E\varepsilon_{i}z_{i}=E(y_{i}-x_i'\theta_0)z_{i}=0$, with $z_{i}$ a $d_{g}$-vector
of instruments. This is an instance of a GMM model with
$g(\theta)=E(y_{i}-x_i'\theta)z_{i}$ and
$\hat{g}(\theta)=\frac{1}{n}\sum_{i=1}^{n}z_{i}(y_{i}-x_{i}'\theta)$.

One common reason for misspecification in this model is that the instruments do
not satisfy the exclusion restriction, because they appear directly in the
structural equation~\eqref{eq:iv-structural}, so that
$\varepsilon_{i}=z_{Ii}'\gamma/\sqrt{n}+\eta_{i}$, where $E[z_{i}\eta_{i}]=0$,
and $z_{Ii}$ corresponds to a subset $I$ of the instruments, the validity of
which one is worried about. This form of misspecification has previously been
considered in a number of papers, including \citet{hahn_iv_2005},
\citet{conley_plausibly_2012}, and \citet{andrews_measuring_2017}, among others.
Bounding the norm of $\gamma$ using some norm $\norm{\cdot}$ then leads to the
set given in \Cref{eq:C_Bgamma}, with $B=E[z_{i}z_{Ii}']$.

Although the matrix $B$ is unknown, for the purposes of estimating the optimal
sensitivity and constructing asymptotically valid CIs, it can be replaced by the
sample analog $\hat{B}=n^{-1}\sum_{i=1}^{n}z_{i}z_{Ii}'$. This does not affect
the asymptotic validity or coverage properties of the resulting CI\@. Under this
setup, the parameter $M$ bounds that magnitude of $\gamma$, the direct effect of
the instruments on the outcome. Therefore, the appropriate choice of $M$ will
depend on the plausible magnitude of these direct effects---see, for example,
\citet{conley_plausibly_2012} for examples and a discussion.

The linearity of the moment condition leads to simplifications in our
implementation in~\Cref{sec:implementation}. In Step~\ref{item:step1}, as the
initial estimator, one can use the two-stage least squares (2SLS) estimator
\begin{equation*}
  \textstyle\hat\theta_{\text{initial}}
  =\left[\left(\sum_{i=1}^{n}z_{i}x_i'\right)'\left(\sum_{i=1}^{n}z_{i}z_{i}'\right)^{-1}
    \left(\sum_{i=1}^{n}z_{i}x_i'\right)\right]^{-1}\left(\sum_{i=1}^{n}z_{i}x_i'\right)'
  \left(\sum_{i=1}^{n}z_{i}z_{i}'\right)^{-1}
  \sum_{i=1}^{n}z_{i}y_{i}.
\end{equation*}
This leads to the estimates $\hat{\Gamma}=-\frac{1}{n}\sum_{i=1}^{n}z_{i}x_i'$
and
$\hat{\Sigma}=\frac{1}{n}\sum_{i=1}^{n}
(y_{i}-x_i'\hat\theta_{\text{initial}})^2z_{i}z_{i}'$. Alternatively, if we
assume homoskedasticity, we can use the estimator
$\hat\Sigma_{H}=\frac{1}{n}\sum_{i=1}^{n}(y_{i}-x_i'\hat\theta_{\text{initial}})^2\cdot
\frac{1}{n}\sum_{i=1}^{n}z_{i}z_{i}'$. In the correctly specified case, the 2SLS
estimator is only optimal under homoskedasticity, while the GMM estimator with
weighting matrix $\hat\Sigma^{-1}$ is optimal in general. Due to concerns with
finite sample performance, however, it is common to use the 2SLS estimator along
with standard errors based on a robust variance estimate, even when
heteroskedasticity is suspected. Mirroring this practice, one can use
$\hat{\Sigma}_{H}$ when forming the optimal sensitivity
$\hat{k}_{\lambda^{*}_{M}}$ and worst-case bias in Step~\ref{item:step2}, but
use the robust variance estimate $\hat{\Sigma}$ in Step~\ref{item:step3} when
forming the final CI in \Cref{eq:optimal-CI-formula}. The CI will then be
optimal under homoskedasticity, but it will remain valid under
heteroskedasticity, just like the usual CI based on 2SLS with robust standard
errors in the correctly specified case.

If the parameter of interest linear in $\theta$, $h(\theta)=H\theta$, then the
one-step estimator $\hat{h}_{\lambda^{*}_{M}}$ in Step~\ref{item:step3} does not
depend on the choice of the initial estimator (except possibly through the
estimate of $\Sigma$ when forming the desired sensitivity):
\begin{equation*}
  \begin{split}
    \hat{h}_{\lambda^{*}_{M}} & =H\hat\theta_{\text{initial}}+
    \hat{k}_{\lambda^{*}_{M}}'\frac{1}{n}\sum_{i=1}^{n}(y_{i}-x_i'\hat\theta_{\text{initial}})z_{i}
    = \hat{k}_{\lambda^{*}_{M}}'\frac{1}{n}\sum_{i=1}^{n}y_{i}z_{i}
    +\left(H-\hat{k}_{\lambda^{*}_{M}}'\frac{1}{n}\sum_{i=1}^{n}z_{i}x_i'\right)\hat\theta_{\text{initial}}\\
    & =\hat{k}_{\lambda^{*}_{M}}'\frac{1}{n}\sum_{i=1}^{n}y_{i}z_{i},
  \end{split}
\end{equation*}
where the second line follows since the sensitivities
$\hat{k}_{\lambda^{*}_{M}}$ satisfy
$H=-\hat{k}_{\lambda^{*}_{M}}'\hat\Gamma=
\hat{k}_{\lambda^{*}_{M}}'\frac{1}{n}\sum_{i=1}^{n}z_{i}x_i'$. Since the
estimator $\hat{h}_{\lambda^{*}_{M}}$ is linear, the worst-case bias
calculations are the same under global misspecification, when the magnitude $M$
of $\gamma$ in \Cref{eq:C_Bgamma} grows at the rate $\sqrt{n}$. By using
variance estimates that are valid under global misspecification in place of the
variance estimate
$\hat{k}_{\lambda^{*}_{M}}'\hat\Sigma \hat{k}_{\lambda^{*}_{M}}$ in the CI
construction, one can ensure that the resulting CI also remains valid under
global misspecification. See \Cref{sec:cis-based-missp} for details.

\begin{remark}\label{iv_controls_remark}
  This framework can also be used to incorporate a priori restrictions on the
  magnitude of coefficients on control variables in an instrumental variables
  regression. Suppose that we have a set of controls $w_{i}$, that appear in the
  structural equation~\eqref{eq:iv-structural}, so that
  $y_{i}=x_i'\theta+w_i'\gamma/\sqrt{n}+\epsilon_{i}$, and $\epsilon_{i}$ is
  uncorrelated with $w_{i}$ as well as vector of instruments $\tilde{z}_{i}$. If
  one is willing to restrict the magnitude of the coefficient vector $\gamma$,
  so that $\norm{\gamma}\leq \Mbound$, then one can add $w_{i}$ to the original
  vector of instruments $\tilde{z}_{i}$, $z_{i}=(\tilde{z}_{i}', w_{i}')'$. For
  example, if one is concerned with functional form misspecification, one can
  define the control variables to be higher order series terms. We then obtain
  the misspecified IV model with the set $\mathcal{C}$ given
  by~\eqref{eq:C_Bgamma}, with $B=E[z_{i}w_{i}']$. Thus, we can interpret this
  model as a locally misspecified version of a model with $w_{i}$ used as an
  excluded instrument.
\end{remark}

\begin{remark}
  Instead of bounding the coefficient vector $\gamma$, one can alternatively bound
  the magnitude of the direct effect $z_{Ii}'\gamma$. If all instruments are
  potentially invalid, $z_{Ii}=z_{i}$, and one sets
  $\mathcal{C}=\{\gamma\colon E[(z_{i}'\gamma)^{2}]\leq \Mbound\}$, then under
  homoscedasticity, this corresponds to the case discussed in
  \Cref{sec:corr-spec-cress}, where the uncertainty from potential
  misspecification is exactly proportional to the asymptotic sampling
  uncertainty in $\hat{g}(\theta)$. Consequently, in this case the optimal
  sensitivity is the same as that given by the 2SLS estimator.
\end{remark}

\subsection{Omitted variables bias in linear regression}

Specializing to the case where $z_{i}=x_i$, the misspecified IV model of
\Cref{iv_sec} gives a misspecified linear regression model as a special case.
This can be used to assess sensitivity of regression results to issues such as
omitted variables bias. In particular, consider the linear regression model
\begin{equation*}
  y_{i}=x_i'\theta+w_i^*+\tilde \varepsilon_i, \quad Ex_i\tilde\varepsilon_i=0
\end{equation*}
where $x_i$ and $y_{i}$ are observed and $w_i^*$ is a (possibly unobserved)
omitted variable. Correlation between $w_i^*$ and $x_i$ will lead to omitted
variables bias in the OLS regression of $y_{i}$ on $x_i$. If $w_{i}^{*}$ is
unobserved, then we obtain our framework by making the assumption
$\sqrt{n}Ew_{i}^{*}x_{i}\in \mathcal{C}$, for some set $\mathcal{C}$, and
letting $\hat{g}(\theta)=\frac{1}{n}\sum_{i=1}^{n}x_{i}(y_{i}-x_{i}'\theta)$.
This setup can also cover choosing between different sets of control variables.
Suppose that $w_{i}^{*}=w_{i}'\gamma$, where $w_{i}$ is a vector of observed
control variables that the researcher is considering not including in the
regression. If $\gamma$ is unrestricted, then by the results in
\Cref{sec:some-valid-some}, the long regression of $y_{i}$ on both $x_{i}$ and
$w_{i}$ yields nearly optimal CIs. If one is willing to restrict the magnitude
of $\gamma$, it is possible to tighten these CIs, with the setting reducing to
that in \Cref{iv_controls_remark}, with $z_{i}=(x_{i}, w_{i}')'$. The same
framework can be used to incorporate selection bias by defining $w_i^*$ to be
the inverse Mills ratio term in the formula for
$E[y_{i}\mid x_i, i\;\text{observed}]$ in \citet{heckman_sample_1979}.

\subsection{Functional form misspecification}\label{sec:functional-form-misspecification}

Our setup allows for misspecification in moment conditions arising from
functional form misspecification.  To apply our setup, one must relate this
misspecification to the bounds $\mathcal{C}$ on the moment conditions at the
true parameter value.  One approach to bounding functional form misspecification
is to use smoothness conditions from the nonparametric statistics literature,
such as bounds on derivatives \citep[see, for example,][for an introduction to
this literature]{tsybakov09}.  Since these sets are typically convex (taking a convex
combination of two functions that satisfy a given bound on a given derivative
gives a function that also satisfies this bound), they typically lead
to convex sets $\mathcal{C}$, so that our framework can be applied.

As a simple example, consider a nonparametric IV model with discrete covariates:
\begin{equation*}
  E[y_i-m(x_i)\mid z_i]=0.
\end{equation*}
Suppose $x$ takes values in the finite set
$\mathcal{X}=\{\tilde x_1,\ldots, \tilde x_{N_x}\}$ and $z_i$ takes values in
the finite set $\mathcal{Z}=\{\tilde z_1,\ldots, \tilde z_{N_z}\}$. This setting
was considered by \citet{freyberger_identification_2015}, who place only
nonparametric smoothness or shape restrictions on the unknown function $m$. To
see the connection with our setting, we note that such restrictions can be
interpreted as bounds on specification error from a parametric model. If one
models these restrictions as local to a parametric family, one obtains our
setting. In particular, let $m(x_i)=f(x_i, \theta_0)+n^{-1/2}r(x_i)$,
$r\in\mathcal{R}$, where $\mathcal{R}$ is a nonparametric smoothness class. For
example, if $x_i$ is univariate, we can let
$f(x_i, \theta)=\theta_1+\theta_2x_i$ and define $\mathcal{R}$ to be the class
of functions with $r(0)=r'(0)=r''(0)$ and second derivative bounded by some
constant $\Mbound$. This is equivalent to placing the bound $n^{-1/2}\Mbound$ on
the second derivative of $m(\cdot)$, which corresponds to a Hölder smoothness
class. We can then map this to a misspecified GMM model, with the $j$th element
of the moment function given by
$g_j(x_i, y_i, \theta)=(y_i-f(x_i, \theta_0))I(z_i=\tilde z_j)$ and $j$th
element of the misspecification vector $c$ given by
$Er(x_i)I(z_i=\tilde z_j)=\sum_{\tilde x\in\mathcal{X}} r(\tilde x)
P(x_i=\tilde{x}, z_i=\tilde z_j)$. Stacking these equations, we see that
$c=B\gamma$ where $B$ is a matrix composed of the elements
$P(x_i=\tilde x, z_i=\tilde z_j)$ and
$\gamma=(r(\tilde x_1), \ldots, r(\tilde x_{N_x}))'$. As with the IV setting in
\Cref{iv_sec}, $B$ is unknown, but can be replaced by a consistent estimate
based on the sample analogue. So long as the set $\mathcal{R}$ is convex, we
obtain convex restrictions on $\gamma$ and therefore $c$, so that our framework
applies.

This example brings up an important point about the interpretation of
$h(\theta)$. If the object of interest is a functional of
$m(x)=f(x, \theta_0)+n^{-1/2}r(x)$, then we will need to allow the object of
interest $h(\cdot)$ to depend on the misspecification vector directly, as well
as on $\theta$. As discussed at the beginning of \Cref{general_results_sec},
this falls into a mild extension of our framework. Alternatively, under a
suitable parametrization of $f$ and $r$, it is often possible to define the
object of interest to be function of $\theta$ alone. For example, if we are
interested in the derivative $m'(x_0)$ at a particular point $x_0$ under a bound
on the second derivative of $m(\cdot)$, we can let
$f(x, \theta)=\theta_1+\theta_2 x$ and define $\mathcal{R}$ to be the class of
functions with $r(x_0)=r'(x_0)=r''(x_0)=0$ and second derivative bounded by
$\Mbound$. Then $m'(x_0)=\theta_2$.

\subsection{Treatment effect extrapolation}\label{sec:treatm-effect-extr}

Often, the average effect of a counterfactual policy on a particular subset of a
population is of interest, and we would like to weaken the assumptions under
which this effect is point-identified. We have available estimates
$\hat{\tau}=(\hat\tau_{1}, \dotsc, \hat{\tau}_{m})'$ of the parameter $\tau$, with
$\hat{g}(\theta)=\hat{\tau}-A\theta$, and $g(\theta)=\tau-A\theta$ for a known
matrix $A$. We would like to extrapolate from these estimates to learn about the
parameter of interest $h(\theta_{0})=H\theta_{0}$. The potential extrapolation
bias is captured by the assumption that $g(\theta_{0})\in\Cglob_{n}$, some
convex set.

Note that, because the moment condition is linear in the parameter of interest,
asymptotic validity of our CIs does not require that the set $\Cglob_{n}$ takes
the form $\Cglob_{n}=\mathcal{C}/\sqrt{n}$. CIs given in \Cref{h_ci_eq} based on
linear estimators of the form $\hat{h}=k'\hat{\tau}$ (such as minimum distance
estimators), with $\mathcal{C}=\sqrt{n}\Cglob_{n}$ are valid under both local
and global misspecification (i.e.\ under the assumption that the set
$\Cglob_{n}$ is fixed as $n\to\infty$).

One example that falls into this setup are differences-in-differences designs
when the parallel trends assumption is violated. Here there are $m$ time
periods, with treatment taking place in period $T_{0}$. The $(m-T_{0})$-vector
$\theta$ corresponds to a vector of dynamic treatment effects on the treated,
$A\theta=(0', \theta')'$, and $g(\theta_{0})$ is a vector of trend differences
between the treated and untreated, with $g(\theta_{0})=0$ if the parallel trends
assumption holds. \citet{RaRo19} build on the framework in this paper to develop
CIs in this setting.

Another example that has been of recent interest involves nonseparable models
with endogeneity. Under conditions in \citet{imbens_identification_1994} and
\citet{heckman_structural_2005}, instrumental variables estimates
$\hat{\tau}_{m}$ with different instruments are consistent for average treatment
effects for different subpopulations. A recent literature
\citep{kowalski_doing_2016,brinch_beyond_2017,mogstad_using_2017} has focused on
using assumptions on treatment effect heterogeneity to extrapolate these
estimates to other populations. Our framework applies if these assumptions
amount to placing the differences between the estimated treatment effects and
the effect of interest in a known convex set.

\section{Empirical application}\label{empirical_sec}

This section illustrates the confidence intervals developed in
\Cref{general_results_sec} in an empirical application to automobile
demand based on the data and model in \citet[BLP hereafter]{blp95}. We use the
version of the model as implemented by \citet{andrews_measuring_2017}, who
calculate the asymptotic bias of the GMM estimator with weighting matrix
$\Sigma^{-1}$ under local misspecification in this setting.\footnote{The
  dataset for this empirical application has been downloaded from the
  \citet{andrews_measuring_2017} replication files, available at
  \url{https://doi.org/10.7910/DVN/LLARSN}.}

\subsection{Model description and implementation}\label{sec:model-description}

In this model, the utility of consumer $i$ from purchasing a vehicle $j$,
relative to the outside option, is given by a random-coefficient logit model
$U_{ij}=\sum_{k=1}^{K}x_{jk}({\beta}_{k}+\sigma_{k}v_{ik})-\alpha
p_{j}/y_{i}+\xi_{j}+\epsilon_{ij}$, where $p_{j}$ is the price of the vehicle,
$x_{jk}$ the $k$th observed product characteristic, $\xi_{j}$ is an unobserved
product characteristic, and $\epsilon_{ij}$ is has an i.i.d.\ extreme value
distribution. The income of consumer $i$ is assumed to be log-normally
distributed, $y_{i}=e^{m+\varsigma v_{i0}}$, where the mean $m$ and the variance
$\varsigma$ of log-income are assumed to be known and set to equal to estimates
from the Current Population Survey. The unobservables
$v_{i}=(v_{i0}, \dotsc, v_{iK})$ are i.i.d.\ standard normal, while the
distribution of the unobserved product characteristic $\xi_{j}$ is unrestricted.

The marginal cost $mc_{j}$ for producing vehicle $j$ is given by
$\log(mc_{j})=w_{j}'\nu+\omega_{j}$, where $w_{j}$ are observable
characteristics, and $\omega_{j}$ is an unobservable characteristic. The full
vector of model parameters is given by
$\theta=(\sigma', \alpha, \beta', \nu')'$. Given this vector, and given a vector
of unobservable characteristics, one can compute the market shares implied by
utility maximization, which can be inverted to yield the unobservable
characteristic as a function of $\theta$, $\xi_{j}(\theta)$. One can similarly
invert the unobserved cost component, writing it as a function of $\theta$,
$\omega_{j}(\theta)$, under the assumption that firms set prices to maximize
profits in a Bertrand-Nash equilibrium. Given a vector $z_{dj}$ of demand-side
instruments, and a vector $z_{sj}$ of supply-side instruments, this
yields the moment condition $g(\theta)=E[\hat{\gamma}(\theta)]$, where
\begin{equation*}
  \hat{g}(\theta)=
  \frac{1}{n}\sum_{j=1}^{n} \begin{pmatrix}
    z_{dj}\xi_{j}(\theta)\\
    z_{sj}\omega_{j}(\theta)
  \end{pmatrix}.
\end{equation*}

The BLP data spans the period 1971 to 1990, and includes information on
essentially all $n=999$ models sold during that period (for simplicity, we have
suppressed the time dimension in the description above). There are 5 observable
characteristics $x_{j}$: a constant, horsepower per 10 pounds of weight (HPWt),
a dummy for whether air-conditioning is standard (Air), mileage per 10 dollars
(MP\$) defined as MPG over average gas price in a given year, and car size
(Size), defined as length times width. The vector $z_{dj}$ consists of $x_{j}$,
plus the sum of $x_{j}$ across models other than $j$ produced by the same firm,
and for rival firms. There are 6 cost variables $w_{j}$: a constant, log of
HPWt, Air, log of MPG, log of Size, and a time trend. The vector
$z_{sj}$ consists of these variables, MP\$, and the sums of $w_{j}$ for own-firm
products other than $j$, and for rival firms. After excluding collinear
instruments, this gives a total of $d_{g}=31$ instruments, 25 of which are
excluded to identify $d_{\theta}=17$ model parameters. The parameter of interest
is average markup,
$h(\theta)=\frac{1}{n}\sum_{j}(p_{j}-mc_{j}(\theta))/p_{j}$.

One may worry that some of these instruments are invalid, because elements of
$z_{dj}$ or $z_{sj}$ may appear directly in the utility or cost function with
the coefficient on the $\ell$th element given by
$\delta_{d\ell}\gamma_{d\ell}/\sqrt{n}$ or
$\delta_{s\ell}\gamma_{s\ell}/\sqrt{n}$, respectively. Here $\delta_{d\ell}$ and
$\delta_{s\ell}$ are scaling constants so that, given the sample size $n=999$ at
hand, $\gamma_{d\ell}$ has the interpretation that the consumer willingness to
pay for one standard deviation change in the $\ell$th demand-side instrument
$z_{dj\ell}$ is $\gamma_{d\ell}\%$ of the average 1980 car price, and changing
the $\ell$th supply-side instrument $z_{sj\ell}$ by one standard deviation
changes the marginal cost by $\gamma_{s\ell}$\% of the average car price.
\citet{andrews_measuring_2017} use this scaling in their sensitivity analysis,
and they discuss economic motivation for concerns about this form of
misspecification. By way of comparison, the estimates of the parameters $\beta$
and $\nu$ in the utility and cost function imply that consumers are on average
willing to pay between 2.2 and 10.0\% of the average car price for a standard
deviation change in one of the included car characteristics, and that a standard
deviation change in the included cost characteristics changes the marginal cost
by between 3.8 and 11.1\% of the average car price. We therefore interpret
specifications of the set $\mathcal{C}$ that allow for
$\abs{\gamma_{s\ell}}\approx1\text{--}2$ (or
$\abs{\gamma_{d\ell}}\approx1\text{--}2$) as allowing for moderate amounts of
misspecification in the $\ell$th supply-side (or demand-side) instrument.

We follow the implementation in \Cref{sec:implementation}. Given a set $I$ of
potentially invalid instruments, we follow~\Cref{remark:choice-of-C} and consider
sets $\mathcal{C}$ of the form~\eqref{eq:C_Bgamma}, with $\norm{\cdot}$
corresponding to an $\ell_{p}$ norm with $p\in\{2,\infty\}$, and
$B=\tilde{B}_{I}\cdot \#I^{1/p}$, where $\tilde{B}_{I}$ is given by the columns of
\begin{equation*}
  \tilde{B}=\bigl(\begin{smallmatrix}
    E[z_{dj}z_{dj}']\operatorname{diag}(\delta_{d\ell})& 0\\
    0 & E[z_{sj}z_{sj}']\operatorname{diag}(\delta_{d\ell})
  \end{smallmatrix}\bigr),
\end{equation*}
and $\#{I}$ is the number of potentially invalid instruments. The scaling by
$(\#{I})^{1/p}$ ensures that the vector $\gamma=\Mbound(1,\dotsc,1)'$ is always
included in the set. \citet{andrews_measuring_2017} report the sensitivity of
the usual GMM estimator under this form of misspecification, considering
misspecification in each instrument individually (so that $I$ contains a single
element), and setting $M=1$. However, if one is concerned about the validity of
several instruments, it is natural to allow $I$ to contain all instruments the
validity of which is questionable. In our analysis, we vary the set of
potentially misspecified instruments. We also vary $M$ in order to assess the
sensitivity of conclusions to different amounts of misspecification. As we will
see below, different choices of $\mathcal{C}$ lead to different sensitivities
for the optimal estimator, and using the optimal sensitivity can reduce the
width of the CI substantially relative to CIs based on the usual GMM estimator.

We use the estimate $\hat{\theta}_{\text{initial}}$ that corresponds to the GMM
estimator based on the weight matrix that's optimal under correct specification,
as reported in \citet{andrews_measuring_2017}, and the estimates $\hat{\Gamma}$,
$\hat{H}$, and $\hat{\Sigma}$ are computed following Step~\ref{item:step1} of
the implementation.

\subsection{Results}

To illustrate that using the sensitivity that is optimal under local
misspecification can yield substantially tighter CIs, \Cref{fig:l2-blp}
plots the confidence intervals based on the optimal sensitivity, as well as
those based on $\hat{\theta}_{\text{initial}}$ under different sets $I$ of
potentially invalid instruments and $\ell_{2}$ constraints on $\gamma$. It is
clear from the figure that using the optimal sensitivity yields substantially
tighter confidence intervals, relative to simply adjusting the usual CI by using
the critical value $\cv_{\alpha}(\cdot)$ to take into account the potential bias
of $h(\hat{\theta}_{\text{initial}})$, by as much as a factor of 3.4.
The intuitive reason for this is that by adjusting the sensitivity of the
estimator, it is possible to substantially reduce its bias at little cost in
terms of an increase in variance. Thus, for example, while the CI for the
average markup based on the estimate $\hat{\theta}_{\text{initial}}$ is
essentially too wide to be informative when the set of potentially invalid
instruments corresponds to all excluded instruments, the CI based on the optimal
sensitivity, $[46.0, 66.0]\%$,
is still quite tight.

If a researcher is ex ante unsure what form of misspecification one should worry
about, as a sensitivity check, it is useful to consider the effects of different
forms of misspecification. In \Cref{fig:l2_lI-blp}, we plot the optimal
confidence intervals for different subsets of invalid instruments, under both
$\ell_{2}$ and $\ell_{\infty}$ norms for $\gamma$. Although the choice of norm
matters when the number of potentially misspecified instruments is greater than
one, the results are qualitatively similar. Comparing the results for different
choices of the set of potentially invalid instruments suggests that allowing
supply-side instruments to be invalid generally increases the average markup
estimate, while allowing demand-side instruments to be invalid has the opposite
effect.

As it may be ex ante unclear what magnitude of misspecification is reasonable to
allow for, as discussed in \Cref{sec:implementation}, it is useful to plot the
optimal CI for multiple choices of $\Mbound$. We do this in \Cref{fig:l2_M-blp}
for $p=2$, and we allow all excluded instruments to be potentially invalid. One
can see that while the CI is unstable for values of $\Mbound$ smaller than about
$0.4$, for larger values of $\Mbound$, the estimate is quite stable and equal to
about 50\%. Even at $\Mbound=2$, one rejects the hypothesis that the optimal
markup is equal to the initial estimate
$h(\hat{\theta}_{\text{initial}})=32.7\%$. This suggests that ignoring
misspecification in the BLP model likely leads to a downward bias in the
estimate of the average markup. At the same time, it is possible to obtain
reasonably tight CIs for the average markup even under a moderate amount of
misspecification.

The $J$-statistic for testing the hypothesis that all moments are correctly
specified equals 426.7. Consequently, the hypothesis is rejected at the usual
significance levels. Furthermore, it can be seen from \Cref{fig:l2_lI-blp} that
the CIs for ``all excluded'' (that allow all excluded instruments to be invalid
at $\Mbound=1$), and ``all excluded demand'' (that assume validity of
supply-side instruments) do not overlap. This implies that either the
misspecification in the demand-side instruments must be greater than 1\% of the
average care price ($\Mbound=1$), or else the supply-side instruments must also
be invalid. \Cref{tab:Jtest} implements the specification test from
\Cref{sec:specification-test} that gives lower CI $[M_{\min}, \infty]$ for $M$.
The results suggest that if one assumes only a subset of the instruments is
invalid, the misspecification in the potentially invalid instruments must be
quite large. For example, if we assume that all instruments are valid except
potentially the demand-side instruments based on rival firms' product
characteristics, then the misspecification in these instruments must be greater
than $\Mbound=5.36$. If we allow all instruments to be invalid, then
$\Mbound\geq 1.13$.

Finally, to illustrate the implication of
\Cref{main_text_efficiency_bound_thm} that one cannot substantively
improve upon the CIs that we construct, we calculate the efficiency bound
$\kappa_{*}$ for these CIs in \Cref{tab:efficiency}. The table shows that
the bound is at least as high as the efficiency bound for the usual CI under
correct specification (given in~\eqref{eq:kappa-linear-subspace} and equal to
84.99\% at $\alpha=0.05$). Thus, the asymptotic scope for improvement over the
CIs reported in \Cref{fig:l2_lI-blp} at particular values of $\theta$ and
$c=0$ is even smaller than the scope for improvement over the usual CI at
particular values of $\theta$ under correct specification.

\appendix

\section{Details of calculations}\label{sec:details-calculations}

This appendix contains the details of calculations for particular sets
$\mathcal{C}$ discussed in \Cref{sec:corr-spec-cress} and
\Cref{remark:choice-of-C}.

\subsection{Cressie-Read divergences}\label{sec:cress-read-diverg}

Consider the problem~\eqref{half_modulus_eq} under constraints of the
form $\{c\colon c\Sigma^{-1}c\leq \Mbound^{2}\}$. The Lagrangian for this problem can
be written as
\begin{equation*}
2H\theta + \lambda_{1}(\delta^{2}/4-
 (c-\Gamma \theta)'\Sigma^{-1}(c-\Gamma \theta))
+\lambda_{2}(\Mbound^{2}-c'\Sigma^{-1}c).
\end{equation*}
(we multiply the objective function by $2$ so that its optimized value equals
$\omega(\delta)$). The first-order conditions imply that at optimum,
$c=\frac{\lambda_{1}}{\lambda_{1}+\lambda_{2}}\Gamma\theta$, and
$\theta=\frac{\lambda_{1}+\lambda_{2}}{\lambda_{1}\lambda_{2}}
(\Gamma'\Sigma^{-1}\Gamma)^{-1}H'$. Plugging these expressions into the
constraints yields
$\Mbound^{2}=H(\Gamma'\Sigma^{-1}\Gamma)^{-1}H'/\lambda_{2}^{2}$ and
$\delta^{2}/4=H(\Gamma'\Sigma^{-1}\Gamma)^{-1}H'/\lambda_{1}^{2}$. Since
$H(\Gamma'\Sigma^{-1}\Gamma)^{-1}H'=k_{LS,0}'\Sigma k_{LS,0}$, solving for
$\lambda_{1}$ and $\lambda_{2}$, and plugging into the expression for $\theta$
yields
\begin{equation*}
  \theta=\frac{\delta/2+\Mbound}{\sqrt{k_{LS,0}'\Sigma k_{LS,0}}}\cdot
(\Gamma'\Sigma^{-1}\Gamma)^{-1}H'.
\end{equation*}
Thus,
$\omega(\delta)=2H\theta= (\delta+2\Mbound)\sqrt{k_{LS,0}'\Sigma k_{LS,0}}$.
With this form of $\omega$, the bound in~\eqref{kappa_conv_cs_eq} becomes
\begin{equation*}
  \kappa_{*}(H, \Gamma, \Sigma, \mathcal{C})=\frac{(1-\alpha)(z_{1-\alpha}+\Mbound)
    +\phi(z_{1-\alpha})}{\cv_{\alpha}(\Mbound)}.
\end{equation*}
This efficiency equals at least $\min\{\kappa^{L}_{*, \alpha}, 1-\alpha\}$, where
$\kappa^{L}_{*, \alpha}=((1-\alpha)z_{1-\alpha}+\phi(z_{1-\alpha}))/z_{1-\alpha/2}$
denotes the efficiency given in~\eqref{eq:kappa-linear-subspace} when
$\mathcal{C}$ is a linear subspace. To show this, observe that
$\cv_{\alpha}'(\Mbound)\leq 1$ for all $\Mbound\geq 0$. Therefore, the derivative of
\begin{equation*}
  (1-\alpha)(z_{1-\alpha}+\Mbound)+\phi(z_{1-\alpha})-\min\{1-\alpha, \kappa^{L}_{*, \alpha}\}\cv_{\alpha}(\Mbound)
\end{equation*}
with respect to $\Mbound$, given by
$1-\alpha-\min\{1-\alpha, \kappa^{L}_{*, \alpha}\}\cv_{\alpha}'(\Mbound)$, is
always non-negative. Since the expression in the above display equals
$(\kappa^{L}_{*, \alpha}-\min\{\kappa^{L}_{*,
  \alpha},1-\alpha\})z_{1-\alpha/2}\geq 0$ at $\Mbound=0$, it follows that it is
always non-negative. Rearranging it then yields
$\kappa_{*}(H, \Gamma, \Sigma, \mathcal{C})\geq \min\{\kappa^{L}_{*,
  \alpha},1-\alpha\}$. Furthermore, it follows from
\Cref{eq:one-sided-efficiency} below that the efficiency of one-sided CIs at
$c=0$ is given by $\kappa_{*}^{\operatorname{OCI}, \beta}=1$.

\subsection{\texorpdfstring{$\ell_{p}$}{lp} Bounds}\label{sec:lp-bounds-appendix}

We now consider the form of the optimal sensitivity when the set
$\mathcal{C}=\mathcal{C}(\Mbound)$ takes the form in \Cref{eq:C_Bgamma}, and
$\norm{\cdot}$ corresponds to an $\ell_{p}$ norm, as discussed in
\Cref{remark:choice-of-C}. First, we explain the connection with penalized
estimation. Since $c=B\gamma$, one can write the approximately linear
model~\eqref{limit_experiment_eq} as
\begin{equation*}
  Y=-\Gamma \theta+B\gamma+\Sigma^{1/2}\varepsilon,
\end{equation*}
which one can think of as a regression model with correlated errors, design
matrix $(-\Gamma, B)$, and coefficient vector $(\theta', \gamma')'$. With this
interpretation, it is clear that if the number of regressors
$d_\theta+d_{\gamma}$ is greater than the number of observations $d_g$, the
constraint on the norm of $\gamma$ is necessary to make the model informative.
Using the observation from \Cref{remark:choice-of-C} that
$\maxbias_{\mathcal{C}(1)}(k)=\norm{B'k}_{p'}$, it follows that the optimization
problem~\eqref{eq:bias-variance-optimization} under $\mathcal{C}=\mathcal{C}(1)$
is equivalent to
\begin{equation}\label{eq:bias-variance-lp}
  \min_{k} k'\Sigma k\qquad \text{s.t.}\quad H=-k'\Gamma \qquad\text{and}\quad
  \norm{B'k}_{p'} \leq \overline{B}.
\end{equation}
We now specialize the results to the cases $p=2$ and $p=\infty$. We discuss the
case $p=1$ in a working paper version of this paper \citep{ArKo20apr_sensitivity}.

\subsubsection{\texorpdfstring{$p=2$}{p=2}}

In this case, the Lagrangian form of~\eqref{eq:bias-variance-lp} becomes
\begin{equation*}
  \min_{k} k'(\Sigma +\lambda BB')k\qquad \text{s.t.}\quad H=-k'\Gamma,
\end{equation*}
with the Lagrange multiplier $\lambda$ giving the relative weight on bias.
Optimizing this objective is isomorphic to deriving the minimum variance
unbiased estimator of $H\theta$ in a regression model with design matrix
$-\Gamma$ and variance $\Sigma+\lambda BB'$, so the Gauss-Markov theorem implies
that the optimal sensitivities are
$k_{\lambda}'=-H(\Gamma'W_{\lambda}\Gamma)^{-1}\Gamma'W_{\lambda}$ where
$W_{\lambda}=[\Sigma+\lambda BB']^{-1}$.

\subsubsection{\texorpdfstring{$p=\infty$}{p=infinity}}
Write the Lagrangian form of~\eqref{eq:bias-variance-lp} as
\begin{equation}\label{eq:pinf-lagrangian}
  \min_{k} k'\Sigma k/2+\lambda \norm{B'k}_{1}\qquad \text{s.t.}\quad H=-k'\Gamma.
\end{equation}
It will be convenient to transform the problem so that the $\ell_{1}$ constraint
only involves $d_{\gamma}$ elements of $k$. Let
\begin{equation}\label{eq:T-definition}
  T=\begin{pmatrix}
    B_{\perp}'\\
    (B'B)^{-1}B'\end{pmatrix}, \qquad T^{-1}=
  \begin{pmatrix}
    B_{\perp}&B
  \end{pmatrix},
\end{equation}
where $B_{\perp}$ is an orthonormal matrix that's orthogonal to $B$. Then, since
$TB=(0,I_{d_{\gamma}})'$, the above minimization problem is equivalent to the
problem
\begin{equation*}
  \min_{\tk}\tk'\tSigma\tk/2+\tlambda\sum_{i\in
    I}\abs{\tk_{i}}
  \qquad\text{s.t.}\quad H'=-\tGamma'\tk,
\end{equation*}
where $\tk={T'}^{-1}k$, $\tSigma=T\Sigma T'$, $\tGamma=T\Gamma$, and
$I=\{d_{g}-d_{\gamma}, \dotsc, d_{g}\}$ indexes the last $d_{\gamma}$ elements
of $\tk$.

To minimize the above display and give the solution path as $\tlambda$ varies,
we use arguments similar to those in Theorem 2 of \citet{rosset_piecewise_2007}.
For $i\in I$, write $\tk_{i}=\tk_{+, i}-\tk_{-, i}$, where
$\tk_{+, i}=\max\{\tk_{i},0\}$ and $\tk_{-, i}=-\min\{\tk_{i},0\}$. We minimize
the objective function in the preceding display over
$\{\tk_{+, i}, \tk_{-, i}, \tk_{j}\colon i\in I, j\not\in I\}$ subject to the
constraints $\tk_{+, i}\geq 0$ and $\tk_{-, i}\geq 0$. Let $\mu$ denote a vector
of Lagrange multipliers on the restriction $-H'=\tGamma' \tk$. Then the
Lagrangian can be written as
\begin{equation*}
  \tk'\tSigma
  \tk/2+\tlambda\sum_{i\in I}(\tk_{+, i}+\tk_{-, i})+\mu'(H'+\tGamma' \tk)-\sum_{i\in
    I}(\tlambda_{+, i}\tk_{+, i} +\tlambda_{-, i}\tk_{-, i}).
\end{equation*}
The first-order conditions are given by
\begin{align}
  e_{i}'\tSigma \tk+e_{i}'\tGamma \mu&=0 & i&\in I^{C},\label{eq:foc-iv}\\
  e_{i}'\tSigma \tk+e_{i}'\tGamma \mu+\tlambda&=\tlambda_{+, i}&i&\in I,\label{eq:foc-i+}\\
  -(e_{i}'\tSigma \tk+e_{i}'\tGamma\mu)+\tlambda&=\tlambda_{-, i}&i&\in I.\label{eq:foc-i-}
\end{align}
The complementary slackness conditions are given by $\tlambda_{+, i}\tk_{+, i}=0$
and $\tlambda_{-, i}\tk_{-, i}=0$ for $i\in I$, and the feasibility constraints
are $\tlambda_{+, i}\geq 0$, $\tlambda_{-, i}\geq 0$ for $i\in I$ and
$-H'=\tGamma' \tk$.

Let $\mathcal{A}^{C}=\{i\colon i\in I, \tk_{i}=0\}$, and let
$\mathcal{A}=\{i\colon i\not\in\mathcal{A}^{C}\}$ denote the set of active
constraints. Let $s$ denote a vector of length $\abs{\mathcal{A}}$ with elements
$s_{i}=\sign(\tk_{i})$ if $i\in I$ and $s_{i}=0$ otherwise.

The slackness and feasibility conditions imply that if for $i\in I$,
$\tk_{i}> 0$, then $\tlambda_{+, i}=0$, and if $\tk_{i}<0$ or $\tlambda_{-, i}=0$.
It therefore follows from~\eqref{eq:foc-i+} and~\eqref{eq:foc-i-} that
$e_{i}'\tSigma \tk+e_{i}'\tGamma \mu=-\sign(\tk_{i})\tlambda=-s_{i}\tlambda$. We
can combine this condition with~\eqref{eq:foc-iv} and write
\begin{equation}\label{eq:foc-A}
  e_{i}'\tSigma \tk+e_{i}'\tGamma\mu=-s_{i}\tlambda, \qquad i\in\mathcal{A}.
\end{equation}
On the other hand, if $i\in\mathcal{A}^{C}$, then since $\tlambda_{+, i}$ and
$\tlambda_{-, i}$ are non-negative, it follows from~\eqref{eq:foc-i+}
and~\eqref{eq:foc-i-} that
\begin{equation}\label{eq:AC-constraint}
  \abs{e_{i}'\tSigma \tk+e_{i}'\tGamma \mu}\leq \tlambda=\abs{e_{j}'\tSigma \tk+e_{j}'\tGamma\mu},\qquad
  i\in\mathcal{A}^{C}, j\in\mathcal{A}.
\end{equation}
Let $\tk_{\mathcal{A}}$ denote the subset of $\tk$ corresponding to the active
moments, $\tGamma_{\mathcal{A}}$ denote the corresponding rows of $\tGamma$, and
$\tSigma_{\mathcal{A}\mathcal{A}}$ the corresponding submatrix of $\tSigma$.
Then we can write the condition~\eqref{eq:foc-A} together with the feasibility
constraint $\tGamma' \tk=-H'$ compactly as
\begin{equation*}
  \begin{pmatrix}
    0 & \tGamma_{\mathcal{A}}'\\
    \tGamma_{\mathcal{A}}&\tSigma_{\mathcal{A}\mathcal{A}}
  \end{pmatrix}
  \begin{pmatrix}
    \mu\\\tk_{\mathcal{A}}
  \end{pmatrix}=
  \begin{pmatrix}
    -H' \\
    -s\tlambda
  \end{pmatrix}.
\end{equation*}
Using the block matrix inverse formula, this implies
\begin{align*}
  \mu&= (\tGamma_{\mathcal{A}}'\tSigma_{\mathcal{A}\mathcal{A}}^{-1}\tGamma_{\mathcal{A}})^{-1}\left(
       H'-\tGamma_{\mathcal{A}}'\tSigma_{\mathcal{A}\mathcal{A}}^{-1}s\tlambda\right)\\
  \tk_{\mathcal{A}}&=-\tSigma_{\mathcal{A}\mathcal{A}}^{-1}\tGamma_{\mathcal{A}}\mu
                     -\tSigma_{\mathcal{A}\mathcal{A}}^{-1}s\tlambda\\
     &=\tSigma_{\mathcal{A}\mathcal{A}}^{-1}\tGamma_{\mathcal{A}}
       (\tGamma_{\mathcal{A}}'\tSigma_{\mathcal{A}\mathcal{A}}^{-1}\tGamma_{\mathcal{A}})^{-1}\left(
       \tGamma_{\mathcal{A}}'\tSigma_{\mathcal{A}\mathcal{A}}^{-1}s\tlambda -H'\right)
       -\tSigma_{\mathcal{A}\mathcal{A}}^{-1}s\tlambda
\end{align*}
Consequently, if we're in a region in where the solution path is differentiable
with respect to $\tlambda$, we have
\begin{equation}\label{eq:k-direction}
  \frac{\partial \tk_{\mathcal{A}}}{\partial \tlambda}=
  \tSigma_{\mathcal{A}\mathcal{A}}^{-1}\tGamma_{\mathcal{A}}
  (\tGamma_{\mathcal{A}}'\tSigma_{\mathcal{A}\mathcal{A}}^{-1}\tGamma_{\mathcal{A}})^{-1}
  \tGamma_{\mathcal{A}}'\tSigma_{\mathcal{A}\mathcal{A}}^{-1}s
  -\tSigma_{\mathcal{A}\mathcal{A}}^{-1}s.
\end{equation}
The differentiability of path is violated if either (a) the
constraint~\eqref{eq:AC-constraint} is violated for some $i\in\mathcal{A}^{C}$
if $\tk(\tlambda)$ keeps moving in the same direction, and we add $i$ to
$\mathcal{A}$ at a point at which~\eqref{eq:AC-constraint} holds with equality;
or else (b) the sensitivity $\tk_{i}(\tlambda)$ for some $i\in \mathcal{A}$
reaches zero. In this case, drop $i$ from $\mathcal{A}$. In either case, we need
to re-calculate the direction~\eqref{eq:k-direction} using the new definition of
$\mathcal{A}$.

Based on the arguments above and the fact that
$\tk(0)= -\tSigma^{-1}\tGamma (\tGamma'\tSigma^{-1}\tGamma)^{-1}H'$, we can
derive the following algorithm, similar to the LAR-LASSO algorithm, to generate
the path of optimal sensitivities $\tk(\tlambda)$:
\begin{enumerate}
\item Initialize $\tlambda=0$, $\mathcal{A}=\{1,\dotsc, d_{g}\}$,
  $\mu=(\tGamma'\tSigma^{-1}\tGamma)^{-1}H'$, $\tk=-\tSigma^{-1}\tGamma \mu$.
  Let $s$ be a vector of length $d_{g}$ with elements
  $s_{i}=\1{i\in I}\sign(\tk_{i})$, and calculate initial directions as
  $\mu_{\Delta}=-(\tGamma'\tSigma^{-1}\tGamma)^{-1}\tGamma'\tSigma^{-1}s$,
  $\tk_{\Delta}=-\tSigma^{-1}(\tGamma\mu_{\Delta} +s)$
\item While $(\abs{\mathcal{A}}>\max\{d_{g}-d_{\gamma}, d_{\theta}\})$:
  \begin{enumerate}
  \item Set step size to $d=\min\{d_{1}, d_{2}\}$, where
    \begin{align*}
      d_{1}&=\min\{d> 0\colon \tk_{i}+d \tk_{\Delta, i}=0,i\in\mathcal{A}\cap \mathcal{I}\}\\
      d_{2}&=\min\{d> 0\colon
             \abs{e_{i}'(\tSigma \tk+\tGamma\mu)+d e_{i}'(\tSigma \tk_{\Delta}+\tGamma \mu_{\Delta})}=\tlambda+d, i\in\mathcal{A}^{C}\}
    \end{align*}
    Take step of size $d$: $\tk\mapsto \tk+d\tk_{\Delta}$,
    $\mu\mapsto \mu+d\mu_{\Delta}$, and $\tlambda\mapsto \tlambda+d$.
  \item If $d=d_{1}$, drop $\argmin(d_{1})$ from $\mathcal{A}$, and if
    $d=d_{2}$, then add $\argmin(d_{2})$ to $\mathcal{A}$. Let $s$ be a
    vector of length $d_{g}$ with elements
    $s_{i}=-\1{i\in I}\sign(e_{i}'\tSigma \tk+e_{i}'\tGamma \mu)$, and calculate
    new directions as
    \begin{align*}
      \mu_{\Delta}&= -(\tGamma_{\mathcal{A}}'\tSigma_{\mathcal{A}\mathcal{A}}^{-1}\tGamma_{\mathcal{A}})^{-1}
                    \tGamma_{\mathcal{A}}'\tSigma_{\mathcal{A}\mathcal{A}}^{-1}s_{\mathcal{A}}\\
      (\tk_{\Delta})_{\mathcal{A}}&=-\tSigma_{\mathcal{A}\mathcal{A}}^{-1}(\tGamma_{\mathcal{A}}\mu_{\Delta}
                                    +s_{\mathcal{A}})\\
      (\tk_{\Delta})_{\mathcal{A}^{C}}&=0
    \end{align*}
  \end{enumerate}
\end{enumerate}
The solution path $k(\lambda)$ is then obtained as $k(\lambda)=T'\tk(\lambda)$.

Finally, we show that in the limit $\Mbound\to\infty$, the optimal sensitivity
corresponds to a method of moments estimator based on the most informative set
of $d_{\theta}$ moments, with the remaining $d_{g}-d_{\theta}$ moments dropped.
The optimal sensitivity as $M\to\infty$ obtains by
solving~\eqref{eq:pinf-lagrangian} as $\lambda\to\infty$. If $B$ corresponds to
columns of the identity matrix, then this is equivalent to minimizing
$\norm{k_{I}}_{1}$ subject to $H=-k'\Gamma$. This can be written as a linear
program $\min k_{I, +}+k_{I, -i}$ st $-H'=\Gamma'(k_{+}-k_{-})$,
$k_{+}, k_{-}\geq 0$. The minimization problem is done on a
$d_{\theta}$-dimensional hyperplane, and solution must occur at a boundary point
of the feasible set, where only $d_{\theta}$ variables are non-zero. So the
optimal $k$ has $d_{\theta}$ non-zero elements.

\section{Specification test}\label{sec:specification-test}

One can test the null hypothesis of correct specification (i.e.\ the null
hypothesis that $c=0$) using the $J$ statistic
\begin{equation*}
  J=n\min_{\theta}\hat{g}({\theta})'\hat\Sigma^{-1}\hat{g}({\theta})=n\hat{g}(\hat{\theta})'\hat\Sigma^{-1}\hat{g}(\hat{\theta}),
\end{equation*}
where
$\hat{\theta}=\operatorname{argmin}_{\theta}\hat{g}({\theta})'\hat\Sigma^{-1}\hat{g}({\theta})$.
Alternatively, letting $\hat\Sigma^{-1/2}$ denote the symmetric square root of
$\hat\Sigma^{-1}$, one can project $\hat\Sigma^{-1/2}\hat{g}(\tilde{\theta})$, where
$\tilde{\theta}$ is some consistent estimate, onto the complement of the space
spanned by $\hat\Sigma^{-1/2}\hat\Gamma$,
\begin{equation*}
  S=n\hat{g}(\tilde{\theta})'\hat\Sigma^{-1/2}\hat R\hat\Sigma^{-1/2}\hat{g}(\tilde{\theta}),
\end{equation*}
where
$\hat R=I-\hat\Sigma^{-1/2}\hat\Gamma(\hat\Gamma'\hat\Sigma^{-1}\hat\Gamma)^{-1}\hat\Gamma'\hat\Sigma^{-1/2}$.
If the model is correctly specified, so that $c=0$, $S$ and $J$ are asymptotically
equivalent \citep[p.~2231]{newey_large_1994}, and distributed
$\chi^{2}_{d_{g}-d_{\theta}}$.

Under local misspecification, the $J$ statistic has a noncentral $\chi^2$
distribution, with noncentrality parameter depending on $c$
\citep{newey_generalized_1985}, and the asymptotic equivalence of $J$ and $S$
still holds. In this section, we use this observation to form a test of the null
hypothesis $H_0\colon c\in\mathcal{C}$. When $\mathcal{C}$ takes the form in
\Cref{eq:C_Bgamma} for some norm $\norm{\cdot}$, inverting these tests gives a
lower CI for $\Mbound$. We begin with a lemma deriving the asymptotic
distribution of $S$ and $J$ under local misspecification.

\begin{lemma}\label{j_test_lemma}
  Suppose that \Cref{eq:local-misspecification,g_clt_assump,,linear_approx_assump} hold, and that $\hat{\theta}$ and
  $\tilde{\theta}$ satisfy, for some $K$ and
  $K_{opt}'=-(\Gamma'\Sigma^{-1}\Gamma)^{-1}\Gamma'\Sigma^{-1}$,
  \begin{align*}
  \sqrt{n}(\hat{\theta}-\theta_0)&=K_{opt}'\sqrt{n}\hat{g}(\theta_0),&&\text{and}&
  \sqrt{n}(\tilde{\theta}-\theta_0)&=K'\sqrt{n}\hat{g}(\theta_0).
  \end{align*}
  Suppose that $\hat\Sigma$ and $\hat\Gamma$ are consistent estimates of
  $\Sigma$ and $\Gamma$, and that $\Sigma$ and $\Gamma$ are full rank. Then
  $S=J+o_P(1)$ and $S$ and $J$ converge in distribution to a noncentral
  chi-square distribution with $d_g-d_\theta$ degrees of freedom and
  noncentrality parameter $c'\Sigma^{-1/2}R\Sigma^{-1/2}c$ where
  $R=I-\Sigma^{-1/2}\Gamma(\Gamma'\Sigma^{-1}\Gamma)^{-1}\Gamma\Sigma^{-1/2}$.
\end{lemma}
\begin{proof}
  By \Cref{eq:local-misspecification,,g_clt_assump,linear_approx_assump},
  $\sqrt{n}\hat{g}(\tilde{\theta})=(I+\Gamma
  K')\Sigma^{1/2}(\Sigma^{-1/2}c+Z_n)+o_{P}(1)$ where
  $Z_n=\Sigma^{-1/2}[\sqrt{n}
  \hat{g}(\theta_0)-c]\stackrel{d}{\to}\mathcal{N}(0,I_{d_g})$, so that
  \begin{multline*}
    S=
    (\Sigma^{-1/2}c+Z_n)'\Sigma^{1/2}(\Sigma^{-1/2}+\Sigma^{-1/2}\Gamma K')'
    R(\Sigma^{-1/2}+\Sigma^{-1/2}\Gamma K')\Sigma^{1/2}(\Sigma^{-1/2}c+Z_n)+o_{P}(1)\\
    =(\Sigma^{-1/2}c+Z_n)'R(\Sigma^{-1/2}c+Z_n)+o_P(1)
    \stackrel{d}{\to} (\Sigma^{-1/2}c+Z)'R(\Sigma^{-1/2}c+Z)
  \end{multline*}
  where $Z\sim\mathcal{N}(0,I_{d_g})$ and we use the fact that
$R(I+\Sigma^{-1/2}\Gamma K'\Sigma^{1/2})=R$.
  Similarly,
  \begin{equation*}
      \sqrt{n}\hat{g}(\hat{\theta})=(I-\Gamma(\Gamma'\Sigma^{-1}\Gamma)\Gamma'\Sigma^{-1})(c+\Sigma^{1/2}Z_n)+o_{P}(1)
      = \Sigma^{1/2}R(\Sigma^{-1/2}c+Z_n)+o_{P}(1),
  \end{equation*}
  so that $J= (\Sigma^{-1/2}c+Z_n)'R(\Sigma^{-1/2}c+Z_n)+o_{P}(1)=S+o_{P}(1)$. To
  prove the second claim, decompose $R=P_{1}P_{1}'$, where
  $P_{1}\in \mathbb{R}^{d_{\theta}\times (d_{g}-d_{\theta})}$ corresponds to the
  eigenvectors associated with non-zero eigenvalues of $R$. Then
  \begin{equation*}
    (\Sigma^{-1/2}c+Z)'R(\Sigma^{-1/2}c+Z)=(P_{1}'\Sigma^{-1/2}c+P_{1}'Z)'(P_{1}'\Sigma^{-1/2}c+P_{1}'Z).
  \end{equation*}
  Since $P_{1}'Z\sim\mathcal{N}(0,I_{d_{g}-d_{\theta}})$, it follows that the
  random variable in the preceding display has a non-central $\chi^{2}$
  distribution with $d_{g}-d_{\theta}$ degrees of freedom and non-centrality
  parameter $c'\Sigma^{-1/2}R\Sigma^{-1/2}c$.
\end{proof}

\Cref{j_test_lemma} can be interpreted in using the limiting experiment
described in \Cref{limiting_experiment_sec}. In particular, the asymptotic
distribution of the $S$ and $J$ statistics is given by the distribution of the
statistic $Y'\Sigma^{-1/2}R\Sigma^{-1/2}Y$ in the limiting experiment.

The quantiles of a non-central chi-square distribution are increasing in the
noncentrality parameter \citep{sun_monotonicity_2010}. Thus, to test the null
hypothesis $H_{0}\colon c\in\mathcal{C}$, the appropriate critical value for
tests based on the $J$ or $S$ statistic is based on a non-central chi-squared
distribution, with non-centrality parameter
\begin{equation*}
  \bar{\lambda}=\sup_{c\in\mathcal{C}}c'\Sigma^{-1/2}R\Sigma^{-1/2}c.
\end{equation*}
If $\mathcal{C}=\{B\gamma\colon \norm{\gamma}_{p}\leq \Mbound\}$, then this becomes
\begin{equation*}
  \bar{\lambda}=\sup_{\norm{t}_{p}\leq \Mbound}t'B'\Sigma^{-1/2}R\Sigma^{-1/2}Bt
  =\sup_{\norm{t}_{p}\leq 1}\Mbound^{2}\norm{R\Sigma^{-1/2}Bt}_{2}^{2}
  =\Mbound^{2}\norm{A}^{2}_{p,2}.
\end{equation*}
where the second equality uses the fact that $R$ is idempotent,
$A=R\Sigma^{-1/2}B$, and
$\norm{A}_{p, q}=\max_{\norm{x}_{p}\leq 1}\norm{Ax}_{q}$ is the $(p, q)$
operator norm. For $p=2$, the operator norm has a closed form, which gives
$\bar{\lambda}=\Mbound\max\operatorname{eig}(B'\Sigma^{-1/2}R\Sigma^{-1/2}B)$.

\section{Asymptotic coverage and efficiency}\label{efficiency_sec_append}

This appendix contains the asymptotic coverage and efficiency results discussed
in \Cref{efficiency_sec}. In particular, we prove
\Cref{main_text_efficiency_bound_thm}. In order to allow for stronger
statements, we state upper and lower bounds separately.
\Cref{main_text_efficiency_bound_thm} then follows by combining these
results. \Cref{main_text_efficiency_bound_thm} focuses on two-sided CIs
in the case where $\mathcal{C}$ is centrosymmetric, in addition to being convex.
In this appendix, we also prove analogous results for one-sided CIs, and we
generalize these results to the case where $\mathcal{C}$ is a convex but
asymmetric set. When $\mathcal{C}$ is convex but asymmetric, the negative
results about the scope for improvement when $c$ is close to zero no longer
hold. Therefore, we consider the general problem of optimizing quantiles of
excess length over a set $\mathcal{D}\subseteq\mathcal{C}$, which may be a
strict subset of $\mathcal{C}$.

The remainder of this appendix is organized as follows. \Cref{setup_sec}
presents notation and definitions, as well as an overview of the results.
\Cref{submodel_sec} contains results on least favorable submodels as well
as a two-point testing lemma used in later proofs. We then use this to obtain
efficiency bounds for one-sided CIs in
\Cref{oneside_efficiency_bound_sec}, and for two-sided CIs in
\Cref{twoside_efficiency_bound_sec}. \Cref{achieving_bound_sec}
shows that our CIs achieve (or, for two-sided CIs, nearly achieve) these bounds.
\Cref{adaptation_sec_append} shows how
\Cref{main_text_efficiency_bound_thm} follows from these results, and
also gives a one-sided version of this theorem. Primitive conditions for the
misspecified linear IV model, as well as a general construction of a least
favorable submodel satisfying the assumptions used in this section, are given in
the supplemental appendix.

\subsection{Setup}\label{setup_sec}

While our focus is on parameter spaces that place restrictions on $c$, we will
also allow for local restrictions on $\theta$ in some results. This
allows us to bound the scope for ``directing power'' at particular values of
$\theta$. Formally, for some parameter $\theta^*$, we consider the local
parameter space that restricts $(\sqrt{n}(\theta-\theta^*)', c')'$ to some set
$\mathcal{F}\subseteq \mathbb{R}^{d_\theta+d_g}$. The unrestricted case
considered throughout most of the main text corresponds to
$\mathcal{F}=\mathbb{R}^{d_\theta}\times \mathcal{C}$ (in which case $\theta^*$
does not affect the definition of the parameter space). We also allow for
additional restrictions on $\theta$ by placing it in some set $\Theta_n$.
Finally, we use $\mathcal{P}$ to denote the set of distributions $P$ over which
we require coverage.

With this notation, the set of values of $\theta$ that are consistent with the
model under $P$ (i.e.\ the identified set under $P$) is
\begin{equation*}
  \Theta_I(P)=\Theta_I(P;\mathcal{F}, \Theta_n)
  =\left\{\theta\in\Theta_n: \sqrt{n}((\theta-\theta^*)', g_P(\theta)')'\in\mathcal{F}\right\},
\end{equation*}
and the set of pairs $(\theta, P)$ over which coverage is required is given by
\begin{equation*}
\mathcal{S}_n=\{(\theta, P)\in\Theta_n\times \mathcal{P}: \theta\in\Theta_I(P)\}
=\{(\theta, P)\in\Theta_n\times \mathcal{P}:
\sqrt{n}((\theta-\theta^*)', g_P(\theta)')'\in\mathcal{F}\},
\end{equation*}
which reduces to the definition in \Cref{main_text_efficiency_bound_thm}
when $\mathcal{F}=\mathbb{R}^{d_\theta}\times \mathcal{C}$. The coverage
requirement for a CI $\mathcal{I}_n$ is then given by~\eqref{coverage_eq} with
this definition of $\mathcal{S}_n$. To compare one-sided CIs
$\hor{\hat{c}, \infty}$, we will consider the $\beta$ quantile of excess length.
Rather than restricting ourselves to the minimax criterion, we consider
worst-case excess length over a potentially smaller parameter space
$\mathcal{G}$, which may place additional restrictions on $\theta$ and $c$. Let
\begin{equation*}
  q_{\beta, n}(\hat c;\mathcal{P}, \mathcal{G}, \Theta_n)
  =\sup_{P\in\mathcal{P}}\sup_{\theta\in\Theta_I(P;\mathcal{G}, \Theta_n)} q_{P, \beta}(h(\theta)-\hat c)
\end{equation*}
where $q_{P, \beta}$ denotes the $\beta$ quantile under $P$. We will also
consider bounds on $q_{P, \beta}(h(\theta)-\hat c)$ at a single $P$, which
corresponds to the optimistic case of optimizing length at a single
distribution. For two-sided CIs, we will consider expected length.

Our efficiency bounds can be thought of as applying the bounds in
\citet{ArKo18optimal} to a local asymptotic setting, which corresponds to the
limiting model~\eqref{limit_experiment_eq} with $\Gamma=\Gamma_{\theta^*, P_0}$,
$\Sigma=\Sigma_{\theta^*, P_0}$ and $H=H_{\theta^*}$. The between class modulus
of continuity for this model is
\begin{multline}\label{between_class_modulus_eq}
  \omega(\delta;\mathcal{F}, \mathcal{G}, H, \Gamma, \Sigma)
  =\sup\, \, H(s_1-s_0)\quad \text{s.t.}\quad (s_0', c_0')'\in \mathcal{F}, (s_1', c_1')'\in \mathcal{G},\\
             [(c_1-c_0)-\Gamma (s_1-s_0)]'\Sigma^{-1}[(c_1-c_0)-\Gamma (s_1-s_0)]\le \delta^2.
\end{multline}
We use the notation $\omega(\delta)$ and
$\omega(\delta;\mathcal{F}, \mathcal{G})$ when the context is clear. In the case
where $\mathcal{G}=\mathcal{F}=\mathbb{R}^{d_\theta}\times \mathcal{C}$ and
$\mathcal{C}$ is centrosymmetric, the solution satisfies $s_1=-s_0$ and
$c_1=-c_0$, which gives the same optimization problem
as~\eqref{half_modulus_eq}, with the objective multiplied by two (this matches
the definition of $\omega(\cdot)$ used to define $\kappa_{*}$ in the main text).

For one-sided CIs, we show that, for any CI satisfying the coverage
condition~\eqref{coverage_eq} for a rich enough class $\mathcal{P}$, we will
have
\begin{equation}\label{onesided_bound_eq}
  \liminf_{n\to\infty} \sqrt{n}q_{\beta, n}(\hat c;\mathcal{P}, \mathcal{G}, \Theta_n)\ge \omega(\delta_\beta;\mathcal{F}, \mathcal{G}, H, \Gamma, \Sigma)
\end{equation}
where $\delta_\beta=z_{1-\alpha}+z_\beta$, where $z_{\tau}$ denotes the $\tau$
quantile of the $\mathcal{N}(0,1)$ distribution. For bounds on excess length at
a single $P_0$ with $E_{P_0}g(w_i, \theta^*)=0$, we obtain this bound with
$\mathcal{G}=\{0\}$:
\begin{equation}\label{onesided_directed_power_bound_eq}
  \liminf_{n\to\infty} \sqrt{n}q_{P_0,\beta}(h(\theta^*)-\hat c)\ge \omega(\delta_\beta;\mathcal{F}, \{0\}, H, \Gamma, \Sigma).
\end{equation}
These results can be thought of as a local asymptotic version of Theorem 3.1 in
\citet{ArKo18optimal} applied to our setting.

For two-sided CIs, we show that, if a CI $\mathcal{I}_n=\{\hat h\pm \hat \chi\}$
satisfies the coverage condition~\eqref{coverage_eq} for a rich enough class
$\mathcal{P}$, then, for any $P_0$ with $E_{P_0}g(w_i, \theta^*)=0$, expected
length satisfies
\begin{multline}\label{two_sided_bound_eq}
  \liminf_{T\to\infty}\liminf_{n\to\infty} E_{P_0}\min\{\sqrt{n}2\hat\chi, T\} \\
  \ge (1-\alpha)E[\omega(z_{1-\alpha}-Z;\{0\}, \mathcal{F}, H, \Gamma, \Sigma)
    +\omega(z_{1-\alpha}-Z;\mathcal{F}, \{0\}, H, \Gamma, \Sigma)|Z\le z_{1-\alpha}],
\end{multline}
where $Z\sim \mathcal{N}(0,1)$. The above bound uses truncated expected length
to avoid technical issues with convergence of moments when achieving the bound
(note however that this bound immediately implies the same bound on excess
length without truncation). Our results constrain the CI to take the form of an
interval. We conjecture that the bound applies to arbitrary confidence sets
(with length defined as Lebesgue measure) under additional regularity
conditions.

Here, ``rich enough'' means that $\mathcal{P}$ contains a least favorable
submodel. \Cref{submodel_sec} begins the derivation of our efficiency results by
giving conditions on this submodel. In \Cref*{gmm_submodel_sec}, we construct a
submodel satisfying these conditions under mild conditions.

\subsection{Least favorable submodel}\label{submodel_sec}

Let $P_0$ be a distribution with $E_{P_0}g(w_i, \theta^*)=0$ (i.e.\ the model
holds for this data-generating process with $\theta=\theta^*$ and $c=0$), and
consider a parametric submodel $P_t$ indexed by $t\in\mathbb{R}^{d_g}$ (i.e.\
the dimension of $t$ is the same as the dimension of the values of
$g(w_i, \theta)$) with $P_t$ equal to $P_0$ at $t=0$. We assume that
$\{w_i\}_{i=1}^{n}$ are i.i.d.\ under $P_t$. Let $\pi_t(w_i)$ denote the density
of a single observation with respect to its distribution under $P_0$, so that
$E_{P_t}f(w_i)=E_{P_0}f(w_i)\pi_t(w_i)$ for any function $f$. We expect that the
least favorable submodel for this problem will be the one that makes estimating
$E_{P}g(W_i, \theta^*)$ most difficult. This corresponds to any subfamily with
score function $g(w_i, \theta^*)$. We also place additional conditions on this
submodel, given in the following assumption.

\begin{assumption}\label{submodel_assump}
  The data are i.i.d.\ under $P_t$ for all $t$ in a neighborhood of zero, and
  the density $\pi_t(w_i)$ for a single observation is quadratic mean
  differentiable at $t=0$ with score function $g(w_i, \theta^*)$, where
  $E_{P_0}g(w_i, \theta^*)=0$. In addition, the function
  $(t', \theta')'\mapsto E_{p_t} g(w_i, \theta)$ is continuously differentiable
  at $(0', {\theta^*}')'$ with
  \begin{equation}\label{t_theta_derivative_eq}
    \left[\frac{d}{d(t', \theta')} E_{p_t} g(w_i, \theta)\right]_{t=0,\theta=\theta^*}
    =(\Sigma, \Gamma)
  \end{equation}
  where $\Sigma$ and $\Gamma$ are full rank.
\end{assumption}

To understand \Cref{submodel_assump}, note that Problem 12.17 in \citet{LeRo05}
gives the Jacobian with respect to $t$ as $\Sigma$ in the case where
$g(w_i, \theta^*)$ is bounded, and the Jacobian with respect to $\theta$ is
equal to $\Gamma$ by definition. \Cref{submodel_assump} requires the slightly
stronger condition that $E_{p_t} g(w_i, \theta)$ is continuously differentiable
with respect to $(t', \theta')'$ for $t$ close to $0$ and $\theta$ close to
$\theta^*$. This is needed to apply the Implicit Function Theorem in the
derivations that follow. In the supplemental materials, we give a construction
of a quadratic mean differentiable family satisfying this condition, without
requiring boundedness of $g(w_i, \theta^*)$ ({\Cref*{gmm_submodel_lemma}} in
\Cref*{gmm_submodel_sec}).

The bounds in \citet{ArKo18optimal} are obtained by bounding the power of a
two-point test (simple null and simple alternative) where the null and
alternative are given by the points that achieve the modulus. To obtain
analogous results in our setting, we use a bound on the power of a two-point
test in a least favorable submodel.

Consider sequences of local parameter values $(\theta_{0,n}', c_{0,n}')'$ and
$(\theta_{1,n}', c_{1,n}')'$ where, for some $s_0$, $c_0$ $s_1$ and $c_1$,
\begin{equation}\label{null_sequence_def_eq}
  \theta_{d, n}=\theta^*+(s_{d} +o(1))/\sqrt{n}, \,
                 c_{d, n}=c_{d}+o(1)\quad d\in\{0,1\}
\end{equation}
Consider a sequence of tests of $(\theta_{0,n}', c_{0,n}')'$ vs
$(\theta_{1,n}', c_{1,n}')'$. Formally, for any $(\theta', c')'$, let
\begin{equation}\label{Pnc_eq}
  \mathcal{P}_n(\theta, c)=\left\{P\in\mathcal{P}: E_{P} g(w_i, \theta)=c/\sqrt{n}\right\}
\end{equation}
be the set of probability distributions in $\mathcal{P}$ that are consistent
with the parameter values $(\theta', c')'$. We derive a bound on the asymptotic
minimax power of a level $\alpha$ test of
\begin{equation}\label{H0_H1_def_eq}
  H_{0,n}: P\in \mathcal{P}_n(\theta_{0,n}, c_{0,n})
  \quad\text{vs}\quad
  H_{1,n}: P\in \mathcal{P}_n(\theta_{1,n}, c_{1,n}),
\end{equation}
as well as a bound on the power of a test of $H_{0,n}$ at $P_0$. Let $\Phi$ be
the standard normal cdf and let
\begin{equation*}
  \overline\beta(s_{0}, c_{0}, s_{1}, c_{1})
  =\Phi\left(\sqrt{[c_1-c_0-\Gamma (s_1-s_0)]'
      \Sigma ^{-1}[c_1-c_0-\Gamma (s_1-s_0)]}-z_{1-\alpha}\right).
\end{equation*}

\begin{lemma}\label{two_point_lemma}
  Let $\mathcal{P}$ be a class of distributions that contains a family $P_t$
  that satisfies \Cref{submodel_assump}. Then, for any sequence of
  tests $\phi_n$ satisfying
  $\limsup_n \sup_{P\in \mathcal{P}_n(\theta_{0,n}, c_{0,n})}E_{P}\phi_n\le
  \alpha$, we have
  \begin{equation*}
    \limsup_n E_{P_0}\phi_n
    \le \overline\beta(s_{0}, c_{0}, 0, 0)
    \quad\text{and}\quad
    \limsup_n \inf_{P\in \mathcal{P}_n(\theta_{1, n}, c_{1, n})} E_{P}\phi_{n}
    \le \overline\beta(s_0, c_0, s_1, c_1).
  \end{equation*}
\end{lemma}

\Cref{two_point_lemma} says that the asymptotic minimax power of any test of
$H_{0,n}$ vs $H_{1,n}$ is bounded by $\overline\beta(s_0, c_0, s_1, c_1)$.
Furthermore, if we take $s_1=0$ and $c_1=0$, then this bound is achieved at
$P_0$. Note that, in keeping with the analogy with the linear
model~\eqref{limit_experiment_eq}, $\overline\beta(s_0, c_0, s_1, c_1)$ is the
power of the optimal (Neyman-Pearson) test of the simple null $(s_0', c_0')$ vs
the simple alternative $(s_1', c_1')$ in the model~\eqref{limit_experiment_eq}.

\begin{proof}[Proof of \Cref{two_point_lemma}]

  The proof involves two steps. First, we use the Implicit Function Theorem to
  find sequences $t_{0,n}$ and $t_{1,n}$ such that $P_{t_0,n}$ satisfies
  $H_{0,n}$ and $P_{t_1,n}$ satisfies $H_{1,n}$. Next, we apply a standard
  result on testing in quadratic mean differentiable families to obtain the
  limiting power of the optimal test of $P_{t_0,n}$ vs $P_{t_1,n}$, which gives
  an upper bound on the limiting minimax power of any test of $H_{0,n}$ vs
  $H_{1,n}$.

  Let $f(t, \theta, a)=E_{P_t} g(w_i, \theta)-a$ so that $(\theta', c')'$ is
  consistent with $P_t$ iff. $f(t, \theta, c/\sqrt{n})=0$.
  Under \Cref{submodel_assump}, it follows from the Implicit Function
  Theorem that there exists a function $r(\theta, a)$ such that, for $\theta$ in
  a neighborhood of $\theta^*$ and $a$ in a neighborhood of zero,
  \begin{equation*}
    E_{P_{r(\theta, a)}} g(w_i, \theta)-a=f(r(\theta, a), \theta, a)=0.
  \end{equation*}
  Thus, letting $t_{0,n}=r(\theta_{0,n}, c_{0,n}/\sqrt{n})$ and
  $t_{1,n}=r(\theta_{1,n}, c_{1,n}/\sqrt{n})$, $P_{t_{0,n}}$ satisfies $H_{0,n}$
  and $P_{t_{1,n}}$ satisfies $H_{1,n}$. Furthermore,
  \begin{equation*}
    \left[\frac{d}{d(\theta', a')} r(\theta, a)\right]_{(\theta', a')=(\theta^*,0)}
    =-\Sigma^{-1} (\Gamma, -I_{d_g})
  \end{equation*}
  so that
  \begin{equation*}
    r(\theta, a)=\Sigma^{-1}a-\Sigma^{-1} \Gamma
    (\theta-\theta^*)+o(\|\theta-\theta^*\|+\|a\|).
  \end{equation*}
  Thus, letting $t_{0,\infty}=\Sigma^{-1}c_0-\Sigma^{-1} \Gamma s_0$, we have
  \begin{equation*}
    \begin{split}
      t_{0,n}&=r(\theta_{0,n}, c_{0,n}/\sqrt{n})
      =\Sigma^{-1}c_{0,n}/\sqrt{n}-\Sigma^{-1} \Gamma (\theta_{0,n}-\theta^*)+o(\|\theta_{0,n}-\theta^*\|+\|c_{0,n}\|/\sqrt{n}) \\
      &=\Sigma^{-1}c_0/\sqrt{n}-\Sigma^{-1} \Gamma s_0/\sqrt{n}+o(1/\sqrt{n})
      =t_{0,\infty}/\sqrt{n}+o(1/\sqrt{n}).
    \end{split}
  \end{equation*}
  Similarly, $t_{1,n}=t_{1,\infty}/\sqrt{n}+o(1/\sqrt{n})$ where
  $t_{1,\infty}=\Sigma^{-1}c_1-\Sigma^{-1} \Gamma s_1$.

  Since the information matrix for this submodel evaluated at $t=0$ is $\Sigma$,
  it follows from the arguments in Example 12.3.12 in \citet{LeRo05}, extended
  to the case where the null and alternative are both drifting sequences (rather
  than just the alternative), that the limit of the power of the Neyman-Pearson
  test of $P_{t_{0,n}}$ vs $P_{t_{1,n}}$ is
  \begin{equation*}
    \Phi\left(
    \sqrt{[t_{1,\infty}-t_{0,\infty}]'
    \Sigma [t_{1,\infty}-t_{0,\infty}]}-z_{1-\alpha}
    \right)
    =\overline\beta(s_0,c_0,s_1,c_1).
  \end{equation*}
  This gives the required bound on minimax power over $H_{1,n}$. To obtain the
  bound on power at $P_0$, note that, for $\theta_{1,n}=\theta^*$ and
  $c_{1,n}=0$, $t_{0,n}=0$, the bound also corresponds to the power of a test
  that is optimal for $P_{t_{0,n}}$ vs $P_0$.
\end{proof}

\subsection{One-sided CIs}\label{oneside_efficiency_bound_sec}

We prove the following efficiency bound for one-sided CIs.

\begin{theorem}\label{oneside_efficiency_bound_thm}
  Let $\mathcal{P}$ be a class of distributions that contains a submodel $P_t$
  satisfying \Cref{submodel_assump}. Let
  $\Theta_n(C)=\{\theta|\|\theta-\theta^*\|\le C/\sqrt{n}\}$ for some constant
  $C$, and let $\mathcal{F}$ be given. Let $\hor{\hat c, \infty}$ be a sequence
  of CIs such that, for all $C$, the coverage condition~\eqref{coverage_eq}
  holds with $\Theta_n=\Theta_n(C)$. Let $\mathcal{G}\subseteq \mathcal{F}$ be a
  set such that the limiting modulus $\omega$ is well-defined and continuous for
  all $\delta$. Then the asymptotic lower bounds~\eqref{onesided_bound_eq}
  and~\eqref{onesided_directed_power_bound_eq} hold.
\end{theorem}
\begin{proof}
  Consider a sequence of simple null and alternative values of $\theta$ and $c$
  that satisfy~\eqref{null_sequence_def_eq} for some $s_0, c_0, s_1, c_1$, with
  $(\sqrt{n}(\theta_{0,n}-\theta^*)', c_{0,n}')'\in\mathcal{F}$ and
  $(\sqrt{n}(\theta_{1,n}-\theta^*)', c_{1,n}')'\in\mathcal{G}$, for each $n$.
  Note that
  \begin{equation*}
    \lim_{n\to\infty}\sqrt{n}[h(\theta_{1,n})-h(\theta_{0,n})]=H (s_1-s_0).
  \end{equation*}
  Consider the testing problem
  $H_{0,n}:P\in\mathcal{P}_n(\theta_{0,n}, c_{0,n})$ vs
  $H_{1,n}:P\in\mathcal{P}_n(\theta_{1,n}, c_{1,n})$ defined in~\eqref{Pnc_eq}
  and~\eqref{H0_H1_def_eq}. Suppose that
  \begin{equation}\label{qbetan_eq}
    q_{\beta, n}(\hat c;\mathcal{P}, \mathcal{G}, \Theta_n)<h(\theta_{1,n})-h(\theta_{0,n}).
  \end{equation}
  Let $\phi_n$ denote the test that rejects when
  $h(\theta_{0,n})\notin\hor{\hat{c}, \infty}$. Since, for any
  $P\in\mathcal{P}_{n}(\theta_{1,n}, c_{1,n})$, we have
  $q_{P, \beta}(h(\theta_{1,n})-\hat c)\le q_{\beta, n}(\hat{c}; \mathcal{P},
  \mathcal{G}, \Theta_n)$ by construction, it follows that, for all
  $P\in\mathcal{P}_{n}(\theta_{1,n}, c_{1,n})$,
\begin{equation*}
  E_{P}\phi_n
    =P(h(\theta_{1,n})-\hat c<h(\theta_{1,n})-h(\theta_{0,n}))
    \ge P(h(\theta_{1,n})-\hat c\le q_{P, \beta}(h(\theta_{1,n})-\hat c))
    \ge \beta,
\end{equation*}
where the last step follows from properties of quantiles \citep[Lemma 21.1
in][]{van_der_vaart_asymptotic_1998}. The coverage
requirement~\eqref{coverage_eq} implies that the test $\phi_n$ that rejects when
$h(\theta_{0,n})\notin\hor{\hat c, \infty}$ has asymptotic level $\alpha$ for
$H_{0,n}$. Thus, by \Cref{two_point_lemma}, we must have
$\beta\le \overline\beta(s_0,c_0,s_1,c_1)$ if~\eqref{qbetan_eq} holds infinitely
often.

It follows that, if $\overline\beta(s_0,c_0,s_1,c_1)<\beta$, we must have
\begin{equation*}
  \liminf_{n\to\infty} \sqrt{n}q_{\beta, n}(\hat c;\mathcal{P}, \mathcal{G}, \Theta_n)
  \ge H(s_1-s_0)
\end{equation*}
since otherwise, \Cref{qbetan_eq} would hold infinitely often. Since the
sequences and limiting $(s_0', c_0')\in\mathcal{F}$ and
$(s_1', c_1')\in\mathcal{G}$ were arbitrary, the above bound holds for any
$(s_0', c_0')\in\mathcal{F}$ and $(s_1', c_1')\in\mathcal{G}$ with
$\overline\beta(s_0,c_0,s_1,c_1)\le \beta-\eta$, where $\eta>0$ is arbitrary.
The maximum of the right-hand side over $s_0,c_0,s_1,c_1$ in this set is equal
to $\omega(\delta_{\beta-\eta};\mathcal{F}, \mathcal{G}, H, \Gamma, \Sigma)$ by
definition, so taking $\eta\to 0$ gives the result.
\end{proof}

\subsection{Two-sided CIs}\label{twoside_efficiency_bound_sec}

We prove the following efficiency bound for two-sided CIs.

\begin{theorem}\label{twoside_efficiency_bound_thm}
  Suppose that, for all $C$, $\{\hat h\pm \hat \chi\}$ satisfies the local
  coverage condition~\eqref{coverage_eq} with
  $\Theta_n=\Theta_n(C)=\{\theta|\|\theta-\theta^*\|\le C/\sqrt{n}\}$, where
  $\mathcal{P}$ contains a submodel $P_t$ satisfying
  \Cref{submodel_assump}. Suppose also that
  $0_{d_\theta+d_g}\in\mathcal{F}$ and a minimizer
  $(s_{\vartheta}', c_{\vartheta}')'$ of $(c-\Gamma s)'\Sigma^{-1}(c-\Gamma s)$
  subject to $Hs=\vartheta$ and $(s', c')'\in\mathcal{F}$ exists for all
  $\vartheta\in\mathbb{R}$. Then the asymptotic lower
  bound~\eqref{two_sided_bound_eq} holds.
\end{theorem}

In the case where $\mathcal{F}=\mathbb{R}^{d_\theta}\times \mathcal{C}$, which is the focus of the main text, a sufficient condition for the existence of the minimizer $(s_{\vartheta}', c_{\vartheta}')'$ is that $\mathcal{C}$ is compact, $H$ is not equal to the zero vector and $\Gamma$ is full rank.

\begin{proof}
  For each $\vartheta\in\mathbb{R}$, let
  $\tilde\theta_{\vartheta, n}=\theta^*+s_{\vartheta}/\sqrt{n}$, and let
  $\phi_{\vartheta, n}=I(h(\tilde\theta_{\vartheta, n})\notin\{\hat h\pm
  \hat\chi\})$ be the test that rejects when $h(\tilde\theta_{\vartheta, n})$ is
  not in the CI\@. When the constant $C$ defining $\Theta_n=\Theta_n(C)$ is
  large enough, the asymptotic coverage condition~\eqref{coverage_eq} implies
  that $\phi_{\vartheta, n}$ is an asymptotic level $\alpha$ test for
  $H_{0,n}:P\in \mathcal{P}_n(\tilde\theta_{\vartheta, n}, c_{\vartheta})$
  defined in~\eqref{Pnc_eq} and~\eqref{H0_H1_def_eq}. Thus, by
  \Cref{two_point_lemma},
\begin{equation}\label{delta_vartheta_bound_eq}
  \limsup_{n\to\infty} E_{P_0}\phi_{\vartheta, n}\le
  \Phi(\delta_{\vartheta}-z_{1-\alpha})
  \quad\text{where}\quad
  \delta_{\vartheta}=\sqrt{(c_\vartheta-\Gamma s_\vartheta)'\Sigma^{-1}(c_\vartheta-\Gamma s_\vartheta)}.
\end{equation}

We apply this bound to a grid of values of $\vartheta$.
Let $\mathcal{E}_n(m)$ denote the grid centered at zero with length $2m$ and meshwidth $1/m$:
\begin{equation*}
\mathcal{E}_n(m)=\{j/m: j\in \mathbb{Z}, |j|\le m^2\}.
\end{equation*}
Let
\begin{equation*}
\widetilde{\mathcal{E}}_n(m)=\{\sqrt{n}[h(\tilde\theta_{\vartheta, n})-h(\theta^*)]: \vartheta\in\mathcal{E}_n(m)\}.
\end{equation*}
Note that
$h(\tilde\theta_{\vartheta,
  n})=h(\theta^*)+(1+o(1))Hs_\vartheta/\sqrt{n}=h(\theta^*)+(1+o(1))\vartheta/\sqrt{n}$.
Thus, letting $a_1,\ldots, a_{2m^2 +1}$ denote the ordered elements in
$\mathcal{E}_n(m)$ and $\tilde a_1,\ldots, \tilde a_{m^2+1}$ the ordered
elements in $\widetilde{\mathcal{E}}_n$, we have $\tilde a_j\to a_j$ for each
$j$ as $n\to\infty$.

Let $\mathcal{N}(n, m)$ be the number of elements $\tilde a_j$ in $\widetilde{\mathcal{E}}_n$ such that $h(\theta^*)+\tilde a_j/\sqrt{n}=h(\tilde\theta_{a_j, n})\in \{\hat h\pm \hat \chi\}$.  Then
\begin{equation*}
E_{P_0}\mathcal{N}(n, m)
=\sum_{j=1}^{2m^2+1} E_{P_0}I(h(\tilde\theta_{a_j, n})\in\{\hat h\pm\hat \chi\})
=\sum_{j=1}^{2m^2+1} [1-E_{P_0}\phi_{a_j, n}].
\end{equation*}
It follows from~\eqref{delta_vartheta_bound_eq} that (assuming the constant $C$
that defines $\Theta_n(C)$ is large enough),
\begin{equation*}
  \liminf_{n\to\infty}E_{P_0}\mathcal{N}(n, m)\ge \sum_{j=1}^{2m^2+1} [1-\Phi(\delta_{a_j}-z_{1-\alpha})]
  =\sum_{j=1}^{2m^2+1} \Phi(z_{1-\alpha}-\delta_{a_j}).
\end{equation*}
Note that $2\hat \chi\ge n^{-1/2}[\mathcal{N}(n, m)-1]\cdot \min_{1\le j\le 2m^2}(\tilde a_{j+1}-\tilde a_j)=n^{-1/2}[\mathcal{N}(n, m)-1]\cdot m^{-1}\cdot (1+\varepsilon_n)$ where $\varepsilon_n=\min_{1\le j\le 2m^2}(\tilde a_{j+1}-\tilde a_j)/m^{-1}-1$ is a nonrandom sequence converging to zero.  This, combined with the above display, gives
\begin{equation*}
\liminf_{n\to\infty}E_{P_0}\min\{2n^{1/2}\hat \chi, T\}
  \ge \left[m^{-1}\sum_{j=1}^{2m^2+1} \Phi(z_{1-\alpha}-\delta_{a_j})-m^{-1}\right]
\end{equation*}
for any $T>2m$.
We have
\begin{equation}\label{aj_sum_eq}
m^{-1}\sum_{j=1}^{2m^2+1} \Phi(z_{1-\alpha}-\delta_{a_j})
=m^{-1}\sum_{j=1}^{2m^2+1} \int I(\delta_{a_j}\le z_{1-\alpha}-z)d\Phi(z).
\end{equation}
Following the proof of Theorem 3.2 in \citet{ArKo18optimal}, note that, for $\vartheta\ge 0$, $t\ge 0$, we have $\delta_\vartheta\le t$ iff. $\vartheta\le \omega(t;\{0\}, \mathcal{F})$.  Indeed, note that $\omega(\delta_\vartheta;\{0\}, \mathcal{F})\ge H s_{\vartheta}=\vartheta$ by feasibility of $0$ and $s_\vartheta, c_\vartheta$ for this modulus problem.  Since the modulus is increasing, this means that, if $\delta_\vartheta\le t$, we must have $\vartheta\le \omega(t;\{0\}, \mathcal{F})$.  Now suppose $\vartheta\le \omega(t;\{0\}, \mathcal{F})$.
Then $H s_{\omega(t;\{0\}, \mathcal{F})}\ge \vartheta$, so, for some $\lambda\in [0,1]$, $(s_\lambda', c_\lambda')=\lambda (s_{\omega(t;\{0\}, \mathcal{F})}', c_{\omega(t;\{0\}, \mathcal{F})}')$ satisfies $H s_\lambda=\vartheta$, which means that $\delta_\vartheta\le \sqrt{(c_\lambda-\Gamma s_\lambda)'\Sigma^{-1}(c_\lambda-\Gamma s_\lambda)}\le t$ as claimed.

Thus, the part of the expression in~\eqref{aj_sum_eq} corresponding to terms in
the sum with $a_j\ge 0$ is given by
\begin{multline*}
  m^{-1}\sum_{j=1}^{2m^2+1} \int
  I(0\le a_j\le \omega(z_{1-\alpha}-z;\{0\}, \mathcal{F}))\, d\Phi(z) \\
  \ge \int_{z\le z_{1-\alpha}} \min\{\omega(z_{1-\alpha}-z;\{0\}, \mathcal{F})-1/m, m\}d\Phi(z).
\end{multline*}
By the Dominated Convergence Theorem, this converges to
$\int_{z\le z_{1-\alpha}}\omega(z_{1-\alpha}-z;\{0\}, \mathcal{F})d\Phi(z)$ as
$m\to\infty$. Similarly, for $\vartheta< 0$, $t\ge 0$, we have
$\delta_\vartheta\le t$ iff. $-\vartheta\le \omega(t;\mathcal{F}, \{0\})$, so
that an analogous argument shows that, for arbitrary $\varepsilon>0$, there
exists $m$ such that
$\int_{z\le z_{1-\alpha}}\omega(z_{1-\alpha}-z;\mathcal{F},
\{0\})d\Phi(z)-\varepsilon$ is an asymptotic lower bound for the part of the
expression~\eqref{aj_sum_eq} that corresponds to terms in the sum with $a_j<0$.
Thus, for any $\varepsilon>0$, there exist constants $C$ and $T$ such that, if
the coverage condition~\eqref{coverage_eq} holds with $\Theta_{n}=\Theta_{n}(C)$,
\begin{equation*}
  \liminf_{n\to\infty} E_{P_0}\min\{n^{1/2}2\hat \chi, T\}
  \ge \int_{z\le z_{1-\alpha}}[\omega(z_{1-\alpha}-z;\{0\}, \mathcal{F})
  +\omega(z_{1-\alpha}-z;\mathcal{F}, \{0\})]d\Phi(z)-2\varepsilon.
\end{equation*}
This gives the result.
\end{proof}

\subsection{Achieving the bound}\label{achieving_bound_sec}

This section gives formal results showing that the CIs proposed in the main text
are asymptotically valid, and that, if the sensitivities are chosen optimally,
they achieve the efficiency bound in \Cref{oneside_efficiency_bound_thm} in the
one-sided case, and nearly achieve the bound in
\Cref{twoside_efficiency_bound_thm} in the two-sided case (where ``nearly''
means up to the sharp efficiency bound $\kappa_{*}$ in the limiting model, given
in~\eqref{kappa_conv_cs_eq}, in the case where $\mathcal{C}$ is
centrosymmetric).

We specialize to the case considered in the main text where we require coverage
without local restrictions on $\theta$. In the notation of
\Cref{oneside_efficiency_bound_sec,twoside_efficiency_bound_sec}, this
corresponds to $\mathcal{F}=\mathbb{R}^{d_\theta}\times \mathcal{C}$ for a
convex (but possibly asymmetric) set $\mathcal{C}$.

In the main text, we focused on the case where $\mathcal{C}$ is centrosymmetric.
To allow for general convex $\mathcal{C}$, we use estimators that are
asymptotically affine, rather than linear. We focus on one-step estimators,
which take the form
\begin{equation*}
\hat h=h(\hat\theta_{\text{initial}})+\hat k' g(\hat\theta_{\text{initial}})+\hat a/\sqrt{n}.
\end{equation*}
for some vector $\hat{k}$ and some scalar $\hat a$. We continue to require the
condition
\begin{equation}\label{finite_bias_est_eq}
\hat H=-\hat k'\hat \Gamma,
\end{equation}
where $\hat\Gamma$ is an estimator of $\Gamma$ satisfying conditions to be given below.

To deal with asymmetric $\mathcal{C}$, and to state results involving worst-case quantiles of excess length over different sets, it will be helpful to separately define worst-case upper and lower bias.  For a set $\mathcal{C}\in\mathbb{R}^{d_g}$, let
\begin{equation*}
  \maxbias_{\mathcal{C}}(k, a)=\sup_{c\in\mathcal{C}}k'c+a,
  \quad \minbias_{\mathcal{C}}(k, a)=\inf_{c\in\mathcal{C}}k'c+a
\end{equation*}
A one-sided asymptotic $1-\alpha$ CI is given by $\hor{\hat{c}, \infty}$ where
\begin{equation*}
  \begin{split}
    \hat c &= \hat h-\maxbias_{\mathcal{C}}(\hat k, \hat a)/\sqrt{n}-z_{1-\alpha}\sqrt{\hat k'\hat \Sigma \hat k}/\sqrt{n} \\
    &=h(\hat\theta_{\text{initial}})+\hat k' g(\hat\theta_{\text{initial}})+
    \hat{a}/\sqrt{n}
    -\maxbias_{\mathcal{C}}(\hat k, \hat a)/\sqrt{n}-z_{1-\alpha}\sqrt{\hat k'\hat \Sigma \hat k}/\sqrt{n} \\
    &=h(\hat\theta_{\text{initial}})+\hat k' g(\hat\theta_{\text{initial}})
    -\maxbias_{\mathcal{C}}(\hat{k},
    0)/\sqrt{n}-z_{1-\alpha}\sqrt{\hat{k}'\hat{\Sigma} \hat k}/\sqrt{n},
  \end{split}
\end{equation*}
and $\hat \Sigma$ is an estimate of $\Sigma$. Thus, the intercept term $\hat a$
does not matter for the one-sided CI and can be taken to be zero in this case.
For two-sided CIs, however, the choice of $\hat a$ matters, and we assume that
$\hat a$ is chosen so that the estimator is centered:
\begin{equation}\label{centered_a_eq}
  \maxbias_{\mathcal{C}}(\hat k, \hat a)=\sup_{c\in\mathcal C}\hat{k}'c+\hat a
  =-\left(\inf_{c\in\mathcal C}\hat k'c+\hat a\right)=-\minbias_{\mathcal{C}}(\hat k, \hat a).
\end{equation}
A two-sided asymptotic $1-\alpha$ CI is then given by $\hat h\pm \hat \chi$ where
\begin{equation*}
  \hat \chi =
  \cv_\alpha\left(\maxbias_{\mathcal{C}}(\hat k, \hat a)/\sqrt{\hat k'\hat \Sigma \hat k}\right)
  \sqrt{\hat k'\hat \Sigma \hat k}/\sqrt{n},
  \quad
  \text{where $\cv_\alpha(t)$ is the $1-\alpha$ quantile of $|\mathcal{N}(t,1)|$}.
\end{equation*}

For both forms of CIs, we first state a result for general $\hat k$, $\hat a$,
and then specialize to optimal weights. For the one-sided case, we consider CIs
that optimize worst-case length over $(\sqrt{n}(\theta-\theta^*)', c')'$ in some
set $\mathcal{G}$, subject to coverage over
$\mathcal{F}=\mathbb{R}^{d_\theta}\times \mathcal{C}$. In principle, this allows
for confidence sets that ``direct power'' not only at particular values of $c$
but also at particular values of $\theta$. However, \Cref*{invariance_lemma} in
\Cref*{sec:repl-mathbbrd_th-wit} shows that the optimal weights for this problem
are the same as the optimal weights when $\mathcal{G}$ is replaced by
$\mathbb{R}^{d_\theta}\times \mathcal{D}(\mathcal{G})$, where
$\mathcal{D}(\mathcal{G})=\{c\colon\text{there exists $s$ s.t.}\;(s', c')'\in
\mathcal{G}\}$. Thus, it is without loss of generality to consider weights that
optimize worst-case excess length over $c\in\mathcal{D}$ subject to coverage
over $c\in\mathcal{C}$ where $\mathcal{D}\subseteq \mathcal{C}$ is a compact
convex set.

The optimal weights take the form
$\hat k=k(\delta_\beta, \hat H, \hat\Gamma, \hat \Sigma)$ where
\begin{equation}\label{khat_def_eq}
k(\delta, H, \Gamma, \Sigma)'=\frac{((c_{1,\delta}^*-c_{0,\delta}^*)
- \Gamma (s_{1,\delta}^*-s_{0,\delta}^*))' \Sigma^{-1}}{((c_{1,\delta}^*-c_{0,\delta}^*)
- \Gamma (s_{1,\delta}^*-s_{0,\delta}^*))' \Sigma^{-1} \Gamma H'/ H H'}
\end{equation}
and $c_{0,\delta}$, $s_{0,\delta}$, $c_{1,\delta}$, $s_{1,\delta}$ solve the
between class modulus problem~\eqref{between_class_modulus_eq} with
$\mathcal{F}=\mathbb{R}^{d_\theta}\times \mathcal{C}$ and
$\mathcal{G}=\mathbb{R}^{d_\theta}\times \mathcal{D}$. For a two-sided CI of the
form given above, the optimal weights take this form with
$\mathcal{D}=\mathcal{C}$, $\delta$ minimizing $\hat \chi$, and with $\hat a$
chosen to center the CI so that~\eqref{centered_a_eq} holds. We note that, in
the case where $\mathcal{D}=\mathcal{C}$ and $\mathcal{C}$ is centrosymmetric,
$s_{1,\delta}^*=s_{0,\delta}^*$ and $c_{1,\delta}^*=c_{0,\delta}^*$,
and~\eqref{between_class_modulus_eq} reduces to two times the optimization
problem~\eqref{half_modulus_eq}. The weights $\hat k$ then take the form given
in~\eqref{optimal_weights_eq} in the main text, and, since $\mathcal{C}$ is
centrosymmetric, $\hat a=0$, which gives the two-sided CI proposed in the main
text.

For our general result showing coverage for possibly suboptimal weights
$\hat k$, $\hat a$, we make the following assumptions. In the following, for a
set $\mathcal{A}_n$, random variables $A_{n, \theta, P}$ and $B_{n, \theta, P}$ and
a sequence $a_n$, we say $A_{n, \theta, P}=B_{n, \theta, P}+o_P(a_n)$ uniformly over
$(\theta, P)$ in $\mathcal{A}_n$ if, for all $\varepsilon>0$,
$\sup_{(\theta, P)\in\mathcal{A}_n}
P(a_n^{-1}\|A_{n, \theta, P}-B_{n, \theta, P}\|>\varepsilon)\to 0$. We say
$A_{n, \theta, P}=B_{n, \theta, P}+\mathcal{O}_P(a_n)$ uniformly over $(\theta, P)$
in a set $\mathcal{A}_n$ if
$\lim_{C\to\infty}\limsup_{n\to\infty}\sup_{(\theta, P)\in\mathcal{A}_n}
P(a_n^{-1}\|A_{n, \theta, P}-B_{n, \theta, P}\|>C)= 0$. In the following, the set
$\mathcal{S}_n$ defined in \Cref{setup_sec} over which coverage is
required is defined with $\mathcal{F}=\mathbb{R}^{d_\theta}\times \mathcal{C}$.

\begin{assumption}\label{theta_init_assump}
The set $\mathcal{C}$ is compact
or takes the form $\widetilde{\mathcal{C}}\times \mathbb{R}^{d_{g_2}}$ where $d_{g_1}+d_{g_2}=d_g$ and $\widetilde{\mathcal{C}}$ is a compact subset of $\mathbb{R}^{d_{g_1}}$.
In addition,
$\hat\theta_{\text{initial}}-\theta=\mathcal{O}_P(1/\sqrt{n})$,
$\hat g(\hat\theta_{\text{initial}})-\hat g(\theta)=\Gamma_{\theta, P}(\hat\theta_{\text{initial}}-\theta)+o_P(1/\sqrt{n})$
and $h(\hat\theta_{\text{initial}})-h(\theta)=H_\theta (\hat\theta_{\text{initial}}-\theta)+o_P(1/\sqrt{n})$
uniformly over $(\theta, P)\in\mathcal{S}_n$.
\end{assumption}

\begin{assumption}\label{g_clt_assump_append}
  $\hat g(\theta)-g_P(\theta)=\mathcal{O}(1/\sqrt{n})$ uniformly over
  $(\theta, P)\in\mathcal{S}_n$. Furthermore, for a collection of matrices
  $\Sigma_{\theta, P}$ such that $k_{\theta, P}'\Sigma_{\theta, P}k_{\theta, P}$ is
  bounded away from zero and infinity,
\begin{equation*}
  \sup_{t\in\mathbb{R}}\sup_{(\theta, P)\in\mathcal{S}_n}
  \left|P\left(\frac{\sqrt{n}k_{\theta, P}'(\hat g(\theta)-g_P(\theta))}{\sqrt{k_{\theta, P}'\Sigma_{\theta, P}k_{\theta, P}}}\le t\right)
- \Phi\left(t\right)\right|
\to 0.
\end{equation*}
\end{assumption}

\begin{assumption}\label{khat_assump}
  $\hat k-k_{\theta, P}=o_P(1)$ uniformly over $(\theta, P)\in\mathcal{S}_n$, and
  similarly for $\hat a$, $\hat \Gamma$, $\hat H$ and $\hat \Sigma$.
  Furthermore, $k_{\theta, P}$, $a_{\theta, P}$, $\Gamma_{\theta, P}$, $H_{\theta}$
  and $\Sigma_{\theta, P}$ are bounded uniformly over
  $(\theta, P)\in\mathcal{S}_n$. In the case where
  $\mathcal{C}=\widetilde{\mathcal{C}}\times \mathbb{R}^{d_{g_2}}$, assume that
  the last $d_{g_2}$ elements of $\hat k$ are zero with probability one for all
  $P\in\mathcal{P}$.
\end{assumption}

\begin{theorem}\label{khat_CI_thm}
  Suppose that \Cref{theta_init_assump,g_clt_assump_append,,khat_assump} hold
  and let $\hat c$ be defined above with $\hat k$, $\hat \Gamma$ and $\hat H$
  satisfying~\eqref{finite_bias_est_eq}. Then
\begin{equation*}
  \liminf_{n\to\infty}\inf_{(\theta,P)\in \mathcal{S}_n}
  P(h(\theta)\in \hor{\hat{c}, \infty})\ge 1-\alpha,
\end{equation*}
and
\begin{multline*}
  \limsup_{n\to\infty}\sup_{P\in\mathcal{P}}\sup_{\theta\in\Theta_I(P;\mathbb{R}^{d_\theta}\times \mathcal{D}, \Theta_n)}
\left\{\sqrt{n}q_{\beta, P}(h(\theta)-\hat c) \Big. \right.  \\
\left.
-\left[\maxbias_{\mathcal{C}}(k_{\theta, P},0)-\minbias_{\mathcal{D}}(k_{\theta, P},0)
+(z_{1-\alpha}+z_\beta)\sqrt{k_{\theta, P}'\Sigma_{\theta, P}k_{\theta, P}}
\right]\right\}\le 0.
\end{multline*}
\end{theorem}
\begin{proof}
  If $\mathcal{C}=\widetilde{\mathcal{C}}\times \mathbb{R}^{d_{g_2}}$ with
  $\widetilde{\mathcal{C}}$ compact, the theorem can equivalently be stated as
  holding with $\hat{k}$ redefined to be the vector in $\mathbb{R}^{d_{g_1}}$
  that contains the first $d_{g_1}$ elements of the original sensitivity
  $\hat k$, and with other objects redefined similarly. Therefore, it suffices
  to consider the case where $\mathcal{C}$ is compact.

  Note that
  \begin{multline*}
    \sqrt{n}(\hat h-h(\theta))
    =H_\theta\sqrt{n}(\hat\theta_{\text{initial}}-\theta)+\hat k
    \sqrt{n}\hat{g}(\theta) +
    \hat{k} \sqrt{n}(\hat{g}(\hat\theta_{\text{initial}})-\hat g(\theta))
    +\hat a+o_P(1) \\
    =H_\theta\sqrt{n}(\hat\theta_{\text{initial}}-\theta)+
    \hat k \sqrt{n}(\hat{g}(\theta)-g_P(\theta)) +\hat k'c
    +\hat{k}\sqrt{n}\Gamma_{\theta, P}(\hat\theta_{\text{initial}}-\theta)
    +\hat a+o_P(1) \\
    =(H_\theta+k_{\theta, P}'\Gamma_{\theta, P})\sqrt{n}(\hat\theta_{\text{initial}}-\theta)
    +k_{\theta, P}'c +a_{\theta, P}
    +k_{\theta, P}'\sqrt{n}(\hat{g}(\theta)-g_P(\theta))+o_P(1),
  \end{multline*}
  where $c=\sqrt{n}g_P(\theta)$ and the $o_P(1)$ terms are uniform over
  $(\theta, P)\in\mathcal{S}_n$ (the last equality uses the fact that
  $\mathcal{C}$ is compact). By \Cref{khat_assump}
  and~\eqref{finite_bias_est_eq}, $H_\theta+k_{\theta, P}'\Gamma_{\theta, P}=0$
  so this implies
\begin{equation}\label{hhat_asymptotic_approximation_eq}
\sqrt{n}(\hat h-h(\theta))=k_{\theta, P}'c+a_{\theta, P}+k_{\theta, P}'\sqrt{n}(\hat g(\theta)-g_P(\theta))+o_P(1)
\end{equation}
uniformly over $(\theta, P)\in\mathcal{S}_n$.
By compactness of $\mathcal{C}$ and \Cref{khat_assump}, we also have
\begin{equation*}
  \maxbias_{\mathcal{C}}(\hat k, \hat a)=\maxbias_{\mathcal{C}}(k_{\theta, P}, a_{\theta, P})+o_P(1),
  \quad \hat k'\hat\Sigma\hat k=k_{\theta, P}'\Sigma_{\theta, P} k_{\theta, P}+o_P(1)
\end{equation*}
uniformly over $(\theta, P)\in\mathcal{S}_n$.  Thus,
\begin{multline}\label{oneside_asymptotic_eq}
  \sqrt{n}(\hat c-h(\theta))
  =\sqrt{n}(\hat{h}-h(\theta))-\maxbias_{\mathcal{C}}(\hat k,
  \hat{a})-z_{1-\alpha}\sqrt{\hat k'\hat\Sigma\hat k}
  =k_{\theta, P}'c+\\
  a_{\theta, P}+k_{\theta, P}'\sqrt{n}(\hat g(\theta)-g_P(\theta))
  -\maxbias_{\mathcal{C}}(k_{\theta, P}, a_{\theta,
    P})-z_{1-\alpha}\sqrt{k_{\theta, P}'\Sigma_{\theta, P} k_{\theta, P}}+o_P(1)
\end{multline}
uniformly over $(\theta, P)\in\mathcal{S}_n$.
Since $k_{\theta, P}'c+a_{\theta, P}-\maxbias_{\mathcal{C}}(k_{\theta, P}, a_{\theta, P})\le 0$ by definition, the first part of the theorem (coverage) now follows from \Cref{g_clt_assump_append}.
For the last part of the theorem, note that, using the above display and the fact that $k_{\theta, P}'c+a_{\theta, P}\ge \minbias_{\mathcal{D}}(k_{\theta, P}, a_{\theta, P})$ for any $(\theta, P)$ with $c=\sqrt{n}E_{P}g(w_i, \theta)\in\mathcal{D}$, it follows that
$\sqrt{n}(h(\theta)-c)$ is less than or equal to
\begin{equation*}
\maxbias_{\mathcal{C}}(k_{\theta, P}, a_{\theta, P})-\minbias_{\mathcal{D}}(k_{\theta, P}, a_{\theta, P})
+z_{1-\alpha}\sqrt{k_{\theta, P}'\Sigma_{\theta, P} k_{\theta, P}}
+k_{\theta, P}'\sqrt{n}(\hat g(\theta)-g_P(\theta))+o_P(1)
\end{equation*}
uniformly over $(\theta, P)$ with $\sqrt{n}E_{P}g(w_i, \theta)\in\mathcal{D}$.
This, along with \Cref{g_clt_assump_append}, gives the last part of
the theorem.
\end{proof}

\begin{theorem}\label{khat_twoside_thm}
  Suppose that \Cref{theta_init_assump,g_clt_assump_append,,khat_assump} hold
  and let $\hat h$ and $\hat \chi$ be defined above with $\hat k$, $\hat a$,
  $\hat \Gamma$ and $\hat H$ satisfying~\eqref{finite_bias_est_eq}
  and~\eqref{centered_a_eq}. Then
\begin{equation*}
\liminf_{n\to\infty}\inf_{(\theta, P)\in\mathcal{S}_n}
P\left(h(\theta)\in \{\hat h \pm \hat \chi\}\right)\ge 1-\alpha.
\end{equation*}
In addition, we have
\begin{equation*}
  \sqrt{n}\hat \chi -\cv_\alpha\left(\frac{\maxbias_{\mathcal{C}}(k_{\theta, P}, a_{\theta, P})}{\sqrt{k_{\theta, P}'\Sigma_{\theta, P} k_{\theta, P}}}\right)\sqrt{k_{\theta, P}'\Sigma_{\theta, P} k_{\theta, P}}
 \stackrel{p}{\to} 0
\end{equation*}
uniformly over $(\theta, P)\in \mathcal{S}_n$.
\end{theorem}
\begin{proof}
  As with \Cref{khat_CI_thm}, it suffices to consider the case where
  $\mathcal{C}$ is compact. Let $(\theta_n,P_n)$ be a sequence in
  $\mathcal{S}_n$ and let $c_n=\sqrt{n}g_{P_n}(\theta_n)$. Let
  $b_n=k_{\theta_n,P_n}'c_n+a_{\theta_n,P_n}$,
  $\text{sd}_n=\sqrt{k_{\theta_n,P_n}'\Sigma_{\theta_n,P_n} k_{\theta_n,P_n}}$
  and $\overline b_n=\maxbias_{\mathcal{C}}(k_{\theta_n,P_n}, a_{\theta_n,P_n})$.
  Note that, by~\eqref{centered_a_eq},
  $\maxbias_{\mathcal{C}}(k_{\theta_n,P_n}, a_{\theta_n, P_n})=-\minbias_{\mathcal{C}}(k_{\theta_n,P_n}, a_{\theta_n,P_n})$
  when \Cref{khat_assump} holds. It therefore follows that
  $-\overline b_n\le b_n\le \overline b_n$.

  Let
  $Z_n = \sqrt{n}k_{\theta_n,P_n}'(\hat{g}(\theta_n)-g_{P_n}(\theta_n)) /
  \text{sd}_{n}$. Note that $Z_n$ converges in distribution (under $P_n$) to a
  $\mathcal{N}(0,1)$ random variable by \Cref{g_clt_assump_append}.
  By~\eqref{hhat_asymptotic_approximation_eq},
\begin{equation*}
\sqrt{n}(\hat h-h(\theta_n))
=b_n+\text{sd}_n Z_n+o_{P_n}(1).
\end{equation*}
Using the fact that $\text{sd}_n$ is bounded away from zero and $\sqrt{\hat k'\hat \Sigma\hat k}/\text{sd}_n$ converges in probability to one under $P_n$, it also follows that
\begin{equation*}
\sqrt{n}(\hat h-h(\theta_n))/\sqrt{\hat k'\hat \Sigma\hat k}=b_n/\text{sd}_n + Z_n+o_{P_n}(1).
\end{equation*}
Also, by \Cref{khat_assump}, we have, for a large enough constant $K$,
\begin{equation*}
  \left|\cv_\alpha\left(\frac{\maxbias_{\mathcal{C}}(\hat k, \hat a)}{
        \sqrt{\hat k'\hat \Sigma \hat k}}\right)
    -\cv_\alpha\left(\frac{\overline b_n}{\text{sd}_n}\right)\right|
  \le K\left\{\left[\maxbias_{\mathcal{C}}(\hat k, \hat a)
      - \overline b_n\right]
    +\left[\sqrt{\hat k'\hat \Sigma \hat k}-\text{sd}_n\right]\right\}
  \stackrel{p}{\to} 0.
\end{equation*}
This, along with the fact that $\sqrt{\hat k'\hat \Sigma\hat k}/\text{sd}_n$ converges in probability to one under $P_n$, gives the second part of the theorem.
Furthermore, it follows from the above display that
\begin{multline*}
   P_n\left(h(\theta_n)>\hat h+\hat \chi
  \right)
   =P_n\left(\frac{\sqrt{n}\left(\hat h-h(\theta_n)\right)}{\sqrt{\hat k'\hat \Sigma \hat k}}<-\cv_\alpha\left(\maxbias_{\mathcal{C}}(\hat k, \hat a)/\sqrt{\hat k'\hat \Sigma \hat k}\right)\right) \\
  =P_n\left(b_n/\text{sd}_n+Z_n<-\cv_\alpha\left(\overline{b}_n/\text{sd}_n\right)+o_{P_n}(1)\right)
=\Phi(-b_n/\text{sd}_n-\cv_\alpha\left(\overline{b}_n/\text{sd}_n\right))+o(1).
\end{multline*}
Similarly,
\begin{multline*}
  P_n\left(h(\theta_n)<\hat h-\hat \chi
  \right)
=P_n\left(\frac{\sqrt{n}\left(\hat h-h(\theta_n)\right)}{\sqrt{\hat k'\hat \Sigma \hat k}}>\cv_\alpha\left(\maxbias_{\mathcal{C}}(\hat k, \hat a)/\sqrt{\hat k'\hat \Sigma \hat k}\right)\right) \\
  =P_n\left(b_n/\text{sd}_n+Z_n>\cv_\alpha\left(\overline{b}_n/\text{sd}_n\right)+o_{P_n}(1)\right)
=1-\Phi(-b_n/\text{sd}_n+\cv_\alpha\left(\overline{b}_n/\text{sd}_n\right))+o(1).
\end{multline*}
Thus, the probability of the CI not covering is given, up to $o(1)$, by
\begin{equation*}
  1-\Phi(-b_n/\text{sd}_n+\cv_\alpha\left(\overline{b}_n/\text{sd}_n\right))
  +\Phi(-b_n/\text{sd}_n-\cv_\alpha\left(\overline{b}_n/\text{sd}_n\right)).
\end{equation*}
This is the probability that the absolute value of a $\mathcal{N}(b_n/\text{sd}_n,1)$ variable is greater than $\cv_\alpha\left(\overline{b}_n/\text{sd}_n\right)$, which is less than $1-\alpha$ since $|b_n|\le \overline b_n$.
\end{proof}

We now specialize to the case where the optimal weights are used. We make a
uniform consistency assumption on $\hat \Gamma$, $\hat H$ and $\hat \Sigma$, as
well as assumptions on the rank of $H$, $\Gamma$ and $\Sigma$. The latter are
standard regularity conditions for the correctly specified ($\mathcal{C}=\{0\}$)
case.

\begin{assumption}\label{GammaHSigma_assump}
  The estimators $\hat \Gamma$, $\hat H$ and $\hat \Sigma$ are full rank with
  probability one and satisfy $\hat \Gamma-\Gamma_{\theta, P}=o_P(1)$,
  $\hat H-H_\theta=o_P(1)$ and $\hat \Sigma-\Sigma_{\theta, P}=o_P(1)$ uniformly
  over $(\theta, P)\in\mathcal{S}_n$.
\end{assumption}

\begin{assumption}\label{scriptB_assump}
  There exists a compact set $\mathcal{B}$ that contains the set
  $\{(H_\theta, \Gamma_{\theta, P}, \Sigma_{\theta, P}): \theta\in\Theta_n,
  P\in\mathcal{P}\}$ for all $n$, such that (i) in the case where
  $\mathcal{C}$ is compact, $H\ne 0$ and $\Gamma$ and $\Sigma$ are full rank for
  any $(H, \Gamma, \Sigma)\in\mathcal{B}$ or (ii) in the case where
  $\mathcal{C}=\widetilde{\mathcal{C}}\times \mathbb{R}^{d_{g_2}}$ with
  $\widetilde{\mathcal{C}}$ compact, the same holds for the sub-matrices
  corresponding to the first $d_{g_1}$ moments.
\end{assumption}

Using these assumptions, we can verify that \Cref{khat_assump} holds
with weights $k_{\theta, P}$ that achieve the efficiency bound in
\Cref{oneside_efficiency_bound_thm} and nearly achieve the efficiency
bound in \Cref{twoside_efficiency_bound_thm}. This gives the following
results.

\begin{theorem}\label{optimal_khat_oneside_thm}
  Suppose that
  \Cref{theta_init_assump,g_clt_assump_append,,GammaHSigma_assump,scriptB_assump}
  hold and let $\hat c$ be defined above with
  $\hat k=k(\delta_\beta, \hat H, \hat \Gamma, \hat \Sigma)$. Then
  \begin{equation*}
    \liminf_{n\to\infty}\inf_{(\theta, P)\in \mathcal{S}_n}
    P(h(\theta)\in\hor{\hat{c}, \infty})\ge 1-\alpha
  \end{equation*}
  and
\begin{equation*}
  \limsup_{n\to\infty}
  \sup_{P\in\mathcal{P}}\sup_{\theta\in\Theta_I(P;\mathbb{R}^{d_\theta}\times \mathcal{D}, \Theta_n)}
  \left[\sqrt{n}q_{\beta, P}(h(\theta)-\hat c)
    -\omega(\delta_\beta;\mathbb{R}^{d_\theta}\times \mathcal{C}, \mathbb{R}^{d_\theta}\times
    \mathcal{D}, H_\theta, \Gamma_{\theta, P}, \Sigma_{\theta, P})
  \right]\le 0.
\end{equation*}
\end{theorem}
\begin{proof}
  In the case where $\mathcal{C}$ is compact, it follows from
  \Cref*{argmax_continuity_lemma} in \Cref*{optimal_weights_continuity_sec},
  $k(\delta, H,\Gamma, \Sigma)$ is continuous on $\{\delta\}\times\mathcal{B}$.
  Since $\mathcal{B}$ is compact, this means that $k(\delta, H,\Gamma, \Sigma)$
  is uniformly continuous. Thus, \Cref{GammaHSigma_assump} implies that $\hat k$
  satisfies \Cref{khat_assump} with
  $k_{\theta, P}=k(\delta, H_\theta, \Gamma_{\theta, P}, \Sigma_{\theta, P})$.
  Furthermore, $\hat k$ satisfies~\eqref{finite_bias_est_eq} by assumption. By
  properties of the modulus \citep[Equation (24) in][]{ArKo18optimal},
\begin{multline*}
\maxbias_{\mathcal{C}}(k_{\theta, P},0)-\minbias_{\mathcal{D}}(k_{\theta, P},0)
+(z_{1-\alpha}+z_\beta)\sqrt{k_{\theta, P}'\Sigma_{\theta, P}k_{\theta, P}} \\
=\omega(\delta_\beta;\mathbb{R}^{d_\theta}\times \mathcal{C}, \mathbb{R}^{d_\theta}\times \mathcal{D}, H_\theta, \Gamma_{\theta, P}, \Sigma_{\theta, P})
\end{multline*}
for this $k_{\theta, P}$.  Applying \Cref{khat_CI_thm} gives the result.

In the case where $\mathcal{C}=\widetilde{\mathcal{C}}\times \mathbb{R}^{d_{g_2}}$ with $\widetilde{\mathcal{C}}$ compact, the last $d_{g_2}$ elements of $\hat k$ are equal to zero as required by \Cref{khat_assump}, and the first $d_{g_1}$ elements are the same as the weights computed from the modulus problem with the last $d_{g_2}$ components thrown away and $H$, $\Gamma$ and $\Sigma$ redefined to be the sub-matrices corresponding to the first $d_{g_1}$ elements of the moments.  Thus, the same arguments apply in this case.
\end{proof}

For two-sided CIs, we consider weights
$\hat k=k(\delta^*(\hat{H}, \hat{\Gamma}, \hat{\Sigma}),
\hat{H}, \hat\Gamma, \hat\Sigma)$ given by~\eqref{khat_def_eq} with
$\mathcal{G}=\mathcal{F}=\mathbb{R}^{d_\theta}\times \mathcal{C}$, where
$\delta^*$ may depend on the data through $\hat H$, $\hat \Gamma$ and
$\hat \Sigma$. If $\delta^*$ is chosen to optimize CI length, it will be given
by $\delta_\chi(\hat H, \hat \Gamma, \hat\Sigma)$ where
\begin{equation}\label{flci_opt_delta_eq}
  \delta_\chi(H, \Gamma, \Sigma)=\argmin_\delta
  \cv_\alpha\left(\frac{\omega(\delta)}{2\omega'(\delta)}
  -\frac{\delta}{2}\right)\omega'(\delta)
\end{equation}
where $\omega(\delta)=\omega(\delta;\mathbb{R}^{d_\theta}\times \mathcal{C}, \mathbb{R}^{d_\theta}\times \mathcal{C}, H,\Gamma, \Sigma)$ is the single class modulus
\citep[see Section 3.4 in][]{ArKo18optimal}.

We make a continuity assumption on $\delta^*$.

\begin{assumption}\label{deltastar_continuous_assump}
$\delta^*$ is a continuous function of its arguments on the set $\mathcal{B}$ given in \Cref{scriptB_assump}.
\end{assumption}

\begin{theorem}\label{optimal_khat_twoside_thm}
  Suppose that
  \Cref{theta_init_assump,g_clt_assump_append,,GammaHSigma_assump,,scriptB_assump,deltastar_continuous_assump}
  hold and let $\hat h$ be defined above with
  $\hat k=k(\delta^*(\hat H, \hat \Gamma, \hat \Sigma), \hat{H}, \hat{\Gamma},
  \hat{\Sigma})$. Then the conclusion of \Cref{khat_twoside_thm} holds. If, in
  addition, $\delta^*=\delta_\chi(\hat H, \hat\Gamma, \hat\Sigma)$ for
  $\delta_\chi$ the CI length optimizing choice of $\delta$ given
  in~\eqref{flci_opt_delta_eq}, then the half-length $\hat\chi$ satisfies
  $\sqrt{n}\hat\chi=\chi(\theta, P)+o_P(1)$ uniformly over
  $(\theta, P)\in\mathcal{S}_n$, where
\begin{equation*}
  \chi(\theta, P)=\min_\delta \cv_\alpha\left(\frac{\omega(\delta)}{2\omega'(\delta)}-\frac{\delta}{2}\right)\omega'(\delta), \quad \omega(\delta)=\omega(\delta;\mathbb{R}^{d_\theta}\times \mathcal{C}, \mathbb{R}^{d_\theta}\times \mathcal{C}, H_{\theta}, \Gamma_{\theta, P}, \Sigma_{\theta, P}).
\end{equation*}
\end{theorem}
\begin{proof}
  The result follows from using the same arguments as in the proof of
  \Cref{optimal_khat_oneside_thm}, along with continuity of $\delta^*$,
  to verify \Cref{khat_assump}. The form of the limiting half-length
  for the optimal weights follows from properties of the modulus \citep[see
  Section 3.4 in][]{ArKo18optimal}.
\end{proof}

\subsection{Centrosymmetric case}\label{adaptation_sec_append}

\Cref{main_text_efficiency_bound_thm} in \Cref{efficiency_sec}
gives a bound for two-sided CIs in the case where $\mathcal{C}$ is
centrosymmetric. This follows from applying
\Cref{optimal_khat_twoside_thm,twoside_efficiency_bound_thm} in the
centrosymmetric case. In particular, comparing the asymptotic length in
\Cref{optimal_khat_twoside_thm} to the bound in
\Cref{twoside_efficiency_bound_thm} and using the fact that
$\omega(\delta;\mathbb{R}^{d_\theta}\times
\mathcal{C}, \{0\}, H_\theta, \Gamma_{\theta, P}, \Sigma_{\theta, P})=\omega(\delta;\{0\}, \mathbb{R}^{d_\theta}\times
\mathcal{C}, H_\theta, \Gamma_{\theta, P}, \Sigma_{\theta, P})=\frac{1}{2}\omega(2\delta;\mathbb{R}^{d_\theta}\times
\mathcal{C}, \mathbb{R}^{d_\theta}\times
\mathcal{C}, H_\theta, \Gamma_{\theta, P}, \Sigma_{\theta, P})$ when $\mathcal{C}$ is
centrosymmetric gives the bound
$\kappa_{*}(H_{\theta}, \Gamma_{\theta, P_0}, \Sigma_{\theta, P_0}, \mathcal{C})$ from
the statement of \Cref{main_text_efficiency_bound_thm}. This corresponds
to the bound in Corollary 3.3 of \citet{ArKo18optimal}. The universal lower
bound for $\kappa_{*}$ follows from the following result:
  \begin{theorem}\label{theorem:sharp-unversal-bound}
    For any $H, \Gamma, \Sigma$ and $\mathcal{C}$, the efficiency $\kappa_{*}$
    given in~\eqref{kappa_conv_cs_eq} is lower bounded by
    \begin{equation*}
      (z_{1-\alpha}(1-\alpha)-\tilde{z}_{\alpha}\Phi(\tilde{z}_{\alpha})+
      \phi(z_{1-\alpha})-\phi(\tilde{z}_{\alpha}))/ z_{1-\alpha/2}
    \end{equation*}
    where $\tilde{z}_{\alpha}=z_{1-\alpha}-z_{1-\alpha/2}$ and $\Phi$ and $\phi$
    denote the standard normal cdf, and pdf respectively. The lower bound is sharp in the sense that it holds with
    equality if $\omega(\delta)=K_{0}\min\{\delta,2z_{1-\alpha/2}\}$, for some
    constant $K_{0}$.
  \end{theorem}
  \begin{proof}
    Since $\cv_{\alpha}(b)\leq b+z_{1-\alpha/2}$, the
    denominator in~\eqref{kappa_conv_cs_eq} is upper-bounded by
    \begin{multline}\label{eq:denom-bound}
     \min_{\delta} 2\cv_{\alpha}\left(\frac{\omega(\delta)}{2
          \omega'(\delta)} -\frac{\delta}{2} \right)
      \omega'(\delta) \leq \\2\cv_{\alpha}\left(\frac{          \omega(2z_{1-\alpha/2})}{2 \omega'(2z_{1-\alpha/2})} -z_{1-\alpha/2}
      \right) \omega'(2z_{1-\alpha/2})\leq \omega(2z_{1-\alpha/2}).
    \end{multline}

    On the other hand, the numerator in~\eqref{kappa_conv_cs_eq}
    can be decomposed as
    \begin{multline*}
      (1-\alpha)E\left[\omega(2(z_{1-\alpha}-Z))\mid Z\leq z_{1-\alpha} \right]
      = E\left[\omega(2(z_{1-\alpha}-Z))\1{Z\leq z_{1-\alpha}-z_{1-\alpha/2}} \right]\\
      +E\left[\omega(2(z_{1-\alpha}-Z))\1{z_{1-\alpha}-z_{1-\alpha/2}\leq Z\leq
          z_{1-\alpha}} \right].
    \end{multline*}
    Since the modulus $\omega(\delta)$ is non-decreasing, the first summand is
    lower-bounded by
    \begin{equation*}
      E\left[\omega(2z_{1-\alpha/2})\1{Z\leq z_{1-\alpha}-z_{1-\alpha/2}}
      \right]= \omega(2z_{1-\alpha/2})\Phi(z_{1-\alpha}-z_{1-\alpha/2}).
    \end{equation*}
    Since the modulus $\omega(\delta)$ is concave,
    $\omega(2(z_{1-\alpha}-Z))\geq
    (z_{1-\alpha}-Z)/z_{1-\alpha/2}\cdot\omega(2z_{1-\alpha/2})$, so that the
    second summand is lower-bounded by
    \begin{multline*}
      \frac{\omega(2z_{1-\alpha/2})}{z_{1-\alpha/2}}E\left[(z_{1-\alpha}-Z)\1{z_{1-\alpha}-z_{1-\alpha/2}\leq
          Z\leq
          z_{1-\alpha}} \right]\\
      = \frac{\omega(2z_{1-\alpha/2})}{z_{1-\alpha/2}}
      \left(z_{1-\alpha}(1-\alpha-\Phi(z_{1-\alpha}-z_{1-\alpha/2}))
        +\phi(z_{1-\alpha})-\phi(z_{1-\alpha}-z_{1-\alpha/2}) \right),
    \end{multline*}
    where the equality follows by the formula for the expectation of a truncated
    normal random variable. Combining the two preceding displays then yields
    \begin{multline}\label{eq:num-bound}
      (1-\alpha)E\left[\omega(2(z_{1-\alpha}-Z))\mid Z\leq z_{1-\alpha} \right]\\
      \geq
      \omega(2z_{1-\alpha/2})\frac{z_{1-\alpha}(1-\alpha)-\tilde{z}_{\alpha}
        \Phi(\tilde{z}_{\alpha})
        +\phi(z_{1-\alpha})-\phi(\tilde{z}_{\alpha})}{z_{1-\alpha/2}},
    \end{multline}
    where $\tilde{z}_{\alpha}=z_{1-\alpha}-z_{1-\alpha/2}$. Combining this with
    the bound in~\eqref{eq:denom-bound} then yields the result. The sharpness of
    the bound for the case $\omega(\delta)=K_{0}\min\{\delta,2z_{1-\alpha/2}\}$
    follows from by noting that in this case, both~\eqref{eq:denom-bound}
    and~\eqref{eq:num-bound} hold as equalities.
  \end{proof}

For the one-sided case, we obtain the following bound.

\begin{theorem}\label{theorem:onesided-efficiency}
  Consider the setting of \Cref{optimal_khat_oneside_thm}, with
  $\mathcal{C}$ centrosymmetric. Then the weights
  $\hat k=\hat k(\delta_\beta, \hat H, \hat \Gamma, \hat \Sigma)$ with
  $\mathcal{D}=\mathcal{C}$ are identical to the weights
  $\hat k(\delta_{\tilde\beta}, \hat H, \hat \Gamma, \hat \Sigma)$ computed with
  $\mathcal{D}=\{0\}$, but with $\tilde \beta=\Phi((z_\beta-z_{1-\alpha})/2)$.
  Furthermore, letting $\hat c_{\operatorname{minimax}}$ denote the lower
  endpoint of the CI computed with these weights
  ($\hat k(\delta_\beta, \hat H, \hat \Gamma, \hat \Sigma)$ with
  $\mathcal{D}=\mathcal{C}$), we have
\begin{equation*}
  \limsup_{n\to\infty}
  \sup_{P\in\mathcal{P}}\sup_{\theta\in\Theta_I(P;\mathbb{R}^{d_\theta}\times \{0\}, \Theta_n)}
  \left\{\sqrt{n}q_{\beta, P}(h(\theta)-\hat c_{\operatorname{minimax}})
  -\frac{1}{2}\left[
  \omega_{\theta, P}(\delta_\beta)+\delta_\beta \omega'_{\theta, P}(\delta_\beta)
  \right]
  \right\}\le 0
\end{equation*}
where
$\omega_{\theta, P}(\delta)=\omega(\delta;\mathbb{R}^{d_\theta}\times
\mathcal{C}, \mathbb{R}^{d_\theta}\times
\mathcal{C}, H_\theta, \Gamma_{\theta, P}, \Sigma_{\theta, P})$. For $\hat c$
computed instead with $\mathcal{D}=\{0\}$, we obtain
\begin{equation*}
  \limsup_{n\to\infty}
  \sup_{P\in\mathcal{P}}\sup_{\theta\in\Theta_I(P;\mathbb{R}^{d_\theta}\times \{0\}, \Theta_n)}
\left\{\sqrt{n}q_{\beta, P}(h(\theta)-\hat c)
  -\frac{1}{2}
      \omega_{\theta, P}(2\delta_\beta)
\right\}\le 0.
\end{equation*}
\end{theorem}
\begin{proof}
The first statement follows from Corollary~3.2 in \citet{ArKo18optimal}.
The second statement follows from applying \Cref{khat_CI_thm} as in the proof of \Cref{optimal_khat_oneside_thm}, noting that $\minbias_{\{0\}}(k_{\theta, P},0)=0$, and using arguments from the proof of Corollary~3.2 in \citet{ArKo18optimal}.
The last statement follows from \Cref{optimal_khat_oneside_thm} and the fact that
$\omega(\delta;\mathbb{R}^{d_\theta}\times \mathcal{C}, \mathbb{R}^{d_\theta}\times \{0\}, H_\theta, \Gamma_{\theta, P}, \Sigma_{\theta, P})=\frac{1}{2}\omega(2\delta;\mathbb{R}^{d_\theta}\times \mathcal{C}, \mathbb{R}^{d_\theta}\times \mathcal{C}, H_\theta, \Gamma_{\theta, P}, \Sigma_{\theta, P})$.
\end{proof}

Thus, directing power toward the correctly specified case yields the same
one-sided CI once one changes the quantile over which one optimizes excess
length. If one does attempt to direct power, the scope for doing so is bounded
by a factor of
\begin{equation}\label{eq:one-sided-efficiency}
\kappa_{*}^{\operatorname{OCI}, \beta}(H_{\theta}, \Gamma_{\theta, P_0}, \Sigma_{\theta, P_0}, \mathcal{C})
=\frac{\omega_{\theta, P}(2\delta_\beta)}{\omega_{\theta, P}(\delta_\beta)+\delta_\beta \omega'_{\theta, P}(\delta_\beta)}.
\end{equation}
This gives a bound for the one-sided case analogous to the bound $\kappa_{*}$
in~\eqref{kappa_conv_cs_eq} for two-sided CIs.

A consistent estimate of these bounds can be obtained by plugging in
$\omega(\delta;\mathbb{R}^{d_\theta}\times
\mathcal{C}, \mathbb{R}^{d_\theta}\times \mathcal{C}, \hat H, \hat \Gamma,
\hat{\Sigma})$ for
$\omega_{\theta, P}(\delta)=\omega(\delta;\mathbb{R}^{d_\theta}\times
\mathcal{C}, \mathbb{R}^{d_\theta}\times
\mathcal{C}, H_\theta, \Gamma_{\theta, P}, \Sigma_{\theta, P})$.
\Cref{tab:efficiency} reports estimates of this bound under different forms
of misspecification in the empirical application in \Cref{empirical_sec}.

\section{Global Misspecification}\label{global_misspec_sec_append}

We now describe two approaches to the construction of CIs that are robust to
global misspecification. The first approach is generally applicable, and, for a
one-sides CIs, yields CIs that are asymptotically equivalent under
local misspecification to the CIs proposed in the main text. The second approach
also exhibits this equivalence property for two-sided CIs, but the regularity
conditions it imposes may not be satisfied in all applications.

Before describing the procedures
in~\Cref{sec:cis-based-recent,sec:cis-based-missp} below, let us briefly
describe the setup. Under global misspecification, the true parameter $\theta_0$
satisfies
\begin{equation}\label{global_misspec_eq}
  g_P(\theta_0)=E_{P}g(w_i, \theta_0)=\cglob, \quad \cglob\in \Cglob,
\end{equation}
where $\Cglob$ is fixed with the sample size $n$. To accommodate both local and
global misspecification with the same notation, we consider a sequence
$\Cglob=\Cglob_n$ of sets. Under global misspecification $\Cglob_n$ is fixed
with $n$, whereas, under local misspecification,
$\Cglob_n=\mathcal{C}/\sqrt{n}=\{c/\sqrt{n}: c\in\mathcal{C}\}$ where
$\mathcal{C}$ is fixed with $n$. The rest of the setup is the same as the formal
setup in \Cref{efficiency_bound_sec_main,efficiency_sec_append}: we are
interested in a CI for $h(\theta)$ that satisfies the coverage
condition~\eqref{coverage_eq}, where
$\mathcal{S}_n=\{(\theta, P)\in \Theta_n\times\mathcal{P}: g_P(\theta)\in
\Cglob\}$ denotes the set of pairs $(\theta, P)$ such that $\theta$ is in the
identified set under $P$.

For concreteness, we focus on GMM estimators. We treat the weighting matrix $W$
as given, and construct CIs that are asymptotically equivalent to the CIs given
in \Cref{sec:asympt-line-estim} with sensitivity
$k'=-H(\Gamma'W\Gamma)^{-1}\Gamma' W$. To make these CIs optimal under local
misspecification, we need to choose the weighting matrix $W$ so that this
sensitivity is optimal under local misspecification. This can be done by
computing the optimal sensitivity $\hat{k}$ under local misspecification using
first stage estimates, following our implementation in
\Cref{sec:implementation}, and then computing an equivalent GMM weighting matrix
as described in \Cref{weighting_matrix_remark}.

\subsection{CIs based on recentering the moments}\label{sec:cis-based-recent}

Let $\mathcal{I}_\cglob$ be a family of CIs, indexed by $\cglob\in\Cglob$, such
that, for each $\cglob$, $\mathcal{I}_\cglob$ is asymptotically valid for the
GMM model defined by the moment function $\theta\mapsto g(w_i, \theta)-\cglob$.
We consider the CI $\mathcal{I}=\cup_{\cglob\in\Cglob}\mathcal{I}_\cglob$. Since
this CI contains a CI based on the moment conditions
$\theta\mapsto \hat g(\theta)-\cglob_0$, where $\cglob_0=E_{P}g(w_i, \theta_0)$
is the true value of $\cglob$, it will have correct asymptotic coverage under
standard conditions. (While we omit a formal statement, we note that this
follows by showing that $\mathcal{I}_{c_n}$ has correct coverage under a
drifting sequence of moment functions
$\tilde g_n(w_i, \theta)=g(w_i, \theta)-\cglob_n$ for sequences
$\cglob_n\in\Cglob$. This follows from the usual arguments for asymptotic
coverage of GMM estimators under correct specification.) As we discuss in
\Cref{global_computation_sec} below, this CI can be computed using constrained
optimization, where the objective function will typically be smooth so long as
$\theta\mapsto g(w_i, \theta)$ is smooth. However, since the problem involves
minimization of a GMM objective function, it will typically not be convex.

We now show that this approach can be used to construct a one-sided CI that is
asymptotically equivalent under local misspecification to the CI proposed in the
main text. We consider CIs based on the GMM estimator
$\hat\theta_{W, \cglob}= \arg\min_\theta [\hat g(\theta)-\cglob]'W
[\hat{g}(\theta)-\cglob]$ based on the moment function $g(w_i, \theta)-\cglob$.
Let $\hat\Gamma_\theta$ denote an estimate of the Jacobian matrix
$\Gamma_{\theta, P}=\frac{d}{d t'}E_{P}g(w_i, t)|_{t=\theta}$. Under local
misspecification, the identified set for $\theta$ shrinks toward a point, so
that the same estimate $\hat\Sigma$ is consistent for $\Sigma_{\theta, P}$
uniformly over the identified set. This is no longer the case under global
misspecification, and we instead need to use a class of estimates
$\hat\Sigma_{\theta}$ for $\Sigma_{\theta, P}$ indexed by $\theta$. In the
i.i.d.\ case, we can take
$\hat\Sigma_{\theta}=\frac{1}{n}\sum_{i=1}^{n} g(w_i, \theta)g(w_i, \theta)'$,
in which case the estimate of $\hat\Sigma$ in \Cref{sec:implementation}
corresponds to $\hat{\Sigma}=\hat\Sigma_{\hat{\theta}_{\text{initial}}}$. We
allow for the possibility that $W$ is data dependent, in which case we will
assume that it converges to some $W_{P}$, which may depend on the underlying
distribution $P$. Let
$\hat k_{\theta}'=-H_\theta\left( \hat\Gamma_{\theta}'W\hat\Gamma_{\theta}
\right)^{-1}\hat\Gamma_{\theta}'W$ denote an estimate of the sensitivity
$k_{\theta, P}'=-H_\theta \left(\Gamma_{\theta, P}'W_P\Gamma_{\theta, P}
\right)^{-1}\Gamma_{\theta, P}'W$. This gives the standard error
$\se_{\hat\theta_{W, \cglob}}$ where
$\se_{\theta}=\sqrt{\hat k_{\theta}'\hat\Sigma_{\theta}\hat k_{\theta}/n}$. The
nominal $1-\alpha$ one-sided CI based on the given $\cglob$ is then
$\hor{h(\hat\theta_{W, \cglob})-z_{1-\alpha}\se_{W, \cglob}, \infty}$. The lower
endpoint of a one-sided CI that is asymptotically valid for the set $\Cglob$ is
then given by $\hor{\hat c_{\text{glob}}, \infty}$ where
\begin{equation*}
  \hat c_{\text{glob}} =
  \inf_{\cglob\in\Cglob} \, \left[h(\hat\theta_{W, \cglob})-z_{1-\alpha}\se_{\hat\theta_{W, \cglob}} \right].
\end{equation*}
The CI proposed in the main text takes the form $\hor{\hat c_{\text{loc}}, \infty}$ where
\begin{equation*}
   \hat c_{\text{loc}} = h(\hat\theta_{\text{initial}}) + \hat k' \hat g(\hat\theta_{\text{initial}})
  - \sup_{c\in \mathcal{C}} \hat k' c/\sqrt{n} - z_{1-\alpha} \sqrt{\hat k\hat\Sigma\hat k}/\sqrt{n}.
\end{equation*}
We assume that the sensitivity $\hat k$ corresponds to the same GMM weighting
matrix, so that \Cref{theta_init_assump,,g_clt_assump_append,khat_assump} hold
with $k_{\theta, P}$, $\Gamma_{\theta, P}$, $H_\theta$ and $\Sigma_{\theta, P}$
given above. In addition, we use some further regularity conditions for the
estimator $\hat\theta_{W, \cglob}$.

\begin{assumption}\label{glob_misspec_ci_local_assump}
  The estimator $\hat\theta_{W, \cglob}$ satisfies
  \begin{equation*}
   \sup_{c\in\mathcal{C}}\left| h(\hat\theta_{W, c/\sqrt{n}})-h(\theta)
  - k_{\theta,P}'[\hat g(\theta)-c/\sqrt{n}] \right|=o_P(1/\sqrt{n})
  \end{equation*}
  and
  \begin{equation*}
    \sup_{c\in\mathcal{C}}\left| \hat k_{\hat\theta_{W, c/\sqrt{n}}}'\hat\Sigma_{\hat\theta_{W, c/\sqrt{n}}}\hat k_{\hat\theta_{W, c/\sqrt{n}}}- k_{\theta, P}'\Sigma_{\theta, P}k_{\theta, P} \right|=o_P(1)
  \end{equation*}
  uniformly over $(\theta, P)\in\mathcal{S}_n$, where $\Cglob=\Cglob_n=\mathcal{C}/\sqrt{n}$.
\end{assumption}
\Cref{glob_misspec_ci_local_assump} imposes an influence function representation
for the GMM estimator $\hat\theta_{W, c}$, and a uniform consistency condition
for the asymptotic variance estimator. We verify this assumption under primitive
conditions in the linear IV setting in
\Cref*{global_misspec_linear_iv_sec_append}. Note that, while the CI is robust
to global misspecification, \Cref{glob_misspec_ci_local_assump} imposes
conditions that are required only under local misspecification.

\begin{theorem}\label{glob_misspec_ci_local_thm}
  Let $\hat c_{\text{loc}}$ and $\hat c_{\text{glob}}$ be given above with
  $\Cglob=\mathcal{C}/\sqrt{n}$, where $\mathcal{C}$ is a compact set. Suppose
  that \Cref{theta_init_assump,g_clt_assump_append,,khat_assump} hold for
  $\hat c_{\text{loc}}$, and \Cref{glob_misspec_ci_local_assump} holds for
  $\hat c_{\text{glob}}$, with the same $k_{\theta, P}$ and $\Sigma_{\theta, P}$,
  and assume that, for each $P\in\mathcal{P}$, there exists $\theta\in\Theta_n$
  such that $(\theta, P)\in \mathcal{S}_n$ (i.e.\ the identified set for $\theta$
  under each $P\in\mathcal{P}$ is nonempty). Then
  $\sqrt{n}(\hat c_{\text{loc}}-\hat c_{\text{glob}})$ converges in probability
  to zero uniformly over $P\in\mathcal{P}$.
\end{theorem}
\begin{proof}
Under these assumptions,
\begin{multline*}
  \hat c_{\text{glob}}=\inf_{c\in\mathcal{C}} h(\theta) + k_{\theta, P}'\hat g(\theta) - k'_{\theta,P}c/\sqrt{n}
  - z_{1-\alpha}\sqrt{k_{\theta, P}'\Sigma_{\theta, P}k_{\theta, P}}/\sqrt{n}+o_P(1/\sqrt{n}) \\
  = h(\theta) + k_{\theta, P}'\hat g(\theta) - \maxbias_{\mathcal{C}}(k_{\theta,P},0)/\sqrt{n}
  - z_{1-\alpha}\sqrt{k_{\theta, P}'\Sigma_{\theta, P}k_{\theta, P}}/\sqrt{n}+o_P(1/\sqrt{n})
\end{multline*}
uniformly over $(\theta, P)\in\mathcal{S}_n$. The result follows by noting that
this matches the asymptotic expression for $\hat c_{\text{loc}}$ given in
\Cref{oneside_asymptotic_eq} in the proof of \Cref{khat_CI_thm}.
\end{proof}

The CI $\hor{\hat c_{\text{glob}}, \infty}$ is valid under global
misspecification under standard conditions, and \Cref{glob_misspec_ci_local_thm}
shows that this CI is asymptotically equivalent to the CI
$\hor{\hat c_{\text{loc}}, \infty}$ under local misspecification.

\subsubsection{Computation}\label{global_computation_sec}

The problem of computing this CI can be written as a nested optimization
problem, in which one minimizes the GMM objective function for a given $\cglob$,
and then minimizes the lower endpoint of the CI over $\cglob$.  Alternatively,
in the spirit of recent papers on MPEC \citep[e.g.][]{dube_improving_2012},
one can write this as an optimization problem over $\theta, \cglob$ subject to
the constraint that $\theta$ minimizes the GMM objective under $\cglob$:
\begin{equation*}
  \min_{\theta, \cglob} h(\theta) - z_{1-\alpha} \se_{\theta}
  \quad\text{s.t.}\quad \theta=\arg\min_{t} \left[ \hat g(t)-\cglob \right]'W \left[ \hat g(t)-\cglob \right], \quad \cglob\in\Cglob.
\end{equation*}
In the case where $\hat g(\theta)$ is smooth, one may also relax the constraint
that $\theta$ minimizes the GMM objective by instead imposing only the first
order conditions:
\begin{equation*}
  \min_{\theta, \cglob} h(\theta) - z_{1-\alpha} \se_{\theta}
  \quad\text{s.t.}\quad
  \hat\Gamma_\theta' W\left[ \hat g(\theta)-\cglob \right]=0,
  \quad \cglob\in\Cglob.
\end{equation*}
Since a constraint is relaxed, this can only make the resulting CI more conservative.

\subsection{CIs based on misspecification-robust standard errors}\label{sec:cis-based-missp}

In some cases, it may be possible to construct estimates $\hat{B}$ of the
worst-case asymptotic bias of the estimator $\hat{h}$ that are asymptotically
normal. In such cases, one can construct CIs that are valid under global
misspecification by using misspecification-robust standard errors.

Let $\theta^{*}_{c}=\argmin_{\theta}(g_{P}(\theta)-c)'W(g_{P}(\theta)-c)$, so
that under the model~\eqref{global_misspec_eq}, the true parameter is given by
$\theta_{0}=\theta^{*}_{\cglob}$, and the pseudo-true parameter, the estimand of
the GMM estimator $\hat{\theta}$ with weighting matrix $W$, is given by
$\theta^{*}_{0}$. Note that $\theta^{*}_{c}$ depends on $P$, but we leave this
dependence implicit for clarity of notation. \citet{HaIn03} show that under
regularity conditions, if $\Cglob$ is fixed with $n$,
\begin{equation*}
  \sqrt{n}(\hat{h}-h(\theta_{0}^{*}))\indist \mathcal{N}(0,\Omega_{h}),
\end{equation*}
where, unlike in the correctly specified case, the asymptotic variance
$\Omega_{h}$ generally depends on the weighting matrix $W$. \citet{HaIn03} also
show how to construct consistent estimates of $\Omega_{h}$. Let
$B_{\Cglob}=\max_{\cglob\in\Cglob}\abs{h(\theta^{*}_{0})-h(\theta^{*}_{\cglob})}$
denote the worst-case bias. Suppose that we have available an estimator
$\hat{B}$ of $B_{\Cglob}$, such that $(\hat{h},\hat{B})$ are jointly
asymptotically normal. Then, by the delta method, for some asymptotic variances
$\Omega_{+}$ and $\Omega_{-}$, it holds that
\begin{align}\label{eq:global_misspec_normal}
  \frac{\sqrt{n}(\hat{h}-\hat{B}-(h(\theta^{*}_{0})-B_{\Cglob}))}{\Omega^{1/2}_{-}}
  & \indist\mathcal{N}(0, 1),
  & \frac{\sqrt{n}(\hat{h}
    +\hat{B}-(h(\theta^{*}_{0})+B_{\Cglob}))}{\Omega^{1/2}_{+}}
  & \indist\mathcal{N}(0,1).
\end{align}
Note that $B_{\Cglob}$, $\Omega_{+}$ and $\Omega_{-}$ may depend on $P$,
although we leave this implicit in the notation.

If one has available estimators $\hat{\Omega}_{+}$ and $\hat{\Omega}_{-}$ that satisfy
\begin{align}\label{eq:global_misspec_variance}
  \hat{\Omega}_{+}/\Omega_{+}&\inprob 1,&
  \hat{\Omega}_{-}/\Omega_{-}&\inprob 1,
\end{align}
one can construct one-sided CIs as
$\hor{\hat{h}-\hat{B}-z_{1-\alpha} \hat{\Omega}_{-}^{1/2}/\sqrt{n}, \infty}$,
and $\hol{-\infty,\hat{h}+\hat{B}+z_{1-\alpha}
  \hat{\Omega}_{+}^{1/2}/\sqrt{n}}$. Two-sided CIs can be constructed as
\begin{equation*}
  \widetilde{CI}= \left[\hat{h}-
    \cv_{\alpha}(\sqrt{n}\hat{B}/\hat{\Omega}_{-}^{1/2})\cdot \hat{\Omega}_{-}^{1/2}/\sqrt{n},
    \hat{h}+\cv_{\alpha}(\sqrt{n}\hat{B}/\hat{\Omega}_{+}^{1/2})\cdot \hat{\Omega}_{+}^{1/2}/\sqrt{n}
  \right].
\end{equation*}
The next result shows that these CIs are valid under global misspecification.

\begin{theorem}\label{theorem:global_misspec_validity}
  Suppose that the convergence in~\eqref{eq:global_misspec_normal}
  and~\eqref{eq:global_misspec_variance} holds uniformly over
  $(\theta,P)\in \mathcal{S}_{n}=\{(\theta,P)\in \Theta_n\times\mathcal{P}:
  g_P(\theta)\in \Cglob\}$, with $\Cglob$ is fixed. Then
  \begin{equation*}
    \liminf_{n\to\infty}\inf_{(\theta,P)\in\mathcal{S}_{n}}P(h(\theta)\in
\hor{\hat{h}-\hat{B}-z_{1-\alpha} \hat{\Omega}_{-}^{1/2}/\sqrt{n}, \infty})\geq 1-\alpha.
  \end{equation*}
  Suppose, in addition, that $\hat{B}/B_{\Cglob}\inprob 1$ and
  $\abs{\Omega_{+}-\Omega_{-}}/\sqrt{n}B_{\Cglob}\to 0$ uniformly over
  $(\theta,P)\in \mathcal{S}_{n}$. Then
  \begin{equation*}
    \liminf_{n\to\infty}\inf_{(\theta,P)\in\mathcal{S}_{n}}P(h(\theta)\in\widetilde{CI})\geq 1-\alpha.
  \end{equation*}
\end{theorem}
The proof of this \namecref{theorem:global_misspec_validity} is deferred to
\Cref{sec:auxiliary-results}. Under global misspecification, when
$\sqrt{n}B_{\Cglob}\to\infty$, the condition
$\abs{\Omega_{+}-\Omega_{-}}/\sqrt{n}B_{\Cglob}\to 0$ holds if $\Omega_{+}$ and
$\Omega_{-}$ are of the same order, which is typically the case. In this case,
$\widetilde{CI}$ is asymptotically equivalent to the CI
$[\hat{h}-\hat{B}-z_{1-\alpha}\hat{\Omega}_{-}/\sqrt{n},\hat{h}+\hat{B}+
z_{1-\alpha}\hat{\Omega}_{+}/\sqrt{n}]$. Since in large samples, the uncertainty
about the endpoints of the identified set
$[h(\theta^{*}_{0})-B_{\Cglob},h(\theta^{*}_{0})+B_{\Cglob}]$ dominates by the
uncertainty about the location of the endpoints, it suffices to use a one-sided
critical value $z_{1-\alpha}$ (see \citealp{im04}, for a discussion).

Under local misspecification, if the estimator $\hat{\theta}$ is asymptotically
linear with sensitivity $k$,
$\sqrt{n}B_{\Cglob}=\maxbias_{\sqrt{n}\mathcal{\Cglob}}(k)$ is bounded, so that
the condition $\abs{\Omega_{+}-\Omega_{-}}/\sqrt{n}B_{\Cglob}\to 0$ holds if
$\Omega_{+}-\Omega_{-}\to 0$. This is indeed the case if
$\sqrt{n}\hat{B}=\maxbias_{\sqrt{n}\mathcal{\Cglob}}(k)+o_{p}(1)$, since then
$\Omega_{+}$ and $\Omega_{-}$ both equal $k'\Sigma k$. In this case
$\widetilde{CI}$ is asymptotically equivalent to the CI in \Cref{h_ci_eq}. The
CI thus automatically adapts to the misspecification magnitude.

\begin{example}[linear IV model]\label{example:linear_iv_glob_misspec}
  To give an example of a setting in which the
  condition~\eqref{eq:global_misspec_normal} holds, consider the linear
  instrumental variables (IV) model from \Cref{iv_sec}. In particular, suppose
  $h=H\theta$, and suppose that
  $g(\theta)=E[z_{i}(y_{i}-x_{i}'\theta)]\in \Cglob$, with
  $\Cglob=\{\sqrt{n} B\gamma\colon\norm{\gamma}\leq M_{n}\}$,
  $B=E[z_{i}z_{Ii}']$, and for concreteness, suppose $\norm{\cdot}$ corresponds
  to an $\ell_{2}$ norm. If $M_{n}=M$ is fixed, this reduces to the local
  misspecification setup in the main text, but if $M_{n}=\sqrt{n}M$, the
  misspecification is global. Consider the 2SLS estimator
  $\hat{h}=\hat{k}'\sum_{i=1}^{n}z_{i}y_{i}$, with
  $\hat{k}=-H(\hat{\Gamma}'\hat{W}\hat{\Gamma})^{-1}\hat{\Gamma}'\hat{W}$,
  $\hat{\Gamma}=-n^{-1}\sum_{i=1}^{n}z_{i}x_{i}'$ and
  $\hat{W}^{-1}=\sum_{i=1}^{n}z_{i}z_{i}'$; let $\Gamma=-E[z_{i}x_{i}']$,
  $W=E[z_{i}z_{i}']^{-1}$, and $k=-H(\Gamma'W\Gamma)^{-1}\Gamma'W$.

  Then $h(\theta^{*}_{0})=H\theta+k'E[z_{i}z_{Ii}']\gamma/\sqrt{n}$, and
  $B_{\Cglob}=\norm{k'E[z_{i}z_{Ii}']}M_{n}/\sqrt{n}$. Consider the estimator
  $\hat{B}=\norm{\hat{k}'\frac{1}{n}\sum_{i=1}^{n}z_{i}z_{Ii}'}M_{n}/\sqrt{n}$
  of the worst-case bias, which is the same as the estimator
  $\maxbias_{\sqrt{n}\Cglob}(\hat{k})/\sqrt{n}$ under local misspecification
  used in the main text. Since $\hat{B}$ and $\hat{h}$ depend on the data only
  through the sample means
  $S_{n}=n^{-1}(\sum_{i}z_{i}z_{i}',\sum_{i}x_{i}z_{i}',\sum_{i}x_{i}y_{i}')$,
  \Cref{eq:global_misspec_normal} holds by the delta method, and consistent
  estimates $\hat{\Omega}_{+}$ and $\hat{\Omega}_{-}$ of $\Omega_{+}$ and
  $\Omega_{-}$ can be constructed using a consistent estimator of the asymptotic
  variance of $S_{n}$, which yields the CI
  \begin{equation*}
    \widetilde{CI}=
    \left[\hat{h}-
      \cv_{\alpha}(\maxbias_{\sqrt{n}\Cglob}(\hat{k})/
      \hat{\Omega}_{-}^{1/2})\cdot \hat{\Omega}_{-}^{1/2}/\sqrt{n},
      \hat{h}+\cv_{\alpha}(\maxbias_{\sqrt{n}\Cglob}(\hat{k})
      /\hat{\Omega}_{+}^{1/2})\cdot \hat{\Omega}_{+}^{1/2}/\sqrt{n}
    \right].
  \end{equation*}
  Thus, relative to the CI described in the main text, $\widetilde{CI}$ differs
  only in that it uses variance estimates $\hat{\Omega}_{+}$ and
  $\hat{\Omega}_{-}$ that are valid under global misspecification. If $M_{n}=M$,
  so that misspecification is local, $\hat{\Omega}_{+}=k'\Sigma k+o_{p}(1)$ and
  $\hat{\Omega}_{+}=k'\Sigma k+o_{p}(1)$, and the CI is asymptotically
  equivalent to the CI described in the main text.
\end{example}

\subsubsection{Proof of
Theorem~\ref{theorem:global_misspec_validity}}\label{sec:auxiliary-results}

\begin{proof}
  Let
  $Z_{-}=\sqrt{n}(\hat{h}-\hat{B}-(h(\theta^{*}_{0})-B_{\Cglob}))/\Omega^{1/2}_{-}$,
  and let
  $Z_{+}=\sqrt{n}(\hat{h}+\hat{B}-(h(\theta^{*}_{0})+B_{\Cglob}))/\Omega^{1/2}_{+}$. Then
  \begin{multline*}
      P(h(\theta)\geq
      \hat{h}-\hat{B}-z_{1-\alpha}\hat{\Omega}_{-}^{1/2}/\sqrt{n}) \\=
      P(Z_{-}\leq
      \sqrt{n}(B_{\Cglob}-(h(\theta^{*}_{0})-h(\theta)))/\Omega_{-}^{1/2}+
      z_{1-\alpha}\hat{\Omega}_{-}^{1/2}/\Omega_{-}^{1/2})\\
      \geq      P(Z_{-}\leq
      z_{1-\alpha}\hat{\Omega}_{-}^{1/2}/\Omega_{-}^{1/2})\geq 1-\alpha+o(1),
  \end{multline*}
  where the equality follows by definition of $Z_{-}$, the first inequality
  follows since $B_{\Cglob}\geq h(\theta^{*}_{0})-h(\theta)$ by definition of
  $B_{\Cglob}$, and the second inequality follows since
  $\hat{\Omega}_{-}^{1/2}/\Omega_{-}^{1/2}\inprob 1$,
  $Z_{-}\indist \mathcal{N}(0,1)$, and since convergence in distribution to a
  continuous distribution implies uniform convergence of the cdfs \citep[Lemma
  2.11]{van_der_vaart_asymptotic_1998}. To show the result for the two-sided CI,
  let $b=h(\theta^{*})-h(\theta)$ denote the asymptotic bias. Then
\begin{equation*}
  \begin{split}
    P(h(\theta)\in\widetilde{CI}) &= P(
    \cv_{\alpha}(\sqrt{n}\hat{B}/\hat{\Omega}_{-}^{1/2})\cdot
    \hat{\Omega}_{-}^{1/2}/\sqrt{n} \geq \hat{h}-h(\theta)\geq
    -\cv_{\alpha}(\sqrt{n}\hat{B}/\hat{\Omega}_{+}^{1/2})\cdot
    \hat{\Omega}_{+}^{1/2}/\sqrt{n} )\\
    &= P( Z_{-}+\sqrt{n}b/\Omega^{1/2}_{-}\leq A_{-} )+ P(
    -Z_{+}-\sqrt{n}b/\Omega^{1/2}_{+}\leq A_{+}) -1,
  \end{split}
\end{equation*}
where
$A_{-}=\sqrt{n}(B_{\Cglob}-\hat{B})/\Omega^{1/2}_{-}+
\cv_{\alpha}(\sqrt{n}\hat{B}/\hat{\Omega}_{-}^{1/2})\cdot
\hat{\Omega}_{-}^{1/2}/\Omega^{1/2}_{-} $ and
$A_{+}= \cv_{\alpha}(\sqrt{n}\hat{B}/\hat{\Omega}_{+}^{1/2})\cdot
\hat{\Omega}_{+}^{1/2}
/\Omega_{+}^{1/2}+\sqrt{n}(B_{\Cglob}-\hat{B})/\Omega^{1/2}_{+}$. Now, since
$\hat{\Omega}_{-}/\Omega_{-}\inprob 1$, applying first \Cref{eq:cv1} and
next~\Cref{eq:cv2} in~\Cref{lemma:auxiliary-results-cv} below yields
\begin{equation*}
  A_{-}=\sqrt{n}(B_{\Cglob}-\hat{B})/\Omega^{1/2}_{-}+
  \cv_{\alpha}(\sqrt{n}\hat{B}/\Omega_{-}^{1/2})+o_{p}(1)
  =\cv_{\alpha}(\sqrt{n}B_{\Cglob} / \Omega_{-}^{1/2})+o_{p}(1),
\end{equation*}
where the $o_{p}(1)$ term is asymptotically negligible uniformly over
$\mathcal{S}_{n}$. By similar argument,
$A_{+}=\cv_{\alpha}(\sqrt{n}B_{\Cglob} / \Omega_{+}^{1/2})+o_{p}(1)$. By
\Cref{eq:global_misspec_normal}, it therefore follows that, up to a term that's
asymptotically negligible uniformly over $\mathcal{S}_{n}$,
$P(h(\theta)\in \widetilde{CI})$ equals
\begin{equation}\label{eq:coverage-bound}
  P( Z+\sqrt{n}b/\Omega^{1/2}_{-}\leq \cv_{\alpha}(\sqrt{n}B_{\Cglob} / \Omega_{-}^{1/2}) )
  + P(Z-\sqrt{n}b/\Omega^{1/2}_{+}\leq \cv_{\alpha}(\sqrt{n}B_{\Cglob} / \Omega_{+}^{1/2})) -1,
\end{equation}
where $Z$ denotes as standard normal random variable. Fix $\epsilon>0$. To
conclude the proof, we will show that for $n$ large enough, this expression is
bounded below by $1-\alpha-\epsilon$.

Since~\Cref{eq:coverage-bound} is symmetric in $\Omega_{+}$ and $\Omega_{-}$,
suppose without loss of generality that $\Omega_{+}>\Omega_{-}$. By the
assumption of the~\namecref{theorem:global_misspec_validity}, for $n$ large
enough and $\eta>0$ specified below,
$\abs{\Omega_{-}^{1/2}/\Omega^{1/2}_{+}-1}\leq \eta
\sqrt{n}B_{\Cglob}/\Omega^{1/2}_{+}$.

We'll consider two cases,$\sqrt{n}B_{\Cglob}/\Omega^{1/2}_{+}>z_{1-\epsilon}$,
and $\sqrt{n}B_{\Cglob}/\Omega^{1/2}_{+}\leq z_{1-\epsilon}$. Suppose first
$\sqrt{n}B_{\Cglob}/\Omega^{1/2}_{+}>z_{1-\epsilon}$. Then, if $b<0$,
\Cref{eq:coverage-bound} is bounded below by
$\Phi(\cv_{\alpha}(\sqrt{n}B_{\Cglob} / \Omega_{-}^{1/2}))+1-\alpha -1
\geq\Phi(\cv_{\alpha}(\sqrt{n}B_{\Cglob} / \Omega_{+}^{1/2}))-\alpha \geq
1-\epsilon-\alpha$, where the last inequality follows since
$\cv_{\alpha}(t)\geq t$. If $b$ is positive, it is bounded below by
$1-\alpha+\Phi(\cv_{\alpha}(\sqrt{n}B_{\Cglob} / \Omega_{+}^{1/2})) -1\geq
1-\alpha-\epsilon$.

Next, suppose, $\sqrt{n}B_{\Cglob}/\Omega^{1/2}_{+}\leq z_{1-\epsilon}$. Then
$\abs{\Omega_{-}^{1/2}/\Omega^{1/2}_{+}-1}\leq \eta z_{1-\epsilon}$, so that by
\Cref{eq:cv2},
\begin{equation*}
  \abs{ \cv_{\alpha}(\sqrt{n}B_{\Cglob} / \Omega_{-}^{1/2})-\sqrt{n}B_{\Cglob} / \Omega_{-}^{1/2}
    -  \cv_{\alpha}(\sqrt{n}B_{\Cglob} / \Omega_{+}^{1/2})
    +\sqrt{n}B_{\Cglob} / \Omega_{+}^{1/2}}\leq
  \nu,
\end{equation*}
where $\nu=\eta z_{1-\epsilon}(z_{1-\alpha/2}-z_{1-\alpha})$. Therefore,
\begin{multline*}
  P( Z+\sqrt{n}b/\Omega^{1/2}_{-}\leq \cv_{\alpha}(\sqrt{n}B_{\Cglob} / \Omega_{-}^{1/2}) )\\
  \geq P( Z+\sqrt{n}b/\Omega^{1/2}_{-}\leq \cv_{\alpha}(\sqrt{n}B_{\Cglob} /
  \Omega_{+}^{1/2})- \sqrt{n}B_{\Cglob} / \Omega_{+}^{1/2}+\sqrt{n}B_{\Cglob} /
  \Omega_{-}^{1/2} -\nu
  )\\
  = P\left( Z +\sqrt{n}b/\Omega^{1/2}_{+} \leq \cv_{\alpha}(\sqrt{n}B_{\Cglob} /
    \Omega_{+}^{1/2})+\sqrt{n}(B_{\Cglob}-b)(\Omega_{-}^{-1/2}-
    \Omega_{+}^{-1/2}) -\nu \right)\\
  \geq P\left( Z +\sqrt{n}b/\Omega^{1/2}_{+} \leq
    \cv_{\alpha}(\sqrt{n}B_{\Cglob} / \Omega_{+}^{1/2})
    -\nu \right)\\ \geq P\left( Z
    +\sqrt{n}b/\Omega^{1/2}_{+} \leq \cv_{\alpha}(\sqrt{n}B_{\Cglob} /
    \Omega_{+}^{1/2})\right)+1-2 \Phi(\nu/2),
\end{multline*}
where the last equality follows since
$\inf_{x}\{\Phi(x-\nu)-\Phi(x)\}=1-2\Phi(\nu/2)$. It therefore follows
that the coverage probability in~\Cref{eq:coverage-bound} is bounded below by
\begin{multline*}
  P( Z +\sqrt{n}b/\Omega^{1/2}_{+} \leq \cv_{\alpha}(\sqrt{n}B_{\Cglob} /
    \Omega_{+}^{1/2}))+ P(Z-\sqrt{n}b/\Omega^{1/2}_{+}\leq
  \cv_{\alpha}(\sqrt{n}B_{\Cglob} / \Omega_{+}^{1/2})) \\
  -1+(1-2 \Phi(\nu/2 )) \geq 1-\alpha+(1-2 \Phi(\nu/2 )).
\end{multline*}
where the inequality follows by definition of $\cv_{\alpha}$. Setting
$\eta= 2z_{1/2+\epsilon/2}/(z_{1-\epsilon}(z_{1-\alpha/2}-z_{1-\alpha}))$ then
implies that the right-hand side evaluates to $1-\alpha-\epsilon$. Thus,
\Cref{eq:coverage-bound} is bounded below by $1-\alpha-\epsilon$, concluding the
proof.
\end{proof}

\begin{lemma}\label{lemma:auxiliary-results-cv}
  The critical value $\cv_{\alpha}(t)$ satisfies, for any $a> 0$,
  \begin{equation}\label{eq:cv1}
    \sup_{b\geq 0}\abs{\cv_{\alpha}(ab)/a-\cv_{\alpha}(b)}\leq
    z_{1-\alpha/2}\frac{\abs{1-a}}{\max\{a, 1\}},
  \end{equation}
  and
  \begin{equation}\label{eq:cv2}
    \sup_{b\geq 0}\abs{\cv_{\alpha}(ab)-ab-\cv_{\alpha}(b)+b}\leq (z_{1-\alpha/2}-z_{1-\alpha})
    \frac{\abs{1-a}}{\max\{a, 1\}}.
  \end{equation}
\end{lemma}
\begin{proof}
  Since the function $\cv_{\alpha}$ is increasing and convex, with slope bounded
  by $1$, for $b_{1},b_{2}\geq 0$
  \begin{equation}\label{eq:property1}
    \cv_{\alpha}(b_{1}+b_{2})\leq
    b_{1}+\cv_{\alpha}(b_{2}),
  \end{equation}
 and for $a\geq 1$ and $b\geq 0$,
 \begin{equation}\label{eq:property2}
   \cv_{\alpha}(ba)/a+\cv_{\alpha}(0)(1-1/a)\geq \cv_{\alpha}(b).
 \end{equation}
 Suppose $a\geq 1$. Then by~\Cref{eq:property1}
 \begin{equation*}
   \cv_{\alpha}(ab)/a-\cv_{\alpha}(b)\leq
   (\cv_{\alpha}(b)-b)(1/a-1)\leq 0,
 \end{equation*}
 since $\cv_{\alpha}(b)-b$ is bounded below by $z_{1-\alpha}$. On the other
 hand, by \Cref{eq:property2}, the left-hand side is greater than
 $-\cv_{\alpha}(0)(1-1/a)$. If $a\leq 1$, the same argument with $ab$ and $b$
 reversed then yields~\Cref{eq:cv1}.

 To show \Cref{eq:cv2}, suppose first that $a\geq 1$. By~\Cref{eq:property1},
 $\cv_{\alpha}(ab)-ab-\cv_{\alpha}(b)+b\leq 0$. On the other hand, by \Cref{eq:property2},
 \begin{equation*}
   \cv_{\alpha}(ab)-ab-\cv(b)+b
   \geq
   (\cv_{\alpha}(ab)-ab-cv_{\alpha}(0))(1-1/a)
   \geq (1-1/a)(z_{1-\alpha}-z_{1-\alpha/2})
 \end{equation*}
 where the second inequality follows since $\cv_{\alpha}(ab)-ab\geq z_{1-\alpha}$
 and $\cv_{\alpha}(0)=z_{1-\alpha/2}$. If $a<1$, the same argument with $ab$ and
 $b$ reversed then yields~\Cref{eq:cv2}.
\end{proof}

\bibliography{../../np-testing-library}

\clearpage

\begin{table}[p]
  \caption{$J$-test of overidentifying restrictions in the application to
    \citet{blp95} under different forms of $\ell_{p}$ misspecification.}\label{tab:Jtest}
  \centering
  \begin{tabular}[htp]{@{}lrr@{}}
Instrument set & $p=2$ & $p=\infty$\\
\midrule
D/F\@: \# cars                      & 10.21& 10.21 \\
S/F\@: \# cars                      & 15.00& 15.00 \\
Supply: Miles/dollar                & 16.31& 16.31 \\
All D/F                             &  2.71&  2.71 \\
All D/R                             &  5.36&  5.55 \\
All S/F                             &  2.54&  2.56 \\
All S/R                             &  4.06&  6.84 \\
All excluded demand                 &  1.80&  1.97 \\
All excluded supply                 &  1.60&  1.72 \\
All excluded                        &  1.13&  2.56 \\[1ex]
  \end{tabular}
  \medskip
  \begin{minipage}{\linewidth}\small
    \emph{Notes:} The table gives the minimum value of $\Mbound$ such that the
    test of overidentifying restrictions has $p$-value equal to $0.05$. ``D/F'':
    Demand-side instrument based on characteristics of other cars produced by
    the same firm. ``S/F'': Supply-side instrument based on characteristics of
    other cars produced by the same firm. ``D/R'': Demand-side instrument based
    on characteristics of cars produced by rivals. ``S/R'': Supply-side
    instrument based on characteristics of cars produced by rivals. ``All
    excluded'': All excluded instruments are potentially invalid.
  \end{minipage}
\end{table}

\begin{table}[p]
  \centering
  \caption{Efficiency bounds (in \%) for one and two-sided 95\% confidence
    intervals at $c=0$ under $\ell_{p}$ misspecification in the application to
    \citet{blp95}.}\label{tab:efficiency}
  \vspace{1ex}
  \begin{tabular}[ht]{@{}lrrrr@{}}
 & \multicolumn{2}{c}{Two-sided} & \multicolumn{2}{c}{One-sided}\\
    \cmidrule(rl){2-3}\cmidrule(rl){4-5}
Instrument set &  $p=2$ & $p=\infty$ & $p=2$ & $p=\infty$\\
\midrule
D/F\@: \# cars       &     85.9 &       85.9 &     100.0 &      100.0 \\
S/F\@: \# cars       &     90.1 &       90.1 &      99.8 &       99.8 \\
Supply: Miles/dollar &     85.0 &       85.0 &     100.0 &      100.0 \\
All D/F              &     85.5 &       85.7 &     100.0 &      100.0 \\
All D/R              &     94.8 &       95.3 &      93.9 &       95.3 \\
All S/F              &     88.6 &       89.1 &      99.7 &       99.7 \\
All S/R              &     89.4 &       89.2 &      98.5 &       99.5 \\
All excluded demand  &     95.4 &       96.4 &      95.0 &       97.3 \\
All excluded supply  &     90.3 &       90.1 &      98.2 &       99.6 \\
All excluded         &     97.0 &       97.5 &      99.5 &       98.2 \\[1ex]
  \end{tabular}
  \medskip
  \begin{minipage}{\linewidth}\small
    \emph{Notes:} For two-sided confidence intervals, the table calculates the
    ratio of the expected length of a 95\% confidence interval that minimizes
    its length at $c=0$ relative to the length of the CI
    in~\eqref{eq:optimal-FLCI-limit_experiment}, given
    in~\eqref{kappa_conv_cs_eq}. For one-sided confidence intervals, the table
    calculates an analogous bound, given in \Cref{adaptation_sec_append}, when
    the confidence interval optimizes the 80\% quantile of excess length.
    Instrument set labels are describe in notes to \Cref{tab:Jtest}.
  \end{minipage}
\end{table}

\begin{figure}[p]
  \centering%
  \input{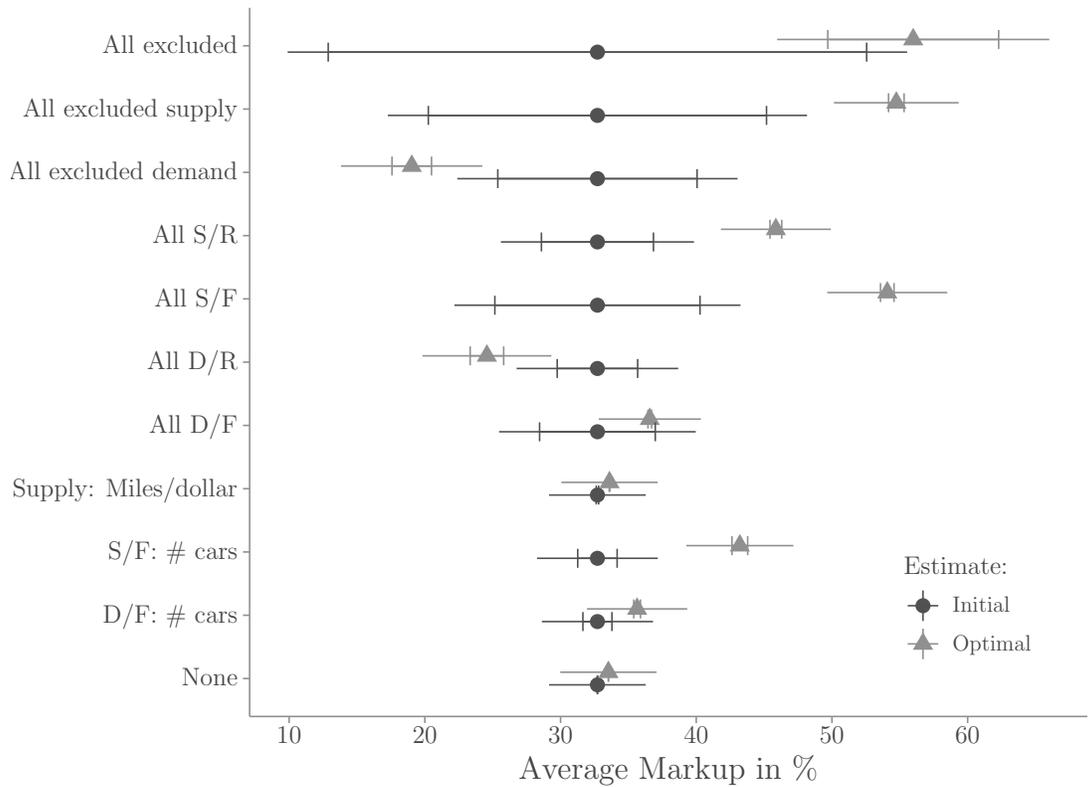}
  \caption{Confidence intervals under $\ell_{2}$ misspecification and $\Mbound=1$ in
    the application to \citet{blp95}.}\label{fig:l2-blp}
  \vspace{1ex}
  \begin{minipage}[h]{1.0\linewidth}
    \footnotesize Vertical lines correspond to the estimate $\pm$ the worst case
    bias, and horizontal lines correspond to 95\% confidence intervals.
    Different rows correspond to assuming that different subsets of instruments
    are potentially invalid. ``None'': correct specification. ``D/F'':
    Demand-side instrument based on characteristics of other cars produced by
    the same firm. ``S/F'': Supply-side instrument based on characteristics of
    other cars produced by the same firm. ``D/R'': Demand-side instrument based
    on characteristics of cars produced by rivals. ``S/R'': Supply-side
    instrument based on characteristics of cars produced by rivals. ``All
    excluded'': All excluded instruments are potentially invalid.
  \end{minipage}
\end{figure}

\begin{figure}[p]
  \centering%
  \input{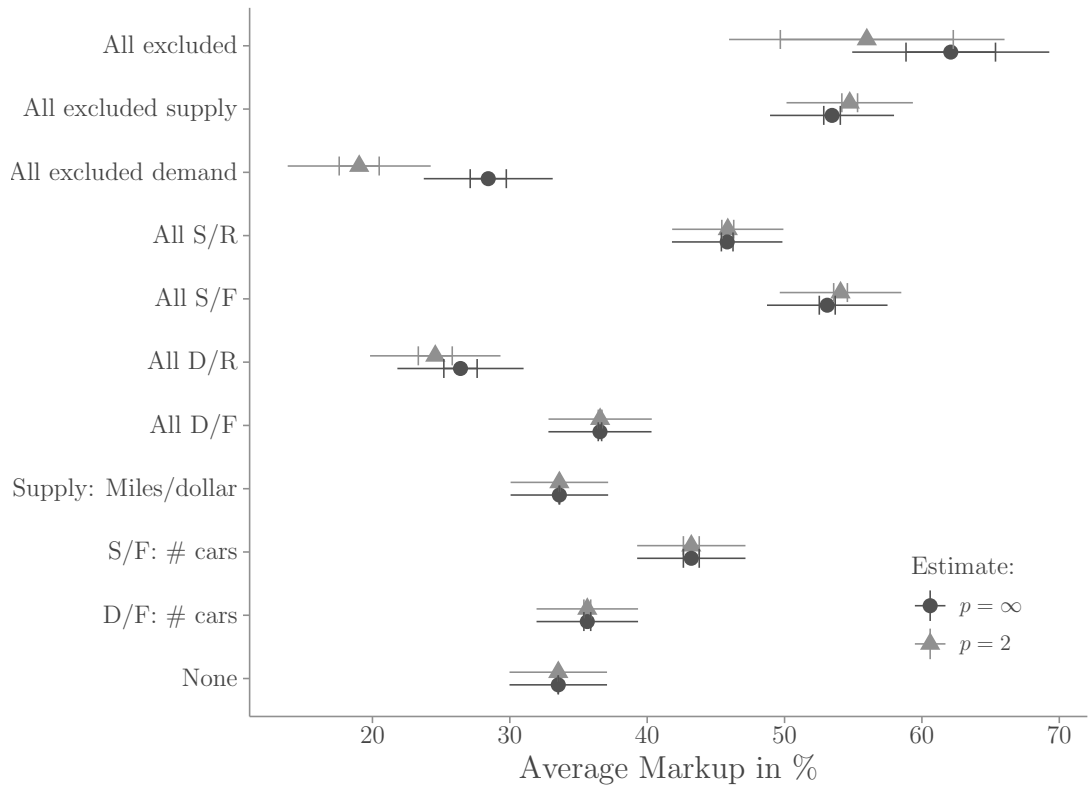}
  \caption{Optimal Confidence intervals under $\ell_{2}$, and
    $\ell_{\infty}$ misspecification and $\Mbound=1$ in the application to
    \citet{blp95}.}\label{fig:l2_lI-blp}
  \vspace{1ex}
  \begin{minipage}[h]{1.0\linewidth}
    \footnotesize Vertical lines correspond to the estimate $\pm$ the worst case
    bias, and horizontal lines correspond to 95\% confidence intervals.
    Different rows correspond to assuming that different subsets of instruments
    are potentially invalid. ``None'': correct specification. ``D/F'':
    Demand-side instrument based on characteristics of other cars produced by
    the same firm. ``S/F'': Supply-side instrument based on characteristics of
    other cars produced by the same firm. ``D/R'': Demand-side instrument based
    on characteristics of cars produced by rivals. ``S/R'': Supply-side
    instrument based on characteristics of cars produced by rivals. ``All
    excluded'': All excluded instruments are potentially invalid.
  \end{minipage}
\end{figure}

\begin{figure}[p]
  \centering%
  \input{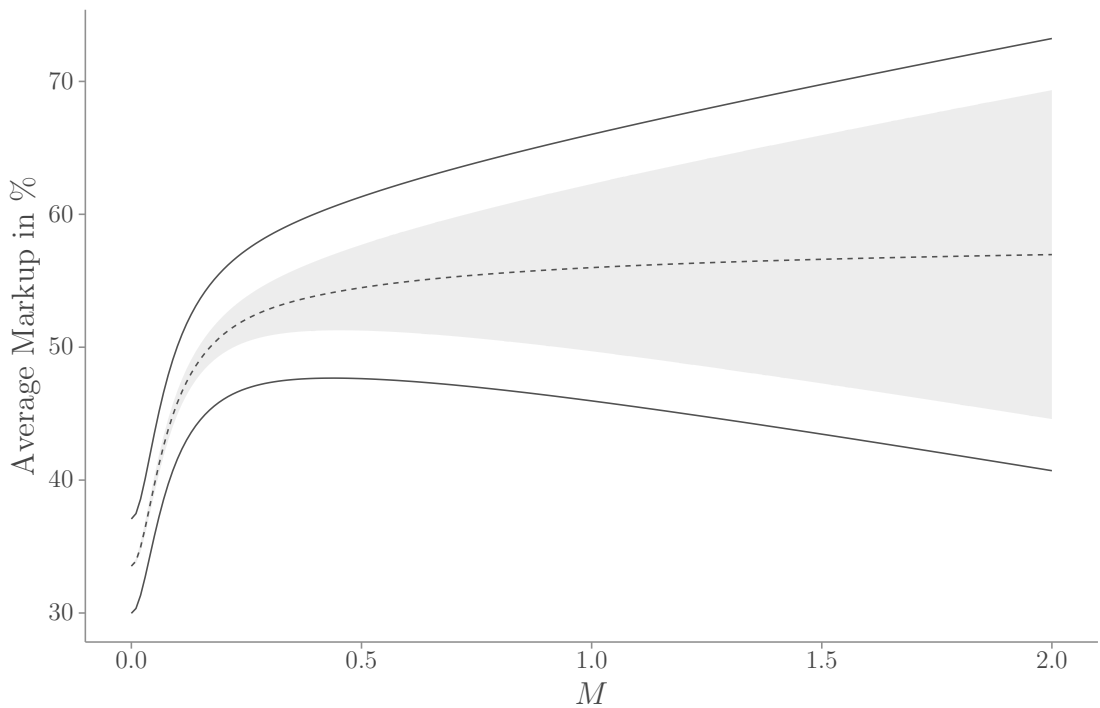}
  \caption{Optimal confidence intervals under $\ell_{2}$ misspecification the
    application to \citet{blp95} as a function of misspecification parameter
    $\Mbound$, when all excluded instruments are allowed to be potentially
    invalid.}\label{fig:l2_M-blp}
  \vspace{1ex}
  \begin{minipage}[h]{1.0\linewidth}
    \footnotesize Dotted line corresponds to point estimate, shaded region
    denotes the estimate $\pm$ its worst-case bias, and a 95\% confidence band
    is denoted by solid lines.
  \end{minipage}
\end{figure}

\end{document}


\maketitle

\renewcommand{\theequation}{S\arabic{equation}}
\renewcommand{\thetable}{S\arabic{table}}
\renewcommand{\thefigure}{S\arabic{figure}}
\renewcommand{\thepage}{S\arabic{page}}

\begin{appendices}
  \setcounter{section}{4}
  \crefalias{section}{sappsec}
  \crefalias{subsection}{sappsubsec}
  \crefalias{subsubsection}{sappsubsubsec}

  \Cref{sec:addit-asympt-results} gives a construction of a submodel satisfying
  \Cref*{submodel_assump}, verifies the conditions in
  \Cref*{efficiency_sec_append} in the misspecified IV model, and collects
  auxiliary results used in \Cref*{efficiency_sec_append}.

\section{Additional asymptotic results}\label{sec:addit-asympt-results}

\subsection{Construction of a submodel satisfying
  \texorpdfstring{\Cref*{submodel_assump}}{
  Assumption~\ref{submodel_assump}}}\label{gmm_submodel_sec}

We give here a construction of a submodel satisfying \Cref*{submodel_assump}
under mild conditions on the class $\mathcal{P}$. The construction follows
Example 25.16 (p. 364) of \citet{van_der_vaart_asymptotic_1998}.

\begin{lemma}\label{gmm_submodel_lemma}
  Suppose that $g(w_i, \theta)$ is continuously differentiable almost surely in a
  neighborhood of $\theta^*$ where $E_{P_0}g(w_i, \theta^*)=0$, and that, for
  some $\varepsilon>0$,
  \begin{equation*}
    E_{P_0}\sup_{\|\theta-\theta^*\|\le \varepsilon}|g(w_i, \theta)g(w_i, \theta)'|<\infty
    \quad\text{and}\quad
    E_{P_0}\sup_{\|\theta-\theta^*\|\le \varepsilon}\left\|\frac{d}{d\theta'}g(w_i, \theta)\right\|<\infty.
  \end{equation*}
  Let
  \begin{equation*}
    \pi_t(w_i) =C(t)h(t'g(w_i, \theta^*))
    \quad\text{where}\quad
    h(x)=2\left[1+\exp\left(-2 x\right)\right]^{-1}
  \end{equation*}
  with $C(t)^{-1}=E_{P_0}h(t'g(w_i, \theta^*))$. This submodel satisfies
  \Cref*{submodel_assump}, and the bounds on the moments in the above display
  hold with $P_0$ replaced by $P_t$.
\end{lemma}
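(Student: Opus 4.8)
The plan is to verify, in turn, each requirement of \Cref*{submodel_assump}, as follows. \emph{First, validity of the path.} Since $0 < h(x) < 2$ for all $x$ and $h(0) = 1$, the constant $C(t)^{-1} = E_{P_0} h(t' g(w_i, \theta^*))$ lies in $(0, 2]$ for every $t$, so $\pi_t \ge 0$, $E_{P_0} \pi_t = 1$, and $P_0 = P_{t=0}$. Because $h$ is bounded and continuous, dominated convergence shows $t \mapsto C(t)$ is continuous, so $\pi_t \le 2 \sup_{\|t\| \le \delta} C(t) =: \Mbound < \infty$ uniformly over a neighborhood $\|t\| \le \delta$ of the origin; combined with the moment transfer below, this is what places $P_t$ in $\mathcal{P}$ under the ``mild conditions'' alluded to in the statement.

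\emph{DQM at $t=0$.} The map $x \mapsto h(x)$ is smooth, and a short computation (writing $u = e^{-2x}$) shows that $h'/h$, $h'/\sqrt{h}$, and hence $\tfrac{d}{dx}\sqrt{h(x)} = h'(x)/(2\sqrt{h(x)})$ are all bounded on $\mathbb{R}$ --- this is precisely what the particular choice of $h$ buys, as $h$ itself is not bounded away from $0$. From $\sqrt{\pi_t(w)} = \sqrt{C(t)}\,\sqrt{h(t'g(w,\theta^*))}$ the map $t \mapsto \sqrt{\pi_t(w)}$ is continuously differentiable for $P_0$-a.e.\ $w$, with gradient at $t = 0$ equal to $\tfrac12(\dot C(0) + h'(0) g(w, \theta^*))$; using $h(0) = h'(0) = 1$ and $\dot C(0) = -h'(0) E_{P_0} g(w_i, \theta^*) = 0$ (differentiation under the integral being justified by boundedness of $h,h'$ and $E_{P_0}|g(w_i,\theta^*)| < \infty$), this equals $\tfrac12 g(w, \theta^*)$, so $\dot\ell_0(w_i) = g(w_i, \theta^*)$. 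The Fisher information $E_{P_0} g(w_i,\theta^*) g(w_i,\theta^*)'$ is finite by the first displayed moment bound, and since $|\dot\ell_t| \le c\,(1 + |g(w_i,\theta^*)|)$ uniformly for $\|t\|\le\delta$ (the boundedness just noted, together with $\pi_t \le \Mbound$ and $E_{P_0} h(t'g(w_i,\theta^*))$ being bounded below on the neighborhood), dominated convergence makes $t \mapsto \int \dot\ell_t\dot\ell_t'\,dP_t$ continuous at $0$. The standard criterion for differentiability in quadratic mean (\citealp[Lemma~7.6]{van_der_vaart_asymptotic_1998}) then yields that the path is DQM at $t = 0$ with score $g(w_i,\theta^*)$, so every linear combination $t'g(w_i,\theta^*)$ lies in the submodel's tangent set.

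\emph{Moment transfer and parameter differentiability.} For nonnegative $X$ and $\|t\| \le \delta$, $E_{P_t} X = E_{P_0}[X \pi_t] \le \Mbound\, E_{P_0} X$; taking $X$ equal to each of the two suprema in the statement transfers both bounds from $P_0$ to $P_t$, and the a.s.\ differentiability of $\theta \mapsto g(w_i,\theta)$ carries over since $P_t \ll P_0$. Finally, set $\Psi(t, \theta) = E_{P_t} g(w_i, \theta) = C(t)\, E_{P_0}[g(w_i,\theta)\, h(t'g(w_i,\theta^*))]$; by the transferred moment bounds, differentiability of $g$ in $\theta$, and dominated convergence, $\Psi$ is continuously differentiable near $(0, \theta^*)$ with $\partial_\theta \Psi(0, \theta^*) = E_{P_0}\tfrac{d}{d\theta'} g(w_i, \theta^*)$ and $\partial_t \Psi(0, \theta^*) = E_{P_0} g(w_i,\theta^*) g(w_i,\theta^*)'$. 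When $\partial_\theta \Psi(0,\theta^*)$ is invertible (the usual identification condition maintained in \Cref*{submodel_assump}) the implicit function theorem produces a continuously differentiable map $t \mapsto \theta(P_t)$ with $\theta(P_0) = \theta^*$ and $E_{P_t} g(w_i, \theta(P_t)) = 0$, and the remaining derivative/misspecification conditions of the assumption follow from these two partials by the chain rule.

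\emph{Main obstacle.} The one substantive point is the DQM step. Because $h$ vanishes as its argument tends to $-\infty$, $\log \pi_t$ is unbounded, so DQM cannot be obtained from a naive Taylor expansion; the construction circumvents this --- exactly as in Example~25.16 of \citet{van_der_vaart_asymptotic_1998} --- by using an $h$ for which $\sqrt{h}$ is globally Lipschitz and $h'/\sqrt{h}$ bounded, reducing the required $L_2(P_0)$-differentiability to finiteness of $E_{P_0}|g(w_i,\theta^*)|^2$, which the first moment hypothesis supplies. Everything else is bookkeeping with dominated convergence.
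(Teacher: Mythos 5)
Your proposal is correct and follows essentially the same route as the paper: dominated convergence to differentiate $C(t)$ and $(t,\theta)\mapsto E_{P_t}g(w_i,\theta)$ under the expectation (using $\dot C(0)=0$ and the two moment bounds), yielding the Jacobian $\bigl(E_{P_0}g(w_i,\theta^*)g(w_i,\theta^*)',\,\Gamma_{\theta^*,P_0}\bigr)$ at $(0,\theta^*)$ with continuity nearby, plus boundedness of $\pi_t$ for the moment transfer; the only real difference is that you verify quadratic mean differentiability directly via Lemma~7.6 of \citet{van_der_vaart_asymptotic_1998}, whereas the paper simply cites Problem~12.6 of \citet{LeRo05}. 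One caution: the concluding implicit-function-theorem step producing $\theta(P_t)$ with $E_{P_t}g(w_i,\theta(P_t))=0$ is neither required by \Cref*{submodel_assump} nor generally available, since $\partial_\theta\Psi(0,\theta^*)=\Gamma_{\theta^*,P_0}$ is $d_g\times d_\theta$ and hence not invertible in the overidentified case, and under (local) misspecification no exact solution need exist; the assumption only asks for the Jacobian condition you have already verified, so this digression is superfluous rather than damaging.
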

\begin{proof}
  Quadratic mean differentiability follows from Problem 12.6 in \citet{LeRo05},
  so we just need to show that \Cref*{t_theta_derivative_eq} holds, and that the
  derivative is continuous in a neighborhood of
  $(t', \theta')'=(0', {\theta^*}')'$. For this, it suffices to show that each
  partial derivative exists and is continuous as a function of $(t', \theta')'$
  in a neighborhood of $(0', {\theta^*}')'$, and that the Jacobian matrix of
  partial derivatives takes the form in \Cref*{t_theta_derivative_eq} at
  $(t', \theta')'=(0', {\theta^*}')'$ \citep[see Theorem 4.5.3
  in][]{shurman_calculus_2016}.

  To this end, we first show that $C(t)$ is continuously differentiable, and
  derive its derivative at $0$. It can be checked that $h(x)$ is continuously
  differentiable, with $h(0)=h'(0)=1$, and that $h(x)$ and $h'(x)$ are bounded.
  We have, for some constant $K$,
  \begin{equation*}
    \left|\frac{d}{d t_j} h(t'g(w_i, \theta^*))\right|
    =|h'(t'g(w_i, \theta^*))g_j(w_i, \theta^*)|
    \le K |g_j(w_i, \theta^*)|
  \end{equation*}
  so, since $E_{P_0}|g_j(w_i, \theta^*)|<\infty$, we have, by a corollary of the Dominated Convergence Theorem \citep[Corollary 5.9 in][]{bartle_elements_1966},
  \begin{equation*}
    \frac{d}{d t_j}E_{P_0}h(t'g(w_i, \theta^*))
    =E_{P_0}\frac{d}{d t_j}h(t'g(w_i, \theta^*))
    =E_{P_0}h'(t'g(w_i, \theta^*))g_j(w_i, \theta^*).
  \end{equation*}
  By boundedness of $h'$ and the Dominated Convergence Theorem, this is continuous in $t$.  Thus, $C(t)$ is continuously differentiable in each argument, with
  \begin{equation*}
    \frac{d}{d t_j} C(t)
    =-\left[E_{P_0}h(t'g(w_i, \theta^*))\right]^{-2}
    E_{P_0}h'(t'g(w_i, \theta^*))g_j(w_i, \theta^*)
  \end{equation*}
  which gives $\left[\frac{d}{dt_j}C(t)\right]_{t=0}=E_{P_0}g_j(w_i, \theta^*)=0$.

Now consider the derivative of
\begin{equation*}
E_{P_t}g(w_i, \theta)
=E_{P_0}g(w_i, \theta)\pi_t(w_i)
=C(t)E_{P_0}g(w_i, \theta)h(t'g(w_i, \theta^*))
\end{equation*}
with respect to elements of $\theta$ and $t$.
We have, for each $j, k$
\begin{equation*}
\frac{d}{dt_j}g_k(w_i, \theta)h(t'g(w_i, \theta^*))
=g_k(w_i, \theta)h'(t'g(w_i, \theta^*))g_j(w_i, \theta^*).
\end{equation*}
This is bounded by a constant times $|g_k(w_i, \theta)g_j(w_i, \theta^*)|$ by boundedness of $h'$.
Also,
\begin{equation*}
\frac{d}{d\theta_j}g_k(w_i, \theta)h(t'g(w_i, \theta^*))
\end{equation*}
is bounded by a constant times $\frac{d}{d\theta_j}g_k(w_i, \theta)$ by boundedness of $h$.
By the conditions of the lemma, the quantities in the above two displays are bounded uniformly over $(t', \theta')'$ in a neighborhood of $({\theta^*}',0')'$ by a function with finite expectation under $P_0$.  It follows that we can again apply Corollary 5.9 in \citet{bartle_elements_1966} to obtain the derivative of $E_{P_0}g(w_i, \theta)h(t'g(w_i, \theta^*))$ with respect to each element of $\theta$ and $t$ by differentiating under the expectation.  Furthermore, the bounds above and continuous differentiability of $g(w_i, \theta)$ along with the Dominated Convergence Theorem imply that the derivatives are continuous in $(t', \theta')'$.

Thus, $E_{P_t}g(w_i, \theta)$ is differentiable with respect to each argument of
$t$ and $\theta$, with the partial derivatives continuous with respect to
$(\theta', t')'$. It follows that $(t', \theta')'\mapsto E_{P_t}g(w_i, \theta)$
is differentiable at $t=0, \theta=\theta^*$. To calculate the Jacobian, note that
\begin{equation*}
  \frac{d}{dt'}E_{P_t}g(w_i, \theta)
  =C(t)E_{P_0}g(w_i, \theta)g(w_i, \theta^*)'h'(t'g(w_i, \theta^*))
  +E_{P_0}g(w_i, \theta)h(t'g(w_i, \theta^*))\frac{d}{dt'}C(t).
\end{equation*}
Evaluating this at $t=0$, $\theta=\theta^*$, the second term is equal to zero by
calculations above, and the first term is given by
$E_{P_0}g(w_i, \theta^*)g(w_i, \theta^*)$. For the derivative with respect to
$\theta$ at $\theta=\theta^*$, $t=0$, this is equal to $\Gamma_{\theta^*, P_0}$
by definition. Thus, \Cref*{submodel_assump} holds. Furthermore, the bounds on
the moments of $g(w_i, \theta)$ hold with $P_t$ replacing $P_0$ by boundedness
of $\pi_t(w_i)$.
\end{proof}

\subsection{Example: misspecified linear IV}\label{iv_sec_append}

We verify our conditions in the misspecified linear IV model, defined by the equation
\begin{equation*}
g_P(\theta)=E_{P}(y_{i}-x_i'\theta)z_{i}=c/\sqrt{n}, \,
c\in\mathcal{C}
\end{equation*}
where $\mathcal{C}$ is a compact convex set, $y_{i}$ is a scalar valued random variable, $x_i$ is a $\mathbb{R}^{d_\theta}$ valued random variable and $z_{i}$ is a $\mathbb{R}^{d_g}$ valued random variable, with $d_g\ge d_\theta$.
The derivative matrix and variance matrix are
\begin{equation*}
\Gamma_{\theta, P}=\frac{d}{d\theta'}g_P(\theta)=-E_{P}z_{i}x_i',
\quad\quad
\Sigma_{\theta, P}=\text{var}_{P}((y_{i}-x_i'\theta)z_{i}).
\end{equation*}
Let $\Theta\subset\mathbb{R}^{d_\theta}$ be a compact set and let $h:\Theta\to\mathbb{R}$ be continuously differentiable with nonzero derivative at all $\theta\in\Theta$.  Let $\varepsilon$ be given and let $\mathcal{P}$ be a set of probability distributions $P$ for $(x_i', z_{i}', y_{i})'$.  We make the following assumptions on $\mathcal{P}$.

\begin{assumption}\label{iv_example_assump}
For all $P\in\mathcal{P}$, the following conditions hold.
\begin{enumerate}
\item\label{iv_moment_bounds} For all $j$,
  $E_{P}|x_{i, j}|^{4+\varepsilon}<1/\varepsilon$,
  $E_{P}|z_{i, j}|^{4+\varepsilon}<1/\varepsilon$ and
  $E_{P}|y_{i}|^{4+\varepsilon}<1/\varepsilon$.
\item\label{iv_deriv} The matrix $E_{P}z_{i}x_i'$ is full rank and
  $\|E_{P}z_{i}x_i' u\|/\|u\|> 1/\varepsilon$ for all
  $u\in\mathbb{R}^{d_g}\backslash\{0\}$ (i.e.\ the singular values of
  $E_{P}z_{i}x_i'$ are bounded away from zero).
\item The matrix $\Sigma_{\theta, P}=\text{var}_{P}((y_{i}-x_i'\theta)z_{i})$
  satisfies $u'\Sigma_{\theta, P}u/\|u\|^2> \varepsilon$ for all
  $u\in\mathbb{R}^{d_g}\backslash\{0\}$ and all $\theta$ such that there exists
  $c\in\mathcal{C}$ and $n\ge 1$ such that
  $E_{P}(y_{i}-x_i'\theta)z_{i}=c/\sqrt{n}$.
\end{enumerate}
\end{assumption}

Note that, applying Cauchy-Schwartz, the first condition implies
$E_{P}|v_1v_2v_3v_4|^{1+\varepsilon/4} < 1/\varepsilon$ for any
$v_1,v_2,v_3,v_4$ where each $v_k$ is an element of $x_i$, $z_{i}$ or $y_{i}$.
In particular, $z_{i}(y_{i}-x_i'\theta)$ has a bounded $2+\varepsilon/2$ moment
uniformly over $\theta\in\Theta$ and $P\in\mathcal{P}$.

\subsubsection{Conditions for \texorpdfstring{\Cref*{optimal_khat_oneside_thm,optimal_khat_twoside_thm}}{Theorems~\ref{optimal_khat_oneside_thm}
   and~\ref{optimal_khat_twoside_thm}}}\label{iv_example_achieving_bound_sec}

We first verify the conditions of \Cref*{achieving_bound_sec}. To verify the
conditions of \Cref*{optimal_khat_oneside_thm,optimal_khat_twoside_thm} (which
show that the plug-in optimal weights
$\hat k=k(\delta, \hat H, \hat\Gamma, \hat \Sigma)$ lead to CIs that achieve or
nearly achieve the efficiency bounds in \Cref*{oneside_efficiency_bound_thm} and
\Cref*{twoside_efficiency_bound_thm}), we must verify
\Cref*{theta_init_assump,g_clt_assump_append,,GammaHSigma_assump,,scriptB_assump}.

Let
\begin{equation*}
  \hat\theta_{\text{initial}}=
  \left(\sum_{i=1}^{n}z_{i}x_i' W_n \sum_{i=1}^{n} x_{i}z_{i}'\right)^{-1}\sum_{i=1}^{n} z_{i}x_i' W_n \sum_{i=1}^{n}z_{i}y_{i}
\end{equation*}
where $W_n=W_P+o_P(1)$ uniformly over $P\in\mathcal{P}$ and $W_P$ is a positive definite matrix with $u'W_{P}u/\|u\|^2$ bounded away from zero uniformly over $P\in\mathcal{P}$.  Let $\hat H=H_{\hat \theta}$ where $H_\theta$ is the derivative of $h$ at $\theta$.
Let
\begin{equation*}
  \hat \Gamma=-\frac{1}{n}\sum_{i=1}^{n}z_{i}x_i',
  \quad\quad
  \hat \Sigma=\frac{1}{n}\sum_{i=1}^{n}z_{i}z_{i}'(y_{i}-x_i'\hat\theta_{\text{initial}})^2.
\end{equation*}

First, let us verify \Cref*{g_clt_assump_append}. Indeed, it follows from a CLT
for triangular arrays (\Cref{uniform_clt_lemma} with
$v_i=u_n'\left[z_{i}(y_{i}-x_i'\theta)-Ez_{i}(y_{i}-x_i'\theta)\right]$ with
$u_n$ an arbitrary sequence with $\|u_n\|=1$ all $n$) that
\begin{equation*}
\sup_{u\in\mathbb{R}^{d_g}}\sup_{t\in\mathbb{R}}\sup_{(\theta', c')\in\Theta\times\mathcal{C}}
\sup_{P\in\mathcal{P}_n(\theta, c)} \left|P\left(\frac{\sqrt{n}u'(\hat g(\theta)-g_P(\theta))}{\sqrt{u'\Sigma_{\theta, P}u}}\le t\right)
- \Phi\left(t\right)\right|
\to 0
\end{equation*}
(note that $u$ can be taken to satisfy $\|u\|=1$ without loss of generality, since the formula inside the probability statement is invariant to scaling).  Note that this, along with compactness of $\mathcal{C}$, also implies that $\frac{1}{\sqrt{n}}\sum_{i=1}^{n}z_{i}(y_{i}-x_i'\theta)=\sqrt{n}\hat g(\theta)=\mathcal{O}_P(1)$ uniformly over $\theta$ and $P$ with $P\in\mathcal{P}(\theta, c)$ for some $c$.

For \Cref*{theta_init_assump}, we have
\begin{equation*}
\sqrt{n}\left(\hat\theta_{\text{initial}}-\theta\right)
=\left(\frac{1}{n}\sum_{i=1}^{n}z_{i}x_i' W_n \frac{1}{n}\sum_{i=1}^{n} x_{i}z_{i}'\right)^{-1}\frac{1}{n}\sum_{i=1}^{n} z_{i}x_i' W_n \frac{1}{\sqrt{n}}\sum_{i=1}^{n}z_{i}(y_{i}-x_i'\theta).
\end{equation*}
Since $\frac{1}{n}\sum_{i=1}^{n}z_{i}x_i'$ converges in probability to
$-\Gamma_{\theta, P}$ uniformly over $P$ by \Cref{uniform_lln_lemma} and
$\frac{1}{\sqrt{n}}\sum_{i=1}^{n}z_{i}(y_{i}-x_i'\theta)=\mathcal{O}_P(1)$
uniformly over $P$ by the verification of \Cref*{g_clt_assump_append} above, it
follows that this display is $\mathcal{O}_P(1)$ uniformly over $P$ and $\theta$,
as required. For the second part of the assumption, we have
\begin{equation*}
  \hat g(\hat\theta_{\text{initial}})-g(\theta)
  =-\frac{1}{n}\sum_{i=1}^{n}z_{i}x_i'(\hat\theta_{\text{initial}}-\theta)
  =\Gamma_{\theta, P}(\hat\theta_{\text{initial}}-\theta)
  +(\hat\Gamma-\Gamma_{\theta, P})(\hat\theta_{\text{initial}}-\theta).
\end{equation*}
The last term is uniformly $o_P(1/\sqrt{n})$ as required since $(\hat\theta_{\text{initial}}-\theta)=\mathcal{O}_P(1/\sqrt{n})$ as shown above and $\hat\Gamma-\Gamma_{\theta, P}$ converges in probability to zero uniformly by an LLN for triangular arrays (\Cref{uniform_lln_lemma}).
For the last part of the assumption, we have, by the mean value theorem,
\begin{equation*}
h(\hat\theta_{\text{initial}})-h(\theta)=
H_{\theta^*(\hat\theta_{\text{initial}})}(\hat\theta_{\text{initial}}-\theta)
=H_{\theta}(\hat\theta_{\text{initial}}-\theta)
+\left(H_{\theta^*(\hat\theta_{\text{initial}})}-H_{\theta}\right)(\hat\theta_{\text{initial}}-\theta)
\end{equation*}
where $\theta^*(\hat\theta_{\text{initial}})-\theta$ converges uniformly in probability to zero.  Since $\theta\mapsto H_\theta$ is uniformly continuous on $\theta$ (since it is continuous by assumption and $\Theta$ is compact), it follows that $H_{\theta^*(\hat\theta_{\text{initial}})}-H_{\theta}$ converges uniformly in probability to zero, which, along with the verification of the first part of the assumption above, gives the required result.

For \Cref*{GammaHSigma_assump}, the first two parts of the assumption
(concerning uniform consistency of $\hat\Gamma$ and $\hat H$) follow from
arguments above. For the last part (uniform consistency of $\hat\Sigma$), note
that
\begin{equation*}
  \hat\Sigma=\frac{1}{n}\sum_{i=1}^{n}z_{i}z_{i}'(y_{i}-x_i'\hat\theta_{\text{initial}})^2
  =\frac{1}{n}\sum_{i=1}^{n}z_{i}z_{i}'(y_{i}-x_i'\theta)^2
  +\frac{1}{n}\sum_{i=1}^{n}z_{i}z_{i}'\big[(y_{i}-x_i'\hat\theta_{\text{initial}})^2-(y_{i}-x_i'\theta)^2\big].
\end{equation*}
The first term converges uniformly in probability to $\Sigma_{\theta, P}$ by an LLN for triangular arrays (\Cref{uniform_lln_lemma}).  The last term is equal to
\begin{equation*}
  \frac{1}{n}\sum_{i=1}^{n}z_{i}z_{i}'(x_i'\hat\theta_{\text{initial}}+x_i'\theta-2y_{i})x_i'(\hat\theta_{\text{initial}}-\theta).
\end{equation*}
This converges in probability to zero by an LLN for triangular arrays (\Cref{uniform_lln_lemma}) and the moment bound in \Cref{iv_example_assump}(\ref{iv_moment_bounds})

Finally, \Cref*{scriptB_assump} follows by
\Cref{iv_example_assump}(\ref{iv_deriv}), and the condition that the derivative
is nonzero for all $\theta$.

\subsubsection{Conditions for \texorpdfstring{\Cref*{oneside_efficiency_bound_thm,twoside_efficiency_bound_thm}}{Theorems~\ref{oneside_efficiency_bound_thm}
  and~\ref{twoside_efficiency_bound_thm}}}

We now verify the conditions of the lower bounds,
\Cref*{oneside_efficiency_bound_thm,twoside_efficiency_bound_thm}. Given
$P_0\in\mathcal{P}$ with $E_{P_0}g(w_i, \theta^*)=0$, we need to show that a
submodel $P_t$ satisfying \Cref*{submodel_assump} exists with
$P_t\in\mathcal{P}$ for $\|t\|$ small enough. To verify this condition, we take
$\mathcal{P}$ to be the set of all distributions satisfying
\Cref{iv_example_assump}, and we assume that $\theta^*$ is in the interior of
$\Theta$.

Let $P_t$ be the subfamily given in \Cref{gmm_submodel_lemma}. This
satisfies \Cref*{submodel_assump} by
\Cref{gmm_submodel_lemma} (the moment conditions needed for this lemma hold by
\Cref{iv_example_assump}(\ref{iv_moment_bounds})), so we just need to check that
$P_t\in\mathcal{P}$ for $t$ small enough. For this, it suffices to show that
$E_{P_t}|x_{i, j}|^{4+\varepsilon}$, $E_{P_t}|z_{i, j}|^{4+\varepsilon}$,
$E_{P_t}|y_{i}|^{4+\varepsilon}$, $E_{P_t}z_{i}x_i'$ and
$\text{var}_{P_t}(z_{i}(y_{i}-x_i'\theta))$ are continuous in $t$ at $t=0$,
which holds by the Dominated Convergence Theorem since the likelihood ratio
$\pi_t(w_i)$ for this family is bounded and continuous with respect to $t$.

\subsubsection{Conditions for \texorpdfstring{\Cref*{global_misspec_sec_append}}{Appendix~\ref{global_misspec_sec_append}}}\label{global_misspec_linear_iv_sec_append}

In \Cref*{global_misspec_sec_append}, we proposed a CI that is asymptotically
valid under global misspecification and asymptotically equivalent to the CIs
considered in the rest of the paper under local misspecification. Specializing
to the present setting with misspecified IV, the CI is the union over $\cglob$
of CIs that use the GMM estimator $\hat\theta_{W, \cglob}$ based on the moment
function $\theta\mapsto z_i(y_i-x_i'\theta)-\cglob$. This estimator is given by
$\theta_{W, \cglob}=-\left(\hat \Gamma'W\hat\Gamma \right)^{-1}\hat\Gamma'\hat W
\left(\frac{1}{n}\sum_{i=1}^{n}z_{i}y_{i} - \cglob \right)$ where
$\hat\Gamma=-\frac{1}{n}\sum_{i=1}^{n}z_{i}x_{i}'$ as defined above. We estimate
$k_{\theta, P}'=-H_\theta(\Gamma_{\theta, P}'W_{P} \Gamma_{\theta,
  P})^{-1}\Gamma_{\theta, P}'W_P$ using
$\hat k_{\theta}'=-H_\theta(\hat\Gamma'W\hat\Gamma)^{-1}\Gamma'W_P$. We estimate
$\Sigma_{\theta, P}=var_P(z_i(y_i-x_i'\theta))$ using
$\hat\Sigma_\theta=\frac{1}{n}\sum_{i=1}^{n} z_{i}z_i'(y_i-x_i'\hat\theta_{W,
  \cglob})^2 = \frac{1}{n}\sum_{i=1}^n z_{i}z_{i}'[y_i-x_i'(\hat{\Gamma}' W
\hat{\Gamma})^{-1}\hat\Gamma'\hat W (\frac{1}{n}\sum_{i=1}^{n}z_{i}y_{i} -
\cglob) ]^2$. In addition to \Cref{iv_example_assump}, we assume that the
weighting matrix is given by a (possibly data dependent) sequence $W_n$ such
that $W_{n}-W_{P}=o_P(1)$ uniformly over $(\theta, P)\in\mathcal{S}_n$, where
$W_{P}$ is some family of limiting weighting matrices with $u'W_{P} u /\|u\|^2$
bounded away from zero and infinity uniformly over $P\in\mathcal{P}$. The
population influence function weights are then given by
$k'_{\theta, P}=H_\theta \left(\Gamma_{\theta, P}'W_{\theta, P}\Gamma_{\theta,
    P} \right)^{-1}\Gamma_{\theta, P}'$.

To verify the asymptotic equivalence result
(\Cref*{glob_misspec_ci_local_assump}), we need to verify
\Cref*{glob_misspec_ci_local_thm}. To this end, first note that
$\hat\Gamma-\Gamma_{\theta, P}=o_P(1)$ uniformly over
$(\theta, P)\in\mathcal{S}_n$ by a law of large numbers
(\Cref{uniform_lln_lemma}). Thus, by the bounds on $E_{P} z_{i}x_{i}'$ in
\Cref{iv_example_assump},
$\sup_{c\in\mathcal{C}}\left| \hat\theta_{W, c/\sqrt{n}}-\hat\theta_{W,0}
\right|= \sup_{c\in\mathcal{C}}\left| \left(\hat \Gamma'W\hat\Gamma
  \right)^{-1}\hat\Gamma' W c/\sqrt{n} \right|=\mathcal{O}_P(1/\sqrt{n})$
uniformly over $(\theta, P)\in\mathcal{S}_n$. Note that
$\hat\theta_{W,0}=\hat\theta_{\text{initial}}$ where
$\hat\theta_{\text{initial}}$ is defined in
\Cref{iv_example_achieving_bound_sec} above, so it follows from arguments in
that section that $\hat\theta_{W,0}-\theta=\mathcal{O}_P(1/\sqrt{n})$ uniformly
over $(\theta, P)\in\mathcal{S}_n$. Thus,
$\sup_{c\in\mathcal{C}}\left| \hat\theta_{W, c/\sqrt{n}}-\theta \right|=
\mathcal{O}_P(1/\sqrt{n})$ uniformly over $(\theta, P)\in\mathcal{S}_n$.
Similarly, we have
$\sup_{c\in\mathcal{C}}|\hat\Sigma_{\theta_{W,
    c/\sqrt{n}}}-\hat\Sigma_{\theta_{W, c/\sqrt{n}}}|=o_P(1)$ and
$\hat\Sigma_{\theta_{W,0}}$ corresponds to the estimate used in
\Cref{iv_example_achieving_bound_sec} above, so that
$\sup_{c\in\mathcal{C}}|\hat\Sigma_{\theta_{W, c/\sqrt{n}}}-\Sigma_{\theta,
  P}|=o_P(1)$ uniformly over $(\theta, P)\in\mathcal{S}_n$ by arguments in
\Cref{iv_example_achieving_bound_sec}.

The last part of \Cref*{glob_misspec_ci_local_assump} will now follow if we can
show that
$\sup_{c\in\mathcal{C}}\abs{\hat k_{\hat\theta_{W, c/\sqrt{n}}}-k_{\theta, P}
}=o_P(1)$. Since we have already shown uniform consistency of $\hat\Gamma$, this
will follow so long as
$\sup_{c\in\mathcal{C}}\left| H_{\hat\theta_{W, c/\sqrt{n}}}-H_\theta
\right|=o_P(1)$ uniformly over $(\theta, P)\in\mathcal{S}_n$. This follows by
the fact that
$\sup_{c\in\mathcal{C}}\left| \hat\theta_{W, c/\sqrt{n}}-\theta \right|=o_P(1)$
uniformly over $(\theta, P)\in\mathcal{S}_n$ along with uniform continuity of
$\theta\mapsto H_\theta$ on $\Theta$ (since $\Theta$ is compact, continuity
implies uniform continuity).

Finally, for the first display of \Cref*{glob_misspec_ci_local_assump}, note
that, for some $\theta^*(c)$ on the line segment between $\theta$ and
$\hat\theta_{W, c}$,
\begin{align*}
  &h(\hat\theta_{W, c/\sqrt{n}})-h(\theta)-k_{\theta, P}'[\hat g(\theta)-c/\sqrt{n}] \\
  &= H_{\theta^*(c)}(\hat\theta_{W,0}-\theta) - k_{\theta, P}'\hat g(\theta)
    +H_{\theta^*(c)}(\hat\Gamma'W\hat\Gamma)^{-1}\hat\Gamma'W c/\sqrt{n}
    + k'_{\theta, P}c/\sqrt{n} \\
  &= H_{\theta}(\hat\theta_{W,0}-\theta) - k_{\theta, P}'\hat g(\theta)
    +H_{\theta}(\hat\Gamma'W\hat\Gamma)^{-1}\hat\Gamma'W c/\sqrt{n}
    + k'_{\theta, P}c/\sqrt{n}
    +R_{n, \theta, P}(c) \\
  &= \left[-H_{\theta}(\hat\Gamma'W\hat\Gamma)^{-1}\hat\Gamma'W- k_{\theta, P}'\right]
    \left[\frac{1}{n}\sum_{i=1}^{n}z_i(y_i-x_i'\theta)-c/\sqrt{n}\right]
    +R_{n, \theta, P}(c)
\end{align*}
where $\sup_{c\in\mathcal{C}}\sqrt{n}|R_{n, \theta, P}(c)|=o_P(1)$ uniformly
over $(\theta, P)\in\mathcal{S}_n$. The first display of
\Cref*{glob_misspec_ci_local_assump} now follows from the fact that
$\frac{1}{n}\sum_{i=1}^{n}
z_i(y_i-x_i'\theta)-c/\sqrt{n}=\mathcal{O}_{P}(1/\sqrt{n})$ (by
\Cref{uniform_clt_lemma}) and
$-H_{\theta}(\hat\Gamma'W\hat\Gamma)^{-1}\hat\Gamma'W- k_{\theta, P}'=o_P(1)$
uniformly over $(\theta, P)\in\mathcal{S}_n$.

\subsection{Auxiliary results}\label{auxiliary_sec}

This section contains auxiliary results used in \Cref*{efficiency_sec_append}.
\Cref{sec:repl-mathbbrd_th-wit} shows that optimizing length over a set of the
form $\mathcal{G}=\mathbb{R}^{d_\theta}\times\mathcal{D}$ is without loss of
generality, as claimed in \Cref*{achieving_bound_sec}.
\Cref{optimal_weights_continuity_sec} contains a result on the continuity of the
optimal weights with respect to $\delta$, $\Gamma$, $\Sigma$ and $H$.
\Cref{clt_lln_sec} states a law of large numbers and central limit theorem for
triangular arrays.

It will be convenient to state some of these results in the general setup of \citet{donoho94}, \citet{low95}, and
\citet{ArKo18optimal}. Using the notation in \citet{ArKo18optimal},
the between class modulus problem is given by
\begin{align}
  \label{eq:general_modulus}
  \omega(\delta)=\omega(\delta;\mathcal{F}, \mathcal{G}, L, K)=\sup L(g-f);\text{s.t}\;
  \|K(g-f)\|\le \delta, f\in\mathcal{F}, g\in\mathcal{G},
\end{align}
where $\mathcal{F}$ and $\mathcal{G}$ are convex sets with
$\mathcal{G}\subseteq\mathcal{F}$, $L$ is a linear functional and $K$ is a
linear operator from $\mathcal{F}$ to a Hilbert space with norm $\|\cdot\|$. In
our case, this is given by \Cref*{between_class_modulus_eq} in the main text,
which fits into this setting with $(\theta', c')'$ playing the role of $f$,
$\mathbb{R}^{d_\theta}\times \mathcal{C}$ playing the role of $\mathcal{F}$, $K$
given by the transformation $(\theta', c')'\mapsto -\Gamma \theta+c$, and with
the norm defined using the inner product $\langle x, y\rangle=x'\Sigma^{-1}y$.
The linear functional $L$ is given by $(\theta', c')'\mapsto H\theta$.

\subsubsection{Replacing \texorpdfstring{$\mathbb{R}^{d_\theta}\times\mathcal{D}$ with a general
  set $\mathcal{G}$}{Rd times D with a general set G}}\label{sec:repl-mathbbrd_th-wit}

In \Cref*{achieving_bound_sec}, we mentioned that directing power at sets that
do not restrict $\theta$ is without loss of generality when we require coverage
over a set that does not make local restrictions on $\theta$. This holds by the
following lemma (applied with
$\mathcal{U}=\mathbb{R}^{d_\theta}\times \{0\}^{d_g}$).

\begin{lemma}\label{invariance_lemma}
  Let $\mathcal{U}$ be a set with $0\in\mathcal{U}$ such that
  $\mathcal{F}=\mathcal{F}-\mathcal{U}$ (i.e. $\mathcal{F}$ is invariant to
  adding elements in $\mathcal{U}$). Then, for any solution
  $\tilde f^*, \tilde g^*$ to the modulus problem
\begin{equation*}
\sup L(g-f)\;\text{s.t.}\;
  \|K(g-f)\|\le \delta, f\in\mathcal{F}, g\in\mathcal{G}+\mathcal{U},
\end{equation*}
where $K$ is a linear operator, there is a solution $f^*, g^*$ to the modulus
problem~\eqref{eq:general_modulus} for $\mathcal{F}$ and $\mathcal{G}$ with
$g^*-f^*=\tilde g^*-\tilde f^*$. Furthermore, any solution to the modulus
problem~\eqref{eq:general_modulus} for $\mathcal{F}$ and $\mathcal{G}$ is also a
solution to the modulus problem for $\mathcal{F}$ and $\mathcal{G}+\mathcal{U}$.
\end{lemma}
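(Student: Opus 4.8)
The plan is to show that enlarging $\mathcal{G}$ to $\mathcal{G}+\mathcal{U}$ leaves the modulus problem essentially unchanged: the value is the same, and there is a correspondence between feasible points of the two problems that preserves both the constraint quantity $\|K(g-f)\|$ and the objective $L(g-f)$. Write $\omega(\delta)$ for the modulus \eqref{eq:general_modulus} over $\mathcal{F}$ and $\mathcal{G}$, and $\tilde\omega(\delta)$ for the analogue over $\mathcal{F}$ and $\mathcal{G}+\mathcal{U}$. Since $0\in\mathcal{U}$, we have $\mathcal{G}\subseteq\mathcal{G}+\mathcal{U}$, so every point feasible for the $\mathcal{G}$ problem is feasible for the $\mathcal{G}+\mathcal{U}$ problem with the same objective value, giving $\omega(\delta)\le\tilde\omega(\delta)$; and the hypothesis $\mathcal{F}=\mathcal{F}-\mathcal{U}$ gives $\mathcal{F}-u\subseteq\mathcal{F}$ for every $u\in\mathcal{U}$.

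For the reverse direction I would exhibit the key map. Given any $(f,g)$ feasible for the $\mathcal{G}+\mathcal{U}$ problem, write $g=g_0+u$ with $g_0\in\mathcal{G}$ and $u\in\mathcal{U}$, and set $f_0=f-u$. Then $f_0\in\mathcal{F}$ by the invariance just noted, and $g_0-f_0=g_0-f+u=g-f$, so $\|K(g_0-f_0)\|=\|K(g-f)\|\le\delta$ and $L(g_0-f_0)=L(g-f)$. Hence $(f_0,g_0)$ is feasible for the $\mathcal{G}$ problem and attains the same objective, which yields $\tilde\omega(\delta)\le\omega(\delta)$ and therefore $\omega(\delta)=\tilde\omega(\delta)$.

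The two assertions of the lemma then follow immediately. If $\tilde f^*,\tilde g^*$ solves the $\mathcal{G}+\mathcal{U}$ problem, decompose $\tilde g^*=g_0^*+u^*$ with $g_0^*\in\mathcal{G}$, $u^*\in\mathcal{U}$, and put $f^*=\tilde f^*-u^*$, $g^*=g_0^*$; by the previous paragraph this is feasible for the $\mathcal{G}$ problem, satisfies $g^*-f^*=\tilde g^*-\tilde f^*$, and attains $L(g^*-f^*)=L(\tilde g^*-\tilde f^*)=\tilde\omega(\delta)=\omega(\delta)$, hence solves the $\mathcal{G}$ problem. Conversely, any solution of the $\mathcal{G}$ problem is feasible for the $\mathcal{G}+\mathcal{U}$ problem and attains $\omega(\delta)=\tilde\omega(\delta)$, so it solves that one as well. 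The argument is elementary and uses neither convexity of the sets nor any property of $K,L$ beyond linearity; the only thing to watch is to apply $\mathcal{F}=\mathcal{F}-\mathcal{U}$ in the right direction — subtracting the $\mathcal{U}$-component of $g$ off of $f$ — so that the difference $g-f$ is left untouched, and with it both the constraint and the objective.
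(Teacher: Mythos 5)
Your proof is correct and follows essentially the same route as the paper: shift the $\mathcal{U}$-component of $\tilde g^*$ off of $\tilde f^*$ using $\mathcal{F}=\mathcal{F}-\mathcal{U}$, which leaves $g-f$ (hence constraint and objective) unchanged, and conclude via equality of the two modulus values (your treatment of the reverse inclusion via $0\in\mathcal{U}$ is just a slightly more explicit rendering of the paper's "larger set" remark). No gaps.
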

\begin{proof}
  Let $\tilde f$, $\tilde g+\tilde u$ be a solution to the modulus problem for
  $\mathcal{F}$ and $\mathcal{G}+ \mathcal{U}$ with $\tilde g\in\mathcal{G}$ and
  $\tilde u\in\mathcal{U}$. Then $f=\tilde f-\tilde u$, and $g=\tilde g$ is
  feasible for $\mathcal{F}$ and $\mathcal{G}$ and achieves the same value of
  the objective function. Since it achieves the maximum for the objective
  function over the larger set $\mathcal{F}\times (\mathcal{G}+\mathcal{U})$ and
  is in $\mathcal{F}\times\mathcal{G}$, it must maximize the objective function
  over $\mathcal{F}\times\mathcal{G}$. Thus, $f, g$ achieves the modulus for
  $\mathcal{F}$ and $\mathcal{G}$ and also for $\mathcal{F}$ and
  $\mathcal{G}+\mathcal{U}$. Since the modulus for $\mathcal{F}$ and
  $\mathcal{G}$ is the same as the modulus over $\mathcal{F}$ and the larger set
  $\mathcal{G}+\mathcal{U}$, it also follows that any solution to the former
  modulus problem is a solution to the latter modulus problem.
\end{proof}

\subsubsection{Continuity of optimal weights}\label{optimal_weights_continuity_sec}

We first give some lemmas under the general setup~\eqref{eq:general_modulus}.

\begin{lemma}\label{general_delta_continuity_lemma}
  For each $\delta$, let $(f^*_\delta, g^*_\delta)$ be a solution to the modulus
  problem~\eqref{eq:general_modulus}, and let
  $h^*_\delta=g^*_\delta-f^*_\delta$. Let $\delta_0,\delta_1$ be given, and
  suppose that $\omega$ is strictly increasing on an open interval containing
  $\delta_0$ and $\delta_1$, and that a solution to the modulus problem exists
  for $\delta_0$ and $\delta_1$. Then $Kh^*_{\delta_0}$ and $Kh^*_{\delta_1}$
  are defined uniquely (i.e.\ they do not depend on the particular solution
  $(f^*_\delta, g^*_\delta)$) and
\begin{equation*}
\|Kh^*_{\delta_0}-Kh^*_{\delta_1}\|^2\le 2|\delta_1^2-\delta_0^2|
\end{equation*}
\end{lemma}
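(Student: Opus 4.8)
The plan is to exploit the standard characterization of modulus solutions as "hardest one-dimensional subproblems" and the orthogonality/projection structure in the Hilbert space. First I would recall that, because $\omega$ is strictly increasing near $\delta$, the constraint $\|K(g-f)\|\le\delta$ binds at any solution, so $\|Kh^*_\delta\|=\delta$. The key lemma I would invoke (this is the classical Donoho--Low fact, see \citet{donoho94}) is that $h^*_\delta=g^*_\delta-f^*_\delta$ solves a one-dimensional problem: among all $h=g-f$ with $f\in\mathcal F$, $g\in\mathcal G$, the maximizer of $L(h)$ subject to $\|Kh\|\le\delta$ has $Kh^*_\delta$ equal to the point of a certain convex set nearest (in the Hilbert norm) to a scaled version of the "direction" defined by $L$. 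More concretely, convexity of $\mathcal F$ and $\mathcal G$ makes $\{K(g-f):f\in\mathcal F, g\in\mathcal G\}$ convex, and $Kh^*_\delta$ is the maximizer of the linear functional induced by $L$ over the intersection of that convex set with the ball of radius $\delta$; strict monotonicity of $\omega$ forces this maximizer to lie on the sphere of radius $\delta$ and to be unique (any two maximizers would have a midpoint in the ball achieving the same $L$-value but strictly smaller norm unless they coincide, contradicting that the constraint binds). That gives the uniqueness of $Kh^*_{\delta_0}$ and $Kh^*_{\delta_1}$.

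For the quantitative bound I would use the variational inequality characterizing these constrained maximizers. Writing $r_j=Kh^*_{\delta_j}$ for $j=0,1$, each $r_j$ maximizes $\langle \ell, \cdot\rangle$ (where $\ell$ is the Riesz representer of the relevant linear functional on the image space) over the convex set $\mathcal K=\{K(g-f)\}$ intersected with the closed ball $B(0,\delta_j)$. Since the optimum is on the sphere, the first-order condition reads $\langle \ell - \mu_j r_j, v - r_j\rangle \le 0$ for all $v\in\mathcal K\cap B(0,\delta_j)$, with multiplier $\mu_j\ge 0$. The crucial point is that $r_0$ and $r_1$ are "nested": because $\delta_0<\delta_1$ (say), $r_0\in\mathcal K\cap B(0,\delta_1)$, so it is admissible in $r_1$'s problem; likewise one can test $r_1$ against $r_0$'s problem after scaling it down into $B(0,\delta_0)$, or more cleanly use that $(\delta_0/\delta_1)r_1$ lies in $\mathcal K\cap B(0,\delta_0)$ by convexity (since $0\in\mathcal K$) and hence is admissible for $r_0$. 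Plugging these two admissible competitors into the two variational inequalities and adding them, the $\ell$-terms telescope against the monotonicity of $\omega$ (i.e.\ $\langle \ell, r_1\rangle = \omega(\delta_1)\ge\omega(\delta_0)=\langle\ell,r_0\rangle$), and what survives is a relation of the form $\langle r_1 - r_0,\; r_1\rangle \ \le\ \text{(something controlled by }\delta_1^2-\delta_0^2)$ together with the symmetric inequality. Expanding $\|r_1-r_0\|^2 = \|r_1\|^2 - 2\langle r_0, r_1\rangle + \|r_0\|^2 = \delta_1^2 + \delta_0^2 - 2\langle r_0,r_1\rangle$ and substituting the inner-product bound obtained from the variational inequalities yields $\|r_1-r_0\|^2 \le 2(\delta_1^2-\delta_0^2)$, i.e.\ $\|Kh^*_{\delta_0}-Kh^*_{\delta_1}\|^2\le 2|\delta_1^2-\delta_0^2|$.

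The main obstacle I anticipate is getting the constant exactly $2$ rather than something larger: a naive triangle-inequality argument only gives $\|r_1-r_0\|\le\delta_1-\delta_0+(\text{angle terms})$, which is too weak, so the proof genuinely needs the first-order/variational-inequality structure and the convexity-of-the-feasible-set fact that lets one scale $r_1$ into $r_0$'s ball. A secondary technical point is handling the case where $\omega$ may only be finite, and confirming that solutions exist for $\delta_0,\delta_1$ (assumed in the hypothesis) so that the variational inequalities are available; and one must be a little careful that the relevant constrained maximizer really does land on the sphere, which is exactly where strict monotonicity of $\omega$ is used. Once those pieces are in place the algebra is short. I would organize the write-up as: (i) constraint binds and $Kh^*$ is the nearest/extremal point of a convex set, hence unique; (ii) state the variational inequality; (iii) pick the two scaled competitors and add; (iv) expand $\|r_1-r_0\|^2$ and conclude.
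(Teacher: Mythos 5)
Your uniqueness argument is essentially sound (midpoint of two distinct solutions with $\|Kh\|=\delta_0$ has strictly smaller norm, contradicting strict monotonicity of $\omega$; the paper gets the same conclusion by applying its quantitative bound with $\delta_1=\delta_0$). The quantitative bound, however, rests on a premise that is not justified and is in general false: you assume that $Lh$ can be written as $\langle \ell, Kh\rangle$ for a Riesz representer $\ell$ on the image space, so that $r_j=Kh^*_{\delta_j}$ maximizes a \emph{linear} functional over $\mathcal{K}\cap B(0,\delta_j)$. But $L$ need not factor through $K$: two feasible differences $h,h'$ with $Kh=Kh'$ can have $Lh\ne Lh'$. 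Indeed, in the paper's own application $K(\theta',c')'=\Sigma^{-1/2}(c-\Gamma\theta)$ and $L(\theta',c')'=H\theta$, so directions $(v',(\Gamma v)')'$ lie in the kernel of $K$ yet carry $Hv\ne 0$. What $r_j$ actually maximizes over the ball is the concave value function $r\mapsto\sup\{Lh: Kh=r,\ h\in\mathcal{G}-\mathcal{F}\}$, so your variational inequalities with a fixed $\ell$ are not available; one would need a supergradient formulation, and you have not carried that out. Moreover, the decisive step is only sketched (``the $\ell$-terms telescope \ldots what survives is a relation of the form \ldots''), so the constant $2$ is asserted rather than derived --- exactly the obstacle you flagged yourself.

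The paper's proof avoids all of this machinery with a direct midpoint argument: set $\tilde h=(h_0+h_1)/2$, which by convexity of $\mathcal{F}$ and $\mathcal{G}$ is itself a feasible difference, so $\omega(\|K\tilde h\|)\ge L\tilde h=[\omega(\delta_0)+\omega(\delta_1)]/2\ge\min\{\omega(\delta_0),\omega(\delta_1)\}$, whence $\|K\tilde h\|\ge\min\{\delta_0,\delta_1\}$ by strict monotonicity. Then, since the constraint binds at both $\delta_0$ and $\delta_1$, the polarization identity gives $\langle K\tilde h, K(h_1-h_0)/2\rangle=(\|Kh_1\|^2-\|Kh_0\|^2)/4=(\delta_1^2-\delta_0^2)/4$, and expanding $\delta_1^2=\|Kh_1\|^2=\|K\tilde h\|^2+\|K(h_1-h_0)/2\|^2+(\delta_1^2-\delta_0^2)/2$ immediately yields $\|K(h_1-h_0)\|^2\le 2|\delta_1^2-\delta_0^2|$. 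If you want to salvage your route, you would have to replace the linear functional $\ell$ by the concave value function on $\mathcal{K}$ and redo the first-order analysis; but the midpoint-plus-polarization argument is both shorter and requires no differentiability or constraint-qualification considerations.
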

\begin{proof}
Let $f_0=f^*_{\delta_0}$, $f_1=f^*_{\delta_1}$ and similarly for $g_0$, $g_1$, $h_0$ and $h_1$.  Let $\tilde h=(h_0+h_1)/2$.  Note that $\tilde h=\tilde g-\tilde f$ where $\tilde g=(g_0+g_1)/2\in\mathcal{G}$ and $\tilde f=(f_0+f_1)/2\in\mathcal{F}$ by convexity.  Thus, $\omega(\|K\tilde h\|)\ge L\tilde h=[\omega(\delta_0)+\omega(\delta_1)]/2\ge \min\{\omega(\delta_0), \omega(\delta_1)\}$.  From this and the fact that $\omega$ is strictly increasing on an open interval containing $\delta_0$ and $\delta_1$, it follows that $\|K\tilde h\|\ge \min\{\delta_0,\delta_1\}$.

Note that $h_1=\tilde h+(h_1-h_0)/2$ and $\langle K\tilde h, K(h_1-h_0)/2\rangle=\|Kh_1\|^2/4-\|Kh_0\|^2/4=(\delta_1^2-\delta_0^2)/4$ (the last equality uses the fact that the constraint on $\|K(f-g)\|$ binds at any $\delta$ at which the modulus is strictly increasing).  Thus,
\begin{align*}
\delta_1^2&=\|Kh_1\|^2=\|K\tilde h\|^2+\|K(h_1-h_0)/2\|^2+(\delta_1^2-\delta_0^2)/2 \\
&\ge \min\{\delta_0^2,\delta_1^2\}+\|K(h_1-h_0)/2\|^2+(\delta_1^2-\delta_0^2)/2.
\end{align*}
Thus,
$\|K(h_1-h_0)\|^2/4\le \delta_1^2-\min\{\delta_0^2,\delta_1^2\}-(\delta_1^2-\delta_0^2)/2 =|\delta_1^2-\delta_0^2|/2$
as claimed.  The fact that $Kh^*_{\delta_0}$ is defined uniquely follows from applying the result with $\delta_1$ and $\delta_0$ both given by $\delta_0$.
\end{proof}

\begin{lemma}\label{argmax_lemma}
  For each $\delta$, let $(f^*_\delta, g^*_\delta)$ be a solution to the modulus
  problem~\eqref{eq:general_modulus}, and let
  $h^*_\delta=g^*_\delta-f^*_\delta$. Let $\delta_0$ and $\varepsilon>0$ be
  given, and suppose that $\omega$ is strictly increasing in a neighborhood of
  $\delta_0$, and that the modulus is achieved at $\delta_0$. Let
  $g\in\mathcal{G}$ and $f\in\mathcal{F}$ satisfy
  $L(g-f)> \omega(\delta_0)-\varepsilon$ with $\|K(g-f)\|\le \delta_0$, and let
  $h=g-f$. Then
\begin{equation*}
\|K(h-h^*_{\delta_0})\|^2< 4[\delta_0^2-\omega^{-1}(\omega(\delta_0)-\varepsilon)^2].
\end{equation*}
\end{lemma}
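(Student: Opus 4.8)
The plan is to run a one-sided version of the convexity argument from the proof of \Cref{general_delta_continuity_lemma}. Abbreviate $h^*=h^*_{\delta_0}$ and set $\delta_1=\omega^{-1}(\omega(\delta_0)-\varepsilon)$, which is well defined (for $\varepsilon$ small enough) since $\omega$ is strictly increasing in a neighborhood of $\delta_0$; the target inequality then reads $\|K(h-h^*)\|^2<4(\delta_0^2-\delta_1^2)$. Two preliminary facts are worth recording first: $Kh^*$ does not depend on the particular solution $(f^*_{\delta_0},g^*_{\delta_0})$, by \Cref{general_delta_continuity_lemma}, so the statement is unambiguous; and $\|Kh^*\|=\delta_0$, because strict monotonicity of $\omega$ at $\delta_0$ forces the constraint $\|K(g-f)\|\le\delta_0$ to bind at the optimum (otherwise one could scale up $g^*_{\delta_0}-f^*_{\delta_0}$ slightly and increase $L$).

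Next I would form the midpoint $\tilde h=(h+h^*)/2$. By convexity of $\mathcal F$ and $\mathcal G$ we can write $\tilde h=\tilde g-\tilde f$ with $\tilde g=(g+g^*_{\delta_0})/2\in\mathcal G$ and $\tilde f=(f+f^*_{\delta_0})/2\in\mathcal F$, so $\tilde h$ is feasible for the modulus at radius $\|K\tilde h\|$, and hence, using linearity of $L$, $Lh^*=\omega(\delta_0)$, and $Lh>\omega(\delta_0)-\varepsilon$,
\[
\omega(\|K\tilde h\|)\ \ge\ L\tilde h\ =\ \tfrac12 Lh+\tfrac12 Lh^*\ >\ \omega(\delta_0)-\tfrac{\varepsilon}{2}\ >\ \omega(\delta_0)-\varepsilon\ =\ \omega(\delta_1).
\]
Since $\omega$ is nondecreasing, this yields $\|K\tilde h\|>\delta_1$.

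Finally I would expand $\|Kh\|^2$ along the decomposition $h=\tilde h+(h-h^*)/2$:
\[
\|Kh\|^2=\|K\tilde h\|^2+\tfrac14\|K(h-h^*)\|^2+2\big\langle K\tilde h,\ K(h-h^*)/2\big\rangle,
\]
and evaluate the inner product as $\langle K\tilde h,\ K(h-h^*)/2\rangle=\tfrac14(\|Kh\|^2-\|Kh^*\|^2)=\tfrac14(\|Kh\|^2-\delta_0^2)$. Rearranging gives $\tfrac14\|K(h-h^*)\|^2=\tfrac12\|Kh\|^2+\tfrac12\delta_0^2-\|K\tilde h\|^2$, and then the hypothesis $\|Kh\|^2\le\delta_0^2$ together with $\|K\tilde h\|^2>\delta_1^2$ from the previous step gives $\tfrac14\|K(h-h^*)\|^2<\delta_0^2-\delta_1^2$, which is the claim. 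There is no serious obstacle: the only points needing a word of care are the well-definedness of $\delta_1=\omega^{-1}(\omega(\delta_0)-\varepsilon)$ and the binding of the norm constraint, both of which follow from strict monotonicity of $\omega$ near $\delta_0$, so the argument is essentially a one-sided analogue of \Cref{general_delta_continuity_lemma}.
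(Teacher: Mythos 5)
Your argument is correct and is essentially the paper's own proof: the same midpoint step, writing $(h+h^*)/2=\tilde g-\tilde f$ with $\tilde g\in\mathcal{G}$, $\tilde f\in\mathcal{F}$ by convexity to get $\|K(h+h^*)/2\|>\delta_1$, combined with the same Hilbert-space (parallelogram-type) identity for $\|K(h-h^*)/2\|^2$. The only cosmetic difference is that you substitute $\|Kh^*\|=\delta_0$ (the constraint does bind under strict monotonicity of $\omega$, though the clean justification is $\omega(\|Kh^*\|)\ge Lh^*=\omega(\delta_0)$ rather than ``scaling up'' $h^*$, which convexity alone does not permit), whereas the paper avoids needing this by bounding $[\|Kh\|^2+\|Kh^*\|^2]/2\le\delta_0^2$ directly.
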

\begin{proof}
Let $h^*=h^*_{\delta_0}$, $g^*=g^*_{\delta_0}$ and $f^*=f^*_{\delta_0}$.
Using the fact that $\langle K(h+h^*)/2,K(h-h^*)/2\rangle=\|Kh\|^2/4-\|Kh^*\|^2/4$, we have
\begin{equation*}
\|Kh\|^2=\|K(h+h^*)/2\|^2+\|K(h-h^*)/2\|^2+\|Kh\|^2/2-\|Kh^*\|^2/2.
\end{equation*}
Rearranging this gives
\begin{equation}\label{h_hstar_difference_eq}
  \|K(h-h^*)/2\|^2
  =[\|Kh\|^2+\|Kh^*\|^2]/2
  -\|K(h+h^*)/2\|^2.
\end{equation}
Let $\delta'=\omega^{-1}(\omega(\delta_0)-\varepsilon)$. Since
$Lh> \omega(\delta')$ and $Lh^*=\omega(\delta_0)$, it follows that
$L(h+h^*)/2> [\omega(\delta')+\omega(\delta)]/2\ge \omega(\delta')$. Since
$(h+h^*)/2=(g+g^*)/2-(f+f^*)/2$ with $(g+g^*)/2\in\mathcal{G}$ and
$(f+f^*)/2\in\mathcal{F}$, this means that $\|K(h+h^*)/2\|>\delta'$. Using this
and the fact that $[\|Kh\|^2+\|Kh^*\|^2]/2\le \delta_0^2$, it follows that
$\|K(h-h^*)/2\|^2\le \delta_0^2-{\delta'}^2$ as claimed.
\end{proof}

\begin{lemma}\label{general_h_continuity_lemma}
  Let
  $h^*_{\delta, \mathcal{F}, \mathcal{G}, L, K} = g^*_{\delta, \mathcal{F},
    \mathcal{G}, L, K}-f^*_{\delta, \mathcal{F}, \mathcal{G}, L, K}$ where
  $g^*_{\delta, \mathcal{F}, \mathcal{G}, L, K}$,
  $f^*_{\delta, \mathcal{F}, \mathcal{G}, L, K}$ is a solution to the modulus
  problem~\eqref{eq:general_modulus}. Let
  $\delta_0,L_0, K_0, \mathcal{F}_0,\mathcal{G}_0$ and
  $\{\delta_n,L_n,K_n, \mathcal{F}_n, \mathcal{G}_n\}_{n=1}^{\infty}$ be given.

  Let
  $\mathcal{H}(\delta, K, \mathcal{F}, \mathcal{G})=\{g-f: f\in\mathcal{F},
  g\in\mathcal{G}, \|K(g-f)\|\le \delta\}$ denote the feasible set of values of
  $g-f$ for the modulus problem for $\delta, K, \mathcal{F}, \mathcal{G}$.
  Suppose that, for any $\varepsilon>0$, we have, for large enough $n$,
  $\mathcal{H}(\delta_0-\varepsilon, K_0,\mathcal{F}_0,\mathcal{G}_0)\subseteq
  \mathcal{H}(\delta_n,K_n, \mathcal{F}_n, \mathcal{G}_n)\subseteq
  \mathcal{H}(\delta_0+\varepsilon, K_0,\mathcal{F}_0,\mathcal{G}_0)$. Suppose
  also that $L_n h - L_0 h\to 0$ and $\|(K_n-K_0)h\|\to 0$ uniformly over $h$ in
  $\mathcal{H}(\delta_0+\varepsilon, K_0,\mathcal{F}_0,\mathcal{G}_0)$ for
  $\varepsilon$ small enough. Suppose also that
  $\omega(\delta;\mathcal{F}_0,\mathcal{G}_0,L_0,K_0)$ is strictly increasing
  for $\delta$ in a neighborhood of $\delta_0$. Then
  $\|K_{n}h^*_{\delta_n, \mathcal{F}_n,
    \mathcal{G}_n,L_n,K_n}-K_0h^*_{\delta_0,\mathcal{F}_0,\mathcal{G}_0,L_0,K_0}\|\to
  0$ and
  $L_{n} h^*_{\delta_n, \mathcal{F}_n,
    \mathcal{G}_n,L_n,K_n}-L_0h^*_{\delta_0,\mathcal{F}_0,\mathcal{G}_0,L_0,K_0}\to
  0$.
\end{lemma}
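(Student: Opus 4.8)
The plan is to deduce the convergence of the transformed optimizer $K_n h^*_{\delta_n,\mathcal{F}_n,\mathcal{G}_n,L_n,K_n}$ from the quantitative stability estimates in \Cref{general_delta_continuity_lemma} and \Cref{argmax_lemma}, combined with the sandwiching hypothesis on the feasible sets. Write $h_n^* = h^*_{\delta_n,\mathcal{F}_n,\mathcal{G}_n,L_n,K_n}$ and $h_0^* = h^*_{\delta_0,\mathcal{F}_0,\mathcal{G}_0,L_0,K_0}$, and let $\omega_0(\cdot) = \omega(\cdot;\mathcal{F}_0,\mathcal{G}_0,L_0,K_0)$. The first step is to show that $h_n^*$, pushed into the limiting problem, is nearly optimal there. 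Fix a small $\varepsilon > 0$. The containment $\mathcal{H}(\delta_0-\varepsilon,K_0,\mathcal{F}_0,\mathcal{G}_0)\subseteq\mathcal{H}(\delta_n,K_n,\mathcal{F}_n,\mathcal{G}_n)$ shows that any $h$ feasible for the limiting problem at level $\delta_0-\varepsilon$ is feasible for the $n$th problem; since the uniform convergence $L_n h \to L_0 h$ over $\mathcal{H}(\delta_0+\varepsilon,\dots)$ gives $L_n h \ge L_0 h - o(1)$ there, optimality of $h_n^*$ yields $L_n h_n^* \ge \omega_0(\delta_0-\varepsilon) - o(1)$. Conversely, the containment $\mathcal{H}(\delta_n,\dots)\subseteq\mathcal{H}(\delta_0+\varepsilon,K_0,\mathcal{F}_0,\mathcal{G}_0)$ places $h_n^*$ in the limiting feasible set at level $\delta_0+\varepsilon$, so $L_0 h_n^* \le \omega_0(\delta_0+\varepsilon)$; combined with $\|(K_n-K_0)h_n^*\|\to 0$ (uniformly, hence along the sequence) and $L_n h_n^* - L_0 h_n^* \to 0$, we get $\|K_0 h_n^*\|\le \delta_0 + \varepsilon + o(1)$ and $L_0 h_n^* \ge \omega_0(\delta_0-\varepsilon)-o(1)$.

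The second step applies \Cref{argmax_lemma} in the limiting problem at a level slightly above $\delta_0$. Because $\omega_0$ is strictly increasing near $\delta_0$, choose $\varepsilon$ small enough that strict monotonicity holds on an interval containing $[\delta_0-\varepsilon,\delta_0+\varepsilon]$ and the modulus is achieved there. The vector $h_n^*$ (for large $n$) is feasible for the limiting problem at level $\delta_0+\varepsilon$ with $L_0 h_n^* > \omega_0(\delta_0-\varepsilon) - o(1) = \omega_0(\delta_0+\varepsilon) - [\omega_0(\delta_0+\varepsilon)-\omega_0(\delta_0-\varepsilon)] - o(1)$. Applying \Cref{argmax_lemma} with $\delta_0+\varepsilon$ in place of $\delta_0$ and with the gap $\eta_\varepsilon = \omega_0(\delta_0+\varepsilon)-\omega_0(\delta_0-\varepsilon)+o(1)$ in place of $\varepsilon$ there gives
\begin{equation*}
\|K_0 h_n^* - K_0 h^*_{\delta_0+\varepsilon,\mathcal{F}_0,\mathcal{G}_0,L_0,K_0}\|^2 < 4\bigl[(\delta_0+\varepsilon)^2 - \omega_0^{-1}(\omega_0(\delta_0+\varepsilon)-\eta_\varepsilon)^2\bigr] = 4\bigl[(\delta_0+\varepsilon)^2 - (\delta_0-\varepsilon)^2\bigr] + o(1) = 16\delta_0\varepsilon + o(1).
\end{equation*}
By \Cref{general_delta_continuity_lemma}, $\|K_0 h^*_{\delta_0+\varepsilon,\dots} - K_0 h_0^*\|^2 \le 2[(\delta_0+\varepsilon)^2-\delta_0^2] \le 6\delta_0\varepsilon$ for small $\varepsilon$. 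Combining via the triangle inequality, $\limsup_n \|K_0 h_n^* - K_0 h_0^*\| \le C\sqrt{\delta_0\varepsilon}$ for an absolute constant $C$; letting $\varepsilon\downarrow 0$ gives $\|K_0 h_n^* - K_0 h_0^*\|\to 0$. Then $\|K_n h_n^* - K_0 h_0^*\| \le \|(K_n-K_0)h_n^*\| + \|K_0 h_n^* - K_0 h_0^*\| \to 0$ by the uniform operator convergence. For the functional, $|L_n h_n^* - L_0 h_0^*| \le |L_n h_n^* - L_0 h_n^*| + |L_0 h_n^* - L_0 h_0^*|$; the first term is $o(1)$ by uniform convergence of $L_n$, and the second is $o(1)$ because, sandwiched between $\omega_0(\delta_0-\varepsilon)-o(1)$ and $\omega_0(\delta_0+\varepsilon)$ with $L_0 h_0^* = \omega_0(\delta_0)$ and $\omega_0$ continuous at $\delta_0$, $L_0 h_n^* \to \omega_0(\delta_0) = L_0 h_0^*$ as $\varepsilon\downarrow 0$ after taking $n\to\infty$.

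The main obstacle is bookkeeping the interplay of the two parameters, $\varepsilon$ (the fixed slack in the feasible-set sandwich) and $n\to\infty$: one must take $n\to\infty$ first to dispatch the $o(1)$ terms, obtaining a bound that still depends on $\varepsilon$, and only then send $\varepsilon\downarrow 0$. A secondary subtlety is that \Cref{argmax_lemma} as stated requires a feasible point $h$ with $\|K_0(g-f)\| \le \delta_0$ exactly; here $h_n^*$ only satisfies $\|K_0 h_n^*\| \le \delta_0+\varepsilon$, which is why the lemma must be invoked at the inflated level $\delta_0+\varepsilon$ rather than at $\delta_0$ — and this forces the use of \Cref{general_delta_continuity_lemma} to bridge back from $\delta_0+\varepsilon$ to $\delta_0$. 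Both $\omega_0^{-1}$ and $\omega_0$ are continuous near $\delta_0$ by the strict monotonicity assumption, so every $\varepsilon$-dependent quantity above vanishes as $\varepsilon\downarrow 0$, closing the argument.
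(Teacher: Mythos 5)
Your proposal is correct and follows essentially the same route as the paper's proof: bound $L_0 h_n^*$ below via the $\delta_0-\varepsilon$ sandwich and uniform convergence of $L_n$, apply \Cref{argmax_lemma} at the inflated level $\delta_0+\varepsilon$, bridge back to $\delta_0$ with \Cref{general_delta_continuity_lemma}, and finish with the uniform convergence of $K_n$ and the modulus sandwich for the $L$ part, taking $n\to\infty$ before $\varepsilon\downarrow 0$. The only cosmetic difference is that you make the constants ($16\delta_0\varepsilon$, etc.) and the $L$-convergence step explicit, and that continuity of $\omega_0$ and $\omega_0^{-1}$ really rests on concavity of the modulus (from convexity of $\mathcal{F}_0,\mathcal{G}_0$) together with strict monotonicity, which the paper likewise invokes without elaboration.
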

\begin{proof}
For any $\varepsilon>0$,
$g^*_{\delta_0-\varepsilon, \mathcal{F}_0,\mathcal{G}_0, L_0, K_0}$,
$f^*_{\delta_0-\varepsilon, \mathcal{F}_0,\mathcal{G}_0, L_0, K_0}$
is feasible for the modulus problem under
$\delta_n, \mathcal{F}_n, \mathcal{G}_n,L_n, K_n$ for large enough $n$.
Thus, for large enough $n$,
\begin{equation*}
\omega(\delta_0-\varepsilon, \mathcal{F}_0,\mathcal{G}_0,L_0,K_0)
=L h^*_{\delta_0-\varepsilon, \mathcal{F}_0,\mathcal{G}_0,L_0,K_0}
\le L_n h^*_{\delta_n, \mathcal{F}_n, \mathcal{G}_n,L_n,K_n}.
\end{equation*}
Taking limits and using the fact that $(L_n-L) h^*_{\delta_n, \mathcal{F}_n, \mathcal{G}_n,L_n,K_n}\to 0$, it follows that,
\begin{equation*}
\omega(\delta_0-\varepsilon;\mathcal{F}_0,\mathcal{G}_0,L_0,K_0)
-\varepsilon
\le L h^*_{\delta_n, \mathcal{F}_n, \mathcal{G}_n,L_n,K_n}
\end{equation*}
for large enough $n$. By continuity of the modulus in $\delta$, for any $\eta>0$
the left-hand side is strictly greater than
$\omega(\delta_0+\varepsilon;\mathcal{F}_0,\mathcal{G}_0,L_0,K_0)-\eta$ for
$\varepsilon$ small enough. Since
$g^*_{\delta_n, \mathcal{F}_n, \mathcal{G}_n,L_n,K_n}$,
$f^*_{\delta_n, \mathcal{F}_n, \mathcal{G}_n,L_n,K_n}$ is feasible for
$\delta_0+\varepsilon, \mathcal{F}_0,\mathcal{G}_0,L_0,K_0$ for $n$ large
enough, it follows from \Cref{argmax_lemma} that
\begin{multline*}
\|K_{0}(h^*_{\delta_n, \mathcal{F}_n, \mathcal{G}_n,L_n,K_n}-
h^*_{\delta_0+\varepsilon, \mathcal{F}_0,\mathcal{G}_0,L_0,K_0})\| \\
< 4 [(\delta_0+\varepsilon)^2
-\omega^{-1}(\omega(\delta_0+\varepsilon;\mathcal{F}_0,\mathcal{G}_0,L_0,K_0)-\eta;\mathcal{F}_0,\mathcal{G}_0,L_0,K_0)^2].
\end{multline*}
By continuity of the modulus and inverse modulus, the right-hand side can be made arbitrarily close to zero by taking $\varepsilon$ and $\eta$ small.  Thus,
\begin{equation*}
  \lim_{\varepsilon\downarrow 0}\limsup_n
  \|K_{0}(h^*_{\delta_n, \mathcal{F}_n, \mathcal{G}_n,L_n,K_n}-
  h^*_{\delta_0+\varepsilon, \mathcal{F}_0,\mathcal{G}_0,L_0,K_0})\|=0.
\end{equation*}
It then follows from \Cref{general_delta_continuity_lemma} that
$\lim_{n\to\infty} \|K_{0}(h^*_{\delta_n, \mathcal{F}_n, \mathcal{G}_n,L_n,K_n}-
h^*_{\delta_0,\mathcal{F}_0,\mathcal{G}_0,L_0,K_0})\|=0$.
The result then follows from the assumption that $\|(K_0-K_n)h\|\to 0$ uniformly over $\mathcal{H}(\delta_0+\varepsilon, K_0,\mathcal{F}_0,\mathcal{G}_0)$.
\end{proof}

We now specialize to our setting. Let
$f^*_{\delta, H, \Gamma, \Sigma}=({s_0^*}', {c_0^*}')$ and
$g^*_{\delta, H, \Gamma, \Sigma}=({s_1^*}', {c_1^*}')$ denote solutions to the
modulus problem in \Cref*{between_class_modulus_eq} with
$\mathcal{F}=\mathbb{R}^{d_\theta}\times \mathcal{C}$ and
$\mathcal{G}=\mathbb{R}^{d_\theta}\times \mathcal{D}$. Let
$\omega(\delta;H, \Gamma, \Sigma)=\omega(\delta;\mathbb{R}^{d_\theta}\times
\mathcal{C}, \mathbb{R}^{d_\theta}\times \mathcal{D}, H, \Gamma, \Sigma)$ denote
the modulus. Let
$h^*_{\delta, H, \Gamma, \Sigma}=f^*_{\delta, H, \Gamma, \Sigma}-g^*_{\delta, H,
  \Gamma, \Sigma}$ and let
$K_{\Gamma, \Sigma}=\Sigma^{-1/2}(-\Gamma, I_{d_g\times d_g})$. Note that
$h^*_{\delta, H, \Gamma, \Sigma, \mathcal{C}}=({s^*}', {c^*}')'$ where
$({s^*}', {c^*}')'$ solves
\begin{equation}\label{difference_modulus_eq}
  \sup H s\;\text{s.t.}\;
  (c-\Gamma s)'\Sigma^{-1}(c-\Gamma s)\le \delta^2,
  c\in\mathcal{D}-\mathcal{C}, s\in \mathbb{R}^{d_\theta}.
\end{equation}
Furthermore, a solution does indeed exist so long as $\mathcal{C}$ and $\mathcal{D}$ are compact and $\Gamma$ and $\Sigma$ are full rank, since this implies that the constraint set is compact.

Let $\delta_0$, $H_0$, $\Gamma_0$ and $\Sigma_0$ be such that $\delta_0>0$,
$H_0\ne 0$ and such that $\Gamma_0$ and $\Sigma_0$ are full rank. We wish to
show that $K_{\Gamma, \Sigma}h^*_{\delta, H, \Gamma, \Sigma}$ is continuous as a
function of $\delta$, $H$, $\Gamma$ and $\Sigma$ at
$(\delta_0,H_0,\Gamma_0,\Sigma_0)$. To this end, let $\delta_n$, $H_n$,
$\Gamma_n$ and $\Sigma_n$ be arbitrary sequences converging to $\delta_0$,
$H_0$, $\Gamma_0$ and $\Sigma_0$ (with $\Sigma_n$ symmetric and positive
semi-definite for each $n$). We will apply \Cref{general_h_continuity_lemma}. To
verify the conditions of this lemma, first note that the modulus is strictly
increasing by translation invariance \citep[see Section C.2
in][]{ArKo18optimal}. The conditions on uniform convergence of $(L_n-L)h$ and
$(K_n-K)h$ follow since the constraint set for $h=g-f$ is compact. The condition
on $\mathcal{H}(\delta, K, \mathcal{F}, \mathcal{G})$ follows because
$(c-\Gamma s)'\Sigma^{-1}(c-\Gamma s)$ is continuous in $\Sigma^{-1}$ and
$\Gamma$ uniformly over $c$ and $s$ in any compact set, and there exists a
compact set that contains the constraint set for all $n$ large enough. We record
these results and some of their implications in a lemma.

\begin{lemma}\label{argmax_continuity_lemma}
  Let $\mathcal{C}$ and $\mathcal{D}$ be compact and let
  $c^*_{\delta, H, \Gamma, \Sigma}$, $s^*_{\delta, H, \Gamma, \Sigma}$ denote a
  solution to~\eqref{difference_modulus_eq}. Let $\mathcal{A}$ denote the set of
  $(\delta, H, \Gamma, \Sigma)$ such that $\delta>0$,
  $H\in\mathbb{R}^{d_\theta}\backslash \{0\}$, $\Gamma$ is a full rank
  $d_g\times d_\theta$ matrix and $\Sigma$ is a (strictly) positive definite
  $d_g\times d_g$ matrix. Then
  $\Sigma^{-1/2}(s^*_{\delta, H,\Gamma, \Sigma}-\Gamma
  c^*_{\delta, H,\Gamma, \Sigma})$ is defined uniquely for any
  $(\delta, H,\Gamma, \Sigma)\in\mathcal{A}$. Furthermore, the mappings
  $(\delta, H,\Gamma, \Sigma)\mapsto
  \Sigma^{-1/2}(s^*_{\delta, H,\Gamma, \Sigma}-\Gamma
  c^*_{\delta, H,\Gamma, \Sigma})$,
  \begin{equation*}
    k(\delta, H,\Gamma, \Sigma)'=\frac{(s^*_{\delta, H,\Gamma, \Sigma} - \Gamma
      c^*_{\delta, H,\Gamma, \Sigma})\Sigma^{-1}}{(s^*_{\delta, H,\Gamma, \Sigma} -
      \Gamma c^*_{\delta, H,\Gamma, \Sigma})\Sigma^{-1}\Gamma H/H'H}
    \quad\text{and}\quad
    \omega(\delta;H, \Gamma, \Sigma)=Hs^*_{\delta, H,\Gamma, \Sigma}
\end{equation*}
are continuous functions on $\mathcal{A}$.
\end{lemma}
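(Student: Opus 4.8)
The plan is to read off the continuity of the first displayed map and of $\omega$ almost directly from the continuity lemmas already established, and then to reduce the statement about $k$ to a single non-routine fact: that the scalar in the denominator of the formula for $k'$ never vanishes on $\mathcal{A}$.

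\emph{First map and the modulus.} I would fix $(\delta_0,H_0,\Gamma_0,\Sigma_0)\in\mathcal{A}$, take an arbitrary sequence in $\mathcal{A}$ converging to it, and apply \Cref{general_h_continuity_lemma}, whose hypotheses were verified in the paragraph preceding the lemma statement: the modulus is strictly increasing near $\delta_0$ by translation invariance; the $\mathcal{H}$-sandwich holds because $(c-\Gamma s)'\Sigma^{-1}(c-\Gamma s)$ is jointly continuous in $(\Gamma,\Sigma^{-1})$ uniformly over $(s,c)$ in a fixed compact set that eventually contains all the constraint sets; and $L_nh$ and $(K_n-K_0)h$ converge uniformly over that set, again by compactness. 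This yields $K_{\Gamma_n,\Sigma_n}h^*_{\delta_n,H_n,\Gamma_n,\Sigma_n}\to K_{\Gamma_0,\Sigma_0}h^*_{\delta_0,H_0,\Gamma_0,\Sigma_0}$ and $L_nh^*_n-L_0h^*_0\to 0$. Since $K_{\Gamma,\Sigma}h^*_{\delta,H,\Gamma,\Sigma}=\Sigma^{-1/2}(c^*_{\delta,H,\Gamma,\Sigma}-\Gamma s^*_{\delta,H,\Gamma,\Sigma})$ is exactly the first displayed map and $Lh^*_{\delta,H,\Gamma,\Sigma}=Hs^*_{\delta,H,\Gamma,\Sigma}=\omega(\delta;H,\Gamma,\Sigma)$, this gives continuity of both. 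Uniqueness of $K_{\Gamma,\Sigma}h^*$ at a fixed parameter value is the $\delta_1=\delta_0$ special case of \Cref{general_delta_continuity_lemma}, and $\omega$ is unique as the optimal value of the program.

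\emph{Reduction for $k$.} I would write $k'=N/D$ with numerator $N=(c^*-\Gamma s^*)'\Sigma^{-1}=(K_{\Gamma,\Sigma}h^*)'\Sigma^{-1/2}$ (reading the displayed $(s^*-\Gamma c^*)\Sigma^{-1}$ as its dimensionally consistent transpose) and scalar denominator $D=(c^*-\Gamma s^*)'\Sigma^{-1}\Gamma H/(H'H)$. Both $N$ and $D$ are continuous on $\mathcal{A}$ and uniquely determined, since $K_{\Gamma,\Sigma}h^*$ is (by the previous step), $\Sigma\mapsto\Sigma^{-1/2}$ is continuous on the positive-definite cone, and $\Gamma,H$ enter continuously. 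So the claim for $k$ reduces to showing $D\ne 0$ everywhere on $\mathcal{A}$; once that holds, $k'=N/D$ is well defined, continuous wherever $D\ne 0$ (hence on all of $\mathcal{A}$), and independent of the chosen solution.

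\emph{Nonvanishing of $D$ --- where the work is.} This is the step I expect to be the main obstacle. Strict monotonicity of the modulus forces the norm constraint in~\eqref{difference_modulus_eq} to bind, so $c^*-\Gamma s^*\ne 0$, and since $s$ is unconstrained, $s^*$ maximizes the linear functional $s\mapsto Hs$ over the ellipsoid $\{s:(c^*-\Gamma s)'\Sigma^{-1}(c^*-\Gamma s)\le\delta^2\}$, which is compact because $\Gamma$ has full column rank. The delicate point is that this ellipsoid must be \emph{solid}, equivalently the $\Sigma^{-1}$-distance from $c^*$ to the column space of $\Gamma$ is strictly below $\delta$. I would show this by contradiction: if that distance equalled $\delta$, the feasible set collapses to the single generalized least squares fit $\hat s(c^*)=\argmin_s(c^*-\Gamma s)'\Sigma^{-1}(c^*-\Gamma s)$, so $\Gamma'\Sigma^{-1}(c^*-\Gamma s^*)=0$ and $\omega(\delta;H,\Gamma,\Sigma)=H\hat s(c^*)$; but then, using $0\in\mathcal{D}-\mathcal{C}$ (which holds since $\mathcal{D}\subseteq\mathcal{C}$) and convexity, the feasible points $\tau c^*$ for $\tau\in[0,1)$ lie at distance $\tau\delta<\delta$ from the column space of $\Gamma$, so re-optimizing $s$ over the corresponding solid ellipsoid attains objective value $\tau\,\omega(\delta;H,\Gamma,\Sigma)+\delta\sqrt{1-\tau^2}\,\sqrt{H(\Gamma'\Sigma^{-1}\Gamma)^{-1}H'}$, which strictly exceeds $\omega(\delta;H,\Gamma,\Sigma)$ once $\tau$ is close enough to $1$ --- contradicting optimality. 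With the ellipsoid solid, $s^*$ lies on its boundary, and the first-order condition in $s$ forces $\Gamma'\Sigma^{-1}(c^*-\Gamma s^*)$ to be a nonzero multiple of $H$ (nonzero since $H\ne 0$), with a strictly positive (outward-normal) Lagrange multiplier $\lambda$; substituting this into $D$ shows $D$ equals a nonzero constant (a negative multiple of $1/\lambda$), so $D\ne 0$. Combined with the reduction, this completes the proof.
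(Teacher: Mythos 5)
Your proposal is correct and follows essentially the same route as the paper: the paper's argument is the paragraph preceding the lemma, which verifies the hypotheses of \Cref{general_h_continuity_lemma} (strict monotonicity of the modulus via translation invariance, the sandwich condition on the feasible sets, and uniform convergence of $L_n$ and $K_n$ over a compact set) exactly as you do, with uniqueness of $\Sigma^{-1/2}(c^*-\Gamma s^*)$ obtained from \Cref{general_delta_continuity_lemma} and continuity of $\omega$ and $k$ then read off as implications. Your extra step showing the denominator of $k$ is nonvanishing (solidity of the constraint slice at $c^*$ via the $\tau c^*$ perturbation, then the KKT condition in $s$ forcing $\Gamma'\Sigma^{-1}(c^*-\Gamma s^*)$ to be a nonzero multiple of $H'$) is sound and supplies a detail the paper leaves implicit.
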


\subsubsection{CLT and LLN for triangular arrays}\label{clt_lln_sec}

To verify the conditions of \Cref*{achieving_bound_sec}, a CLT and LLN for
triangular arrays (applied to the triangular arrays that arise from arbitrary
sequences $P_n\in\mathcal{P}$) are useful. We state them here for convenience.

\begin{lemma}\label{uniform_clt_lemma}
  Let $\varepsilon>0$ be given. Let $\{v_i\}_{i=1}^{n}$ be an iid sequence of
  scalar valued random variables and let $\mathcal{P}$ be a set of probability
  distributions with $E_{P}v_i^{2+\varepsilon}\le 1/\varepsilon$,
  $1/\varepsilon\le E_{P}v_i^2$ and $E_{P}v_i=0$ for all $P\in\mathcal{P}$. Then
\begin{equation*}
  \sup_{P\in\mathcal{P}}\sup_{t\in\mathbb{R}} \left|P\left(\frac{1}{\sqrt{n}}\sum_{i=1}^{n}
      v_i/\sqrt{\text{var}_P(v_i)}\le t\right)-\Phi(t)\right|\to 0.
\end{equation*}
\end{lemma}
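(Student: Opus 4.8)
The plan is to deduce the result from the Berry–Esseen theorem, applied in the form that is valid under a $2+\delta$ moment assumption with $\delta\in(0,1]$ (a classical refinement of the third-moment version). First I would reduce to the case of unit variance: fix $P\in\mathcal{P}$, write $\sigma_P^2=\text{var}_P(v_i)$, and set $\tilde v_i=v_i/\sigma_P$. Then $E_P\tilde v_i=0$ and $E_P\tilde v_i^2=1$, and, because $\sigma_P^2$ does not depend on $i$, the expression inside the probability equals $n^{-1/2}\sum_{i=1}^n\tilde v_i$.

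Next I would turn the two moment assumptions into a single uniform bound on a fractional absolute moment of $\tilde v_i$. Set $\delta=\min\{\varepsilon,1\}\in(0,1]$. By Lyapunov's inequality, $E_P\abs{v_i}^{2+\delta}\le(E_P\abs{v_i}^{2+\varepsilon})^{(2+\delta)/(2+\varepsilon)}\le(1/\varepsilon)^{(2+\delta)/(2+\varepsilon)}$, while the assumed lower bound on $E_P v_i^2$ makes $\sigma_P^{2+\delta}=(\sigma_P^2)^{(2+\delta)/2}$ bounded below by a positive constant depending only on $\varepsilon$. Dividing, $E_P\abs{\tilde v_i}^{2+\delta}=E_P\abs{v_i}^{2+\delta}/\sigma_P^{2+\delta}\le M_\varepsilon$ for some finite $M_\varepsilon$ depending only on $\varepsilon$, uniformly over $P\in\mathcal{P}$.

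Finally I would invoke the Berry–Esseen bound: there is a universal constant $C_\delta$ such that for any iid, mean-zero, unit-variance $\tilde v_1,\dots,\tilde v_n$,
\begin{equation*}
  \sup_{t\in\mathbb{R}}\abs*{P\left(n^{-1/2}\textstyle\sum_{i=1}^n\tilde v_i\le t\right)-\Phi(t)}\le \frac{C_\delta\, E_P\abs{\tilde v_i}^{2+\delta}}{n^{\delta/2}}.
\end{equation*}
Combined with the uniform moment bound from the previous step, this gives
\begin{equation*}
  \sup_{P\in\mathcal{P}}\sup_{t\in\mathbb{R}}\abs*{P\left(n^{-1/2}\textstyle\sum_{i=1}^n v_i/\sqrt{\text{var}_P(v_i)}\le t\right)-\Phi(t)}\le \frac{C_\delta M_\varepsilon}{n^{\delta/2}},
\end{equation*}
which tends to $0$ as $n\to\infty$, proving the claim.

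The only point requiring care is that when $\varepsilon<1$ the $v_i$ need not have a finite third moment, so the textbook Berry–Esseen theorem does not apply directly; this is exactly why the $2+\delta$ version is needed, together with the bookkeeping in the second step (the choice $\delta=\min\{\varepsilon,1\}$ also covers the case $\varepsilon\ge 1$, where the ordinary third-moment bound applies). If one preferred to avoid citing the fractional Berry–Esseen inequality, a self-contained alternative would be to truncate $\tilde v_i$ at a level growing slowly in $n$, verify the Lindeberg condition uniformly over $\mathcal{P}$ from the uniform $2+\delta$ moment bound, and control the approximation error via a characteristic-function expansion — but that is strictly more work than the route above.
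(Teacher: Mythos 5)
Your proof is correct, but it takes a different route from the paper. The paper's argument is purely qualitative: it cites a CLT for triangular arrays (Lemma 11.4.1 in Lehmann and Romano), applies it along arbitrary sequences $P_n\in\mathcal{P}$ to get uniformity over $\mathcal{P}$, and then upgrades pointwise convergence in $t$ to uniformity in $t$ via the P\'olya-type result that convergence to a continuous cdf is automatically uniform (Lemma 2.11 in van der Vaart). You instead prove a non-asymptotic bound: after normalizing by $\sigma_P$ and using the Lyapunov/lower-variance bookkeeping (which is done correctly, with $\delta=\min\{\varepsilon,1\}$ handling the case $\varepsilon<1$ where no third moment is available), you invoke the fractional-moment Berry--Esseen inequality to get a bound $C_\delta M_\varepsilon n^{-\delta/2}$ that is simultaneously uniform in $t$ and in $P$. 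What your approach buys is an explicit rate and the fact that uniformity in both arguments falls out in one step, with no sequence extraction or P\'olya argument needed; what the paper's approach buys is reliance only on standard textbook references, avoiding the less commonly cited $2+\delta$ ($\delta<1$) version of Berry--Esseen (which is nonetheless classical, e.g.\ Katz 1963 or Petrov's book, so your citation burden is legitimate). One cosmetic point: the lemma's hypothesis writes $E_P v_i^{2+\varepsilon}\le 1/\varepsilon$, which you have (reasonably) read as a bound on $E_P|v_i|^{2+\varepsilon}$; your argument is insensitive to whether the variance lower bound is $\varepsilon$ or $1/\varepsilon$, since either yields a uniform positive lower bound on $\sigma_P^{2+\delta}$.
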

\begin{proof}
  The result is immediate from Lemma 11.4.1 in \citet{LeRo05} applied to
  arbitrary sequences $P\in\mathcal{P}$ and the fact that convergence to a
  continuous cdf is always uniform over the point at which the cdf is evaluated
  \citep[Lemma 2.11 in][]{van_der_vaart_asymptotic_1998}.
\end{proof}
\begin{lemma}\label{uniform_lln_lemma}
  Let $\varepsilon>0$ be given. Let $\{v_i\}_{i=1}^{n}$ be an iid sequence of
  scalar valued random variables and let $\mathcal{P}$ be a set of probability
  distributions with $E_{P}|v_i|^{1+\varepsilon}\le 1/\varepsilon$ for all
  $P\in\mathcal{P}$. Then $\frac{1}{n}\sum_{i=1}^{n} v_i-E_{P}v_i=o_P(1)$
  uniformly over $P\in\mathcal{P}$.
\end{lemma}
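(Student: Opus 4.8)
The plan is to prove this by a standard truncation argument, keeping track throughout of the fact that every constant in the bounds depends on $P$ only through the moment bound $1/\varepsilon$. First I would note that the uniform $L^{1+\varepsilon}$ bound implies, via Lyapunov's (or Jensen's) inequality, a uniform bound on $E_P\abs{v_i}$, so that $E_P v_i$ is well defined for every $P\in\mathcal{P}$; and that, by the definition of uniform convergence in probability, it suffices to show that for each fixed $\eta>0$,
\begin{equation*}
  \sup_{P\in\mathcal{P}}P\left(\abs{\tfrac1n\sum_{i=1}^{n} v_i-E_Pv_i}>\eta\right)\to 0.
\end{equation*}

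Next I would fix a truncation level $M\ge 1$, to be chosen later, and split $v_i=a_i+b_i$ with $a_i=v_i\1{\abs{v_i}\le M}$ and $b_i=v_i\1{\abs{v_i}>M}$, so that the centered average splits correspondingly into a bounded part and a tail part. For the bounded part, $\abs{a_i}\le M$ gives $\operatorname{var}_P(a_i)\le M^2$, so by independence within the sample and Chebyshev's inequality, $P(\abs{n^{-1}\sum_i(a_i-E_Pa_i)}>\eta/2)\le 4M^2/(\eta^2 n)$, which vanishes as $n\to\infty$ for fixed $M$, uniformly over $P\in\mathcal{P}$. For the tail part, on $\{\abs{v_i}>M\}$ one has $\abs{v_i}\le M^{-\varepsilon}\abs{v_i}^{1+\varepsilon}$, hence $E_P\abs{b_i}\le M^{-\varepsilon}E_P\abs{v_i}^{1+\varepsilon}\le M^{-\varepsilon}/\varepsilon$, so $E_P\abs{n^{-1}\sum_i(b_i-E_Pb_i)}\le 2M^{-\varepsilon}/\varepsilon$, and Markov's inequality bounds $P(\abs{n^{-1}\sum_i(b_i-E_Pb_i)}>\eta/2)$ by $4/(\varepsilon\eta M^{\varepsilon})$, uniformly over $P\in\mathcal{P}$ and over $n$. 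Combining the two, given $\alpha>0$ I would first choose $M$ so that the tail bound is below $\alpha/2$, then $n$ large enough that the bounded-part bound is below $\alpha/2$, which gives the displayed claim.

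I do not expect a genuine obstacle here; the only point requiring care is the order of the quantifiers — $M$ must be chosen before $n$ — which is precisely what delivers uniformity over $\mathcal{P}$, since the two bounds $4M^2/(\eta^2 n)$ and $4/(\varepsilon\eta M^{\varepsilon})$ involve $P$ only through $\varepsilon$. An alternative route, should a more packaged argument be preferred, is to reduce to showing $n^{-1}\sum_i v_i-E_{P_n}v_i\to 0$ in probability along an arbitrary sequence $P_n\in\mathcal{P}$ (so that uniformity over $\mathcal{P}$ becomes convergence for the triangular array with $n$th row distributed according to $P_n$) and to invoke the classical weak law of large numbers for triangular arrays; its hypotheses $nP_n(\abs{v_1}>n)\to 0$ and $n^{-2}\sum_{i\le n}E_{P_n}[v_1^2\1{\abs{v_1}\le n}]\to 0$, together with $\abs{n^{-1}a_n-E_{P_n}v_1}=\abs{E_{P_n}[v_1\1{\abs{v_1}>n}]}\to 0$, all follow at once from $E_{P_n}\abs{v_1}^{1+\varepsilon}\le 1/\varepsilon$.
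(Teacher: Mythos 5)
Your truncation argument is correct: the split $v_i=v_i\1{\abs{v_i}\le M}+v_i\1{\abs{v_i}>M}$, Chebyshev on the bounded part (bound $4M^2/(\eta^2 n)$) and the Markov/$L^1$ bound $4/(\varepsilon\eta M^{\varepsilon})$ on the tail part depend on $P$ only through $\varepsilon$, and choosing $M$ before $n$ does deliver the claimed uniformity; the Lyapunov step ensuring $E_P\abs{v_i}$ is uniformly bounded is also in order. The paper, however, takes a different and much shorter route: it simply invokes the von Bahr--Esseen moment inequality (their Theorem~3), which bounds $E_P\bigl|\sum_{i=1}^n (v_i-E_Pv_i)\bigr|^{r}$ by a constant times $n\,E_P\abs{v_i-E_Pv_i}^{r}$ for $r\in[1,2]$, and thereby concludes the stronger statement that $E_P\bigl|\frac1n\sum_{i=1}^n v_i-E_Pv_i\bigr|^{1+\min\{\varepsilon,2\}}\to 0$ uniformly over $\mathcal{P}$, which implies the uniform $o_P(1)$ claim via Markov. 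The trade-off is the usual one: the paper's citation yields an explicit $L^{r}$ rate in one line at the cost of an external reference, while your argument is elementary and self-contained and makes the source of the uniformity (constants depending on $P$ only through the moment bound) completely transparent; your alternative sketch via the classical triangular-array weak law, after reducing uniformity to arbitrary sequences $P_n\in\mathcal{P}$, is likewise valid and sits between the two in terms of packaging. No gap in either of your routes.
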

\begin{proof}
  The stronger result
  $\sup_{P\in\mathcal{P}}E_{P}\left|\frac{1}{n}\sum_{i=1}^{n} v_{i} -
    E_{P}v_{i}\right|^{1+\min\{\varepsilon,2\}} \to 0$ follows from Theorem 3 in
  \citet{von_bahr_inequalities_1965}.
\end{proof}

\end{appendices}

\bibliography{../../np-testing-library}